 \ifpdf\setlength{\pdfpagewidth}{8.5in}\setlength{\pdfpageheight}{11in}\fi
\newcommand{\YM}[1]{{\bf \color{red} YM: #1}}
\newcommand{\EC}[1]{}
\newcommand{\ignore}[1]{}
\newcommand{\notinproc}[1]{#1}
\newcommand{\onlyinproc}[1]{}
\newcommand\E{\textsf{E}}
\newcommand\Et{\textsf{E}_t}  
\newcommand\Mt{\textsf{M}_t} 
\newcommand\Me{\textsf{M}_e}  
\newcommand\Ee{\textsf{E}_e}    
\newcommand\EeEt{ \Ee\Et}
\newcommand\EtEe{ \EeEt}
\newcommand\MeMt{\Me\Mt}
\newcommand\MtMe{\MeMt}
\newcommand\MeEt{\Me\Et}
\newcommand\EtMe{\Et\Me}
\newcommand\EeMt{\Ee\Mt}
\newcommand\MtEe{\Mt\Ee}
\newcommand\EEP{$\EeEt$}
\newcommand\MEP{$\MtEe$}
\newcommand\EMP{$\EeMt$}
\newcommand\WEP{$\MeEt$}
\newcommand\WMP{$\MeMt$}
\newcommand\MWP{$\EtMe$}
\newcommand\EP{$\Et$}
\newcommand\MP{$\Mt$}
\newcommand\bsigma{\boldsymbol{\sigma}}
\newcommand\bT{\text{T}}
\newcommand\MAXe{MAX$_e$\/}
\newcommand\SUMe{SUM$_e $\/}
\newcommand\Elements{V}
\newcommand\nElements{n}
\newcommand\Tests{{\cal S}}
\newcommand\nTests{m}
\newcommand\test{s}
\newcommand\nTofe{\ell}  
\newtheorem{thm}{Theorem}[section]
\newtheorem{theorem}{Theorem}[section]
\newtheorem{lemma}[thm]{Lemma}
\newtheorem{claim}[thm]{Claim}
\date{}
\begin{document}
\title{Probe Scheduling for Efficient Detection of Silent Failures}

\author{Edith Cohen\corref{cor1}\fnref{fn1,fn2}} \ead{editco@microsoft.com}
\author{Avinatan Hassidim\corref{cor1}\fnref{fn3,fn4}}
\ead{avinatan@google.com}
\author{Haim Kaplan\fnref{fn2}} \ead{haimk@cs.tau.ac.il}
\author{Yishay Mansour\fnref{fn2}} \ead{mansour@cs.tau.ac.il}
\author{Danny Raz\fnref{fn5,fn6}} \ead{danny@cs.technion.ac.il}
\author{Yoav Tzur\fnref{fn3}} \ead{yoavz@google.com}

\fntext[fn1]{Microsoft Research, CA, USA}
\fntext[fn2]{Blavatnik School of Computer Science, Tel Aviv
  University,  Israel}
\fntext[fn3]{Google, Inc. Israel R\&D Center}
\fntext[fn4]{Bar-Ilan University, Israel}
\fntext[fn5]{Technion, Israel}
\fntext[fn6]{work done while at Google, Inc. Israel R\&D Center}

\ignore{
\numberofauthors{6}
\author{
\alignauthor Edith Cohen\\
       \affaddr{Microsoft Research, SVC}\\
       \affaddr{Tel Aviv University, Israel}\\
       \email{editco@microsoft.com}
\alignauthor  Avinatan Hassidim  \\
       \affaddr{Google, Inc. Israel R\&D Center}\\
       \email{avinatan@google.com}
\alignauthor  Haim Kaplan \\
       \affaddr{School of Computer Science}\\
       \affaddr{Tel Aviv University, Israel}\\
       \email{haimk@cs.tau.ac.il}
\and
\alignauthor  Yishay Mansour \\
       \affaddr{School of Computer Science}\\
       \affaddr{Tel Aviv University, Israel}\\
       \email{mansour@cs.tau.ac.il}
\alignauthor  Danny Raz \\
       \affaddr{Google, Inc. Israel R\&D Center}\\
       \email{razdan@google.com}
\alignauthor  Yoav Tzur \\
       \affaddr{Google, Inc. Israel R\&D Center}\\
       \email{yoavtz@google.com}
}
} 

%

\begin{abstract}
Most  discovery systems for silent failures work in two phases: a continuous monitoring phase that
detects presence of failures through probe packets and a localization phase that pinpoints the faulty
element(s).   We focus on the monitoring phase, where
 the goal is to balance the probing overhead with the cost associated with
longer failure detection times.

  We formulate a general model for the underlying fundamental subset-test scheduling problem.  We unify the treatment of
schedulers and cost objectives and make several contributions:
 We propose {\em Memoryless schedules} -- a natural subclass of stochastic schedules
which is simple and suitable for distributed deployment.   We show that
the optimal memoryless schedulers can be efficiently computed by convex
  programs (for SUM objectives, which minimize average detection time) or
  linear programs (for MAX objectives, which minimize worst-case
  detection time), and surprisingly perhaps,
 are guaranteed to have expected
detection  times that are not too far off the (NP hard) stochastic
optima.  We study {\em Deterministic schedules}, which provide a guaranteed bound on the 
maximum (rather than expected) cost of
undetected faults, but like general stochastic schedules, are NP hard to optimize.
We develop novel efficient
deterministic schedulers with  provable approximation ratios.
 
Finally, we conduct an experimental study, simulating our schedulers on real networks
topologies, demonstrates a significant
performance gains of the new memoryless and deterministic schedulers 
over previous approaches.
\end{abstract}


\maketitle

\section{Introduction}

Prompt detection of failures of network elements is a critical
component of maintaining a reliable network.
Silent failures,  which are not announced by the failed elements, are
particularly challenging
and can only be discovered by active monitoring.

 Failure identification systems~\cite{KompellaYGS:Infocom2007,NguyenTTD:infocom09,ZhengCao:IEEEtComp2012,ZKVM:conext2012}
typically work in two phases: First detecting presence of
a failure and then localizing it.
The rational behind this design is that
detection is an easier
problem than localization and requires light weight mechanisms that
have little impact on network performance.  Once the presence of a
failure  is confirmed, more extensive tools which may consume more
resources  are deployed for localizing the failure.  Moreover,  in some cases, it
is possible to bypass the problem, by rerouting through a different
path, quicker than the time it takes to pinpoint or correct the troubled component.

A lightweight failure detection mechanism, which relies on the
existing infrastructure, uses probe packets or {\em probes} that are sent
from certain hosts or between origin destination (OD) pairs along the existing routing
infrastructure (see Figure~\ref{network:fig}).
The elements we monitor can be physical
links~\cite{KompellaYGS:Infocom2007}, combination of components
and paths~\cite{NguyenTTD:infocom09}, or logical components of network elements like the
forwarding rules in the switches of a software-defined network \cite{ZKVM:conext2012}.
If one of the elements on the probe path fails, the probing packet will not reach the destination, and
in this case the probe has detected a failure. 
  Therefore, each probe (test)
 type can detect if at least one element in a subset of
 elements had failed.   Moreover, 
since network paths can overlap, the subsets of elements
 associated with different tests may overlap.

The goal is to design schedules which optimize the tradeoff between the probing overhead and the 
 failure detection time or more generally, the cost (or expected cost) associated with failures.
We are interested in continuous monitoring, where failures may occur at any time during the (ongoing) 
process, and we would like to detect the failure soon after it
occurs. 
Continuous testing comes in many flavors: deployment can be
centralized or distributed across the network and may require
following a fixed sequence of probes ({\em  deterministic}
schedules) or allow for randomization ({\em stochastic}  schedules).
There are also several natural objectives which we classify into two
groups. Intuitively {\em \MAXe}\ objectives aim at minimizing the maximum expected 
detection time over all  elements $e$, whereas the {\em \SUMe}\ objectives aim at
minimizing the average (or weighted sum) of the detection time. 

We illustrate
differences between these objectives through the following simple
example.
We have $n$ elements and 2 tests, one that covers a single element and another that 
covers all other $n-1$ elements.  Now, if we want to minimize the maximal expected detection time we should
send issue the tests in an alternating way (and get an expected value of $0.5$).   Any other way of 
scheduling the tests will increase the expected maximal detection
time.  On the other hand if we want to minimize the average detection
time and we assume equal failure probabilities, then it makes sense to
invoke the second test much more often
(in fact as we show later in the paper $\sqrt n$ times) than the test that covers a single element.  The two schedules described above are deterministic since they are determined by a fixed 
sequence of probes.  One can
also use a stochastic schedule in which we send each of the tests with probability $0.5$ for the 
{\em \MAXe}\ objectives, and a stochastic schedule that sends the singleton test with probability 
$1/(\sqrt n + 1)$ and the other test with probability  $\sqrt n /(\sqrt n + 1)$. 
     
We present a common framework which unifies the treatment of
stochastic and deterministic schedulers and of different objectives.
Our unified study facilitates informed design of
schedulers that are tailored to application needs.
Whilst a stronger objective, such as obtaining deterministic rather
than expected guarantees and controlling the worst-case 
rather than the average is clearly desirable, it is important to
quantify the associated costs.


%

We first present a simple and appealing sub-class of general stochastic schedules, which we call
{\em memoryless schedules}.  Memoryless schedules perform continuous testing
by invoking tests selected independently at random according to some
fixed distribution.   The stateless nature of memoryless scheduling
translates to minimum deployment
overhead and also makes them
very suitable in distributed settings,
where each type of test is initiated by a different
controller.  Going back to the example from the previous paragraph, the stochastic schedules there
are  memoryless since the distribution of the probes is fixed for each one of them.  A general stochastic
schedule for this example may be: select each of the probes with probability $0.5$, but if the long probe 
was {\em not} selected in the last 2 rounds sent the long probe.  This schedule uses the results of the 
previous steps to calculate the new probe and thus is not memoryless.
 
We show that the optimization problem of
computing the probing frequencies under which a memoryless schedule
optimizes a \SUMe\ objective can be formulated as a convex 
program
and when optimizing \MAXe\ objectives, as a linear program.  In
both cases, 
the optimal memoryless schedule can be computed efficiently.  This is
in contrast to general stochastic schedules, over  which we show that
the optima are NP-hard to compute.
Surprisingly perhaps, we also show that the natural and
efficiently optimizable memoryless schedules have expected detection times that are
guaranteed to be within a factor of two from the respective optimal stochastic
schedule of the same objective.  Moreover, detection times are
geometrically distributed, and therefore variance in detection time
is well-understood, which is not necessarily so for general stochastic schedules.
We note that our convex program formulation can be viewed as a generalization of Kleinrock's classic ``square-root law.'' 
 Kleinrock's law~\cite{kleinrock76queueing}
applies only to the special case of {\em singletons} where there is no overlap between 
the elements covered by each of the probes whereas our extension applies
to {\em subset} tests.

\ignore{
Our convex program formulation for the optimal probing frequencies
 for \SUMe\ objectives
generalizes Kleinrock's classic ``square-root law.'' 
 Kleinrock's law 
applies to the special case of  {\em singleton tests}, 
where each test  can detect the failure of a single
element (or disjoint sets of elements), and states
that the probing frequencies that minimize the \SUMe\ objective are proportional to the
square root of the weights~\cite{kleinrock76queueing}. 
Our convex program formulation applies to {\em subset tests}.
}

Another important class of schedules are {\em deterministic schedules}. Such schedules
are needed by  applications requiring hard guarantees on detection times.
Deterministic schedulers,
however, are less suitable for distributed deployment and also come
with an additional cost: the optimum of an objective on a
deterministic schedule can exceed the expectation of the same
objective over stochastic schedules. We study the inherent gap
(which we call the D2M gap) between these optima. We show that for
deterministic scheduling,  performance of \SUMe\ or \MAXe\
objectives further depends on the exact order of the quantifiers in the exact definition of
the  particular objective in the family (average or maximum). While all
variants are NP hard,
there is significant variation between attainable
approximation ratios for the different objectives.
\ignore{
: The stricter \MAXe\
and \SUMe\ objectives, requiring good performance at any point in time,
can not be approximated better than logarithmic
factors whereas the weakest, which consider average performance
over time, have factor $2$ approximations.}

Building on this, we efficiently construct  deterministic schedules
with approximation ratios that meet the analytic bounds.
Our {\em random
tree} (R-Tree)  schedulers 
derive a deterministic schedule
from the probing frequencies of a memoryless schedule,
effectively  ``derandomizing'' the schedule while attempting to
loose as little as possible on the objective in the process.
We show that when seeded, respectively,
with a \SUMe\ or \MAXe\ optimal memoryless schedule,
we obtain deterministic schedules with approximation ratio of
$O(\log \nTofe)$ for the strongest \SUMe\ objective
and ratio
$O(\log \nTofe + \log \nElements)$ for the strongest
\MAXe\ objective, where $\nElements$ is the number of elements and
$\nTofe$ is the maximum number of tests that can detect the failure
of a particular element.
We also present the Kuhn-Tucker (KT)
 scheduler which is geared to \SUMe\ objectives and adapts gracefully to changing
priorities which can be the result of changes in the network 
traffic patterns.

Finally, we evaluate the different schedulers on realistic networks
 of two different scales:
We use both a globe-spanning backbone network and a folded-Clos network, which
models a common  data center architecture.  In both cases, the elements we are testing are the
network links. For the backbone, our tests are the
set of MPLS paths and for the Clos network we use all routing paths.
We demonstrate how our suite of schedulers offers both
strong analytic guarantees, good performance,  and provides
a unified view on attainable performance with respect to different
objectives.
By relating performance of our deterministic schedulers to the
respective memoryless optima, we can see that
on many instances, our deterministic schedules are nearly optimal.
We also demonstrate how our theoretical
analysis explains observed performance and supports educated
further tuning of schedulers.

This empirical study complements the theoretical analysis in the paper and provides 
a unified general treatment of silent failures detection phase.   
Our work, by unifying the treatment of different objectives,
understanding how they relate, and developing efficient algorithms, facilitates an informed selection of
objective and algorithm that are suitable for a particular
application.

The paper is structured as follows. In Section~\ref{Model:sec} we present our model,
general stochastic and deterministic schedules, and explain the different
objectives.  Memoryless schedules are introduced in
Section~\ref{memoryless:sec}.
Deterministic scheduling is discussed in
Section~\ref{deterministic:sec}, followed by the R-Tree scheduler
in Section~\ref{treeschedules:sec} 
and Kuhn-Tucker schedulers in Section \ref{greedy:sec}.
Experimental results are presented in Section~\ref{exper:sec},
extension of the model to probabilistic tests is discussed in
Section~\ref{probtest:sec}, and
related work is discussed in Section~\ref{sec:related}.
\onlyinproc{More details, including all the missing full proofs, can be found in the Technical Report
  \cite{blackholes:arxiv2013}.}
\ignore{
Many analysis and proof details are omitted due to space limitations
and  can be found in the
 appendix which is included in the extended version of the paper
arXiv:1302.0792.}

\begin{wrapfigure}{r}{0.2\textwidth}
\centering
\ifpdf
\includegraphics[width=0.2\textwidth]{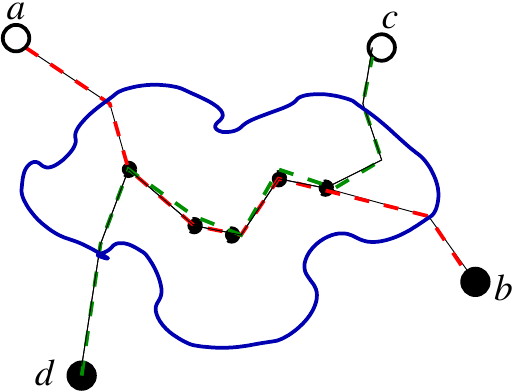}
\else
\epsfig{figure=network.eps,width=0.2\textwidth}
\fi

\caption{Network and elements covered by ab and cd origin-destination tests. \label{network:fig}}
\end{wrapfigure}

\section{Model} \label{Model:sec}

An instance of a {\em test scheduling} problem is specified by a set
$\Elements$ of elements (which can be thought of as network elements
or links) of size $\nElements$ with a  weight function
$\boldsymbol{p}$ (which can be  thought of as priority or
importance of the elements)  and a set
$\Tests$ of tests (probe paths) of size $\nTests$. For $i\in [\nTests]$, test
$i$ is specified by a subset $\test_i \subset \Elements$ of elements.
The failure of an element $e$ can be detected by probing $i$ if and only if
$e\in \test_i$, that is, if and only if test $i$ contains the failed
element. (This
can be extended to the case where failures are detected with some positive probability.)
We use $\nTofe_e$ to indicate the number of tests which
include element $e$ and $\nTofe \equiv \max_e \nTofe_e$.

Continuous testing is specified by a {\em schedule} which generates
an  infinite sequence $\sigma = \sigma_1,\sigma_2,\ldots$ of tests.
The schedule can be {\em deterministic} or {\em stochastic}, in
which case, the probability distribution of the tests at time $t$
depends on the actual tests preformed prior to time $t$.
 We also introduce {\em memoryless} schedules, which are a special
subclass of stochastic schedules, in which the probability
distribution of the tests is fixed over time. When the schedule is
stochastic we use $\bsigma$ to denote the schedule itself and
$\sigma$ to denote a particular sequence that the schedule can
generate.


\subsection{Objectives}

 Objectives for a testing schedule aim to minimize a certain function of
the number of tests invoked until a failure is detected. (We
essentially measure time passed until the failure is detected by the
``number of probes'' required to discover it. {If the probing rate
is fixed this is indeed the time}.) Several different natural
objectives had been considered in the literature.  Here we consider
all these objectives through a unified treatment which allows us to
understand how they relate to each other and how they can be
computed or approximated.

The {\em detection time} T$_{\boldsymbol{\sigma}}(e,t)$  for element $e$
at time $t$ by a schedule $\boldsymbol{\sigma}$ is the
expected time to detect a failure of element $e$ that occurs at  time
$t$.  If the schedule is deterministic, then
$\text{T}_{\boldsymbol{\sigma}}(e,t)=\min_{h\geq 0}  e\in \test_{\sigma_{h+t}}$.  If the
schedule is stochastic, we take the expectation over sequences
$$\text{T}_{\boldsymbol{\sigma}}(e,t)=\E_{\boldsymbol{\sigma}}[\min_{h\geq
  0}  e\in \test_{\sigma_{h+t}}]\ .$$
Note that the probability of any prefix is well defined for general stochastic
schedules.  Therefore $\text{T}_{\boldsymbol{\sigma}}(e,t)$, if finite,
is well defined.

  We classify natural objectives as
{\em \MAXe}, when aiming to minimize the maximum detection time over
elements, where the detection time of each element is multiplied by
its weight, or as
 {\em \SUMe} when aiming to minimize
a weighted sum  over elements of their detection times. Both types
of objectives are defined with respect to a weight function
$\boldsymbol{p}$ over elements. Objectives in each family differ by
the way they quantify over time: For example one {\em \MAXe}
objective is to minimize the maximum detection time of an edge over
all times, and a different {\em \MAXe} objective would be to
minimize the average over times of the maximum detection time of an
edge in each time. Formal definitions follow below.

The weighting, or priorities of different elements,  can capture the
relative criticality of the element
which in turn, can be set according to the
volume or quality of service level of the traffic they handle.
With the \SUMe\
objectives, the weights can also correspond to estimated  probability
that elements fail,  in which case  the weighted objective capture the
expected detection time after a failure, or to the product of failure
probability of the element and cost of failure of this element, in
which case the weighted objective is the expected cost of a failure.
  With the \MAXe\ objectives
we can use $p_e\equiv 1/\tau_e$, where
$\tau_e$ is the minimum desired detection time  for a failure of
element $e$,
or the cost of a unit of downtime of element $e$. We  then
 aim  to minimize the maximum cost of a failing element.
 In the sequel, unless otherwise mentioned,
 we assume that weights are scaled so that with \SUMe,   $\sum_e p_e
 =1$, and with \MAXe, $\max_e p_e=1$.

To streamline the definitions and treatment of the different
\MAXe\ and \SUMe\ objectives
we define the operators $\Me$ and $\Ee$, which perform
  weighted maximum or average over elements, and
$\Mt$ and $\Et$, which perform maximum or average over time.
More precisely, for a function $g$ of time or a function $f$ over
elements:
\begin{align*}
\Mt[g] &= \sup_{\tau \geq 1} g(\tau) & \quad & 
\Et[g] =\lim_{h \rightarrow \infty} \frac{\sum_{t=1}^{h}g(\tau)}{h} \\
\Me[f] &= \max_e p_e f(e) & \quad   &
\Ee[f] = \sum_e p_e  f(e)
\end{align*}
An application of the operator $\Et$ requires that the limit exists
and an application of the operator $\Mt$ requires that $g(\tau)$ is
bounded.

 When the operators are  applied to the function $\text{T}_{\boldsymbol{\sigma}}(e,t)$, we
use the shorthand $\Mt[e|\boldsymbol{\sigma}] \equiv
\Mt[\text{T}_{\boldsymbol{\sigma}}(e,t)]$,
$\Et[e| \boldsymbol{\sigma}] \equiv \Et[\text{T}_{\boldsymbol{\sigma}}(e,t)]$,
$\Me[t|\boldsymbol{\sigma}] \equiv \Me[\text{T}_{\boldsymbol{\sigma}}(e,t)]$,  $\Ee[t | \boldsymbol{\sigma}] \equiv \Ee[\text{T}_{\boldsymbol{\sigma}} (e,t)]$.
For a particular element $e$, $\Mt[e| \boldsymbol{\sigma}]$
 is  the maximum over time
$t$ of the expected (over sequences) number of probes needed to
detect a failure of $e$ that occurred in time $t$, and \EP$[e|
\boldsymbol{\sigma}]$ is the limit of the average over time $t$ of
the  expected number of probes needed to detect a failure of $e$
that occurred in time $t$.
 For a particular time $t$,
$\Me[t|\boldsymbol{\sigma}]$ is the weighted maximum over the
elements of the expected detection time of a failure at $t$, and
$\Ee[t|\boldsymbol{\sigma}]$ is
 the weighted sum  over the elements of their expected
detection times at $t$.
We consider all objectives that we can obtain from combinations of
these operators.  The operator pairs $\Me$ and $\Mt$ (maximum over time or
over elements) and $\Ee$ and $\Et$ (average of expectation) commute,
but other pairs do not, and we obtain six natural objectives,
three \MAXe\ and three \SUMe.

\smallskip
\noindent {\bf \MAXe\ objectives: } The three \MAXe\ objectives are
\begin{itemize}
\item
$\Me[\Mt[e|\boldsymbol{\sigma}]]$,
 the weighted
maximum over elements of the maximum over time of the detection
time.
\item $\Me[\Et[e|\boldsymbol{\sigma}]]$, the weighted maximum
over elements of the average over time of the detection time.
\item
$\Et[\Me[t|\boldsymbol{\sigma}]]$, the  average over time of the
maximum detection time of an  element at that time.
\end{itemize}

We shorten notation as follows.
\begin{align}
\text{\WMP}[\bsigma] &= \Me[ \Mt[e|
\bsigma]] \equiv \sup_{e,t}  p_e \text{T}_{\bsigma}(e,t) \nonumber \\
\text{\WEP}[\bsigma] &=  \Me [\Et[e|
\bsigma]] \equiv \max_e p_e \text{\EP}[e|\bsigma]\  \nonumber \\
 \label{eq:limit}
\text{\MWP}[\bsigma] &= \Et [\Me[t|\bsigma]] \equiv \lim_{h \rightarrow \infty}
\frac{1}{h}\sum_{t=1}^{h} \max_e  p_e \text{T}_{\bsigma}(e,t) \ . 
\end{align}

\smallskip
\noindent {\bf \SUMe\ objectives: } The three \SUMe\ objectives are
\begin{itemize}
\item
 $\Ee[\Mt[e|\boldsymbol{\sigma}]]$,
the weighted sum over elements $e$ of the maximum over time $t$ of
the detection time.
\item
 $\Mt[\Ee[t|\boldsymbol{\sigma}]]$,  the maximum over time of the weighted sum over
$e$  of the detection time.
\item
$\Ee[\Et[e|\boldsymbol{\sigma}]]$, the weighted sum over elements of
the average over time of the detection time.
\end{itemize}

We shorten notation as follows.
\begin{align*}
\text{\EMP}[\bsigma] &= \Ee [ \Mt[e|\bsigma]] = \sum_e p_e \text{\MP}[e|\bsigma] \\
\text{\MEP}[\bsigma] &= \Mt [ \Ee[t|\bsigma]] = \sup_t \sum_e p_e \text{T}_{\bsigma}(e,t) = \sup_t \Ee[t|\bsigma] \\
\text{\EEP}[\bsigma] &= \Ee [  \Et[e|\bsigma]] = \sum_e p_e
\text{\EP}[e | \bsigma]
\end{align*}


When the schedule ${\boldsymbol{\sigma}}$  is clear from context, we
omit the reference to it in the notation. There are clearly schedules,
deterministic or stochastic, over which our objectives are not
defined.
  The \WMP, \EMP, and \MEP\ are defined when $\Mt[e]$ is defined for all
elements $e$ and the \WEP\ and \EEP\ are defined  when $\Et[e]$ is
defined for all elements $e$.
 The \MWP\ requires that the limit in Equation (\ref{eq:limit}) exists.
Formally, we define a schedule to be {\em valid} if
for all elements $e$,
$\Mt[e]$ and $\Et[e]$ are well defined,
 and for all tests $i$
the relative frequency of probing $i$ converges, that is, the limit
$\lim_{h\rightarrow \infty} \frac{\sum_{t=1}^{h}
  \Pr[\sigma_t=i]}{h}$ exists.\footnote{Deterministic schedules that are cyclic or
    stochastic schedules with finite memory are always valid, but
    general sequences may not be.}
Henceforth we limit our attention only to valid
  schedules, which for brevity we will keep calling
schedules.

\subsection{Relating and optimizing objectives}

The following lemma specifies the basic relation between the objectives.
Its proof is straightforward.
\begin{lemma} \label{basicrel}
For any schedule $\boldsymbol{\sigma}$,
\begin{align}
\text{\SUMe :} \quad\quad & \text{\EMP}[\boldsymbol{\sigma}] \geq \text{\MEP}[\boldsymbol{\sigma}]  \geq  \text{\EEP}[\boldsymbol{\sigma}] \label{ineqW}\\
\text{\MAXe :} \quad\quad  & \text{\WMP}[\boldsymbol{\sigma}] \geq \text{\MWP}[\boldsymbol{\sigma}] \geq \text{\WEP}[\boldsymbol{\sigma}] \label{ineqPE}
\end{align}
\end{lemma}

For any objective we want to find schedules that minimize it. We
denote the infimum of the objective over  deterministic schedules by
the prefix opt$_D$, over memoryless schedules by opt$_M$, and over
stochastic schedules by opt. For example for the objective $\MeEt$,
$\text{opt$_D$-}\MeEt$ is the  infimum  $\MeEt$ over deterministic
schedules. Since memoryless and deterministic schedules are a subset
of stochastic schedules, the deterministic or the memoryless optima
are always at least the stochastic optimum: For any objective
$\text{opt}_D \geq \text{opt}$ and  $\text{opt}_M \geq \text{opt}$.

Relations
 \eqref{ineqW} and \eqref{ineqPE} clearly hold with respect to
the
 deterministic, memoryless, or stochastic
optima of each objective.
Lemma \ref{allstocequal:lemma}
shows that for stochastic schedules,
the three optima of the
objectives within each category (\SUMe\ or \MAXe) are in fact equal.

 \begin{lemma}  \label{allstocequal:lemma}
\begin{align}
\text{opt-}\EeMt  =  \text{opt-}\MtEe  =  \text{opt-}\EeEt  &  \label{soptwequal}\\
 \text{  opt-}\MeMt  =  \text{opt-}\EtMe  =
\text{opt-}\MeEt &  \ .   \label{soptpequal}
\end{align}
  \end{lemma}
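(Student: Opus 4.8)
The plan is to prove each chain by sandwiching. From Lemma~\ref{basicrel}, the pointwise inequalities $\EeMt[\bsigma]\ge\MtEe[\bsigma]\ge\EeEt[\bsigma]$ and $\MeMt[\bsigma]\ge\EtMe[\bsigma]\ge\MeEt[\bsigma]$ hold for every valid schedule, and taking the infimum over schedules preserves them, so
$$\text{opt-}\EeMt\ \ge\ \text{opt-}\MtEe\ \ge\ \text{opt-}\EeEt,\qquad \text{opt-}\MeMt\ \ge\ \text{opt-}\EtMe\ \ge\ \text{opt-}\MeEt.$$
Hence it suffices to prove the reverse endpoint inequalities $\text{opt-}\EeMt\le\text{opt-}\EeEt$ and $\text{opt-}\MeMt\le\text{opt-}\MeEt$; the middle objectives are then squeezed between equal quantities.

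The reverse inequality says that whatever time-\emph{average} cost is attainable can also be attained as a worst-case-over-\emph{time} cost, and I would prove it by turning a schedule that is good on average into one whose detection-time law does not depend on $t$. Concretely, fix $\epsilon>0$ and a valid $\bsigma$ with $\EeEt[\bsigma]\le\text{opt-}\EeEt+\epsilon$. For a large horizon $H$ I form a new schedule $\bsigma'$ that runs $\bsigma$ (restarted with fresh randomness every $H$ steps) after first shifting the whole sequence by a phase $s$ drawn uniformly from $\{0,\dots,H-1\}$; equivalently, $\bsigma'$ is the uniform shift-average (Ces\`aro stationarization) of $\bsigma$. The purpose of the random phase is that $\bsigma'$ is \emph{stationary}: the distribution of the tail $\sigma'_t,\sigma'_{t+1},\dots$ is independent of $t$.

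Stationarity is what closes the gap. For a stationary schedule the variable $\min\{h\ge0: e\in s_{\sigma'_{h+t}}\}$ has the same law for every $t$, so its expectation $\text{T}_{\bsigma'}(e,t)$ is constant in $t$; hence $\Mt[e|\bsigma']=\Et[e|\bsigma']$ for every element $e$, and therefore $\EeMt[\bsigma']=\EeEt[\bsigma']$ and $\MeMt[\bsigma']=\MeEt[\bsigma']$, i.e.\ the strong and weak objectives agree on $\bsigma'$. It then remains to check that stationarizing did not inflate the value, i.e.\ $\Et[e|\bsigma']\to\Et[e|\bsigma]$ as $H\to\infty$. Since $\bsigma$ is valid, $\Mt[e|\bsigma]<\infty$, so $e$ is covered within a bounded gap; consequently $\text{T}_{\bsigma'}(e,t)$ equals the phase-averaged quantity $\tfrac1H\sum_{\tau=1}^H\text{T}_{\bsigma}(e,\tau)$ up to the $O(\Mt[e|\bsigma])$ positions per period that see the block boundary, an error of $O(\Mt[e|\bsigma]^2/H)$, while validity forces the partial Ces\`aro average $\tfrac1H\sum_{\tau=1}^H\text{T}_{\bsigma}(e,\tau)$ to converge to $\Et[e|\bsigma]$. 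Summing against the weights (recall $\sum_e p_e=1$, and $\max_e p_e=1$ in the MAX case) and taking $H$ large gives $\EeMt[\bsigma']\le\EeEt[\bsigma]+\epsilon\le\text{opt-}\EeEt+2\epsilon$ and likewise $\MeMt[\bsigma']\le\text{opt-}\MeEt+2\epsilon$; letting $\epsilon\to0$ yields both reverse inequalities and hence both chains of equalities.

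The main obstacle lies entirely in the stationarization step: making the ``restart every $H$ steps with a random phase'' rigorous and controlling the discrepancy between the true continuation of $\bsigma$ and the freshly restarted block. The validity hypotheses are exactly what tame this --- finiteness of $\Mt[e|\bsigma]$ bounds the coverage gaps, so the boundary effect is $O(1/H)$, and existence of the limiting frequencies / time-averages guarantees the phase-averaged value converges to $\Et[e|\bsigma]$ rather than oscillating (a naive \emph{fixed}-window shift would not suffice, since the worst-case-over-time of a finite-window average need not approach the time-average). A cleaner but heavier alternative is to take a weak-$*$ subsequential limit of the finite shift-averages (Krylov--Bogolyubov) to obtain a genuinely stationary schedule directly, after which one only needs bounded convergence to pass $\text{T}(e,\cdot)$, bounded by $\Mt[e|\bsigma]$, through the limit.
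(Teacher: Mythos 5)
Your construction is the same as the paper's: repeat an $H$-step prefix of a near-optimal schedule with fresh randomness in each block, randomize the starting phase uniformly, observe that the resulting schedule has $\text{T}(e,t)$ constant in $t$ and equal to the time average, and close the loop with the easy inequalities from Lemma~\ref{basicrel}. However, the one step you compress --- controlling the wrap-around error of the restarted blocks --- rests on a false premise as written. You argue that ``finiteness of $\Mt[e|\bsigma]$ bounds the coverage gaps,'' so that only the last $O(\Mt[e|\bsigma])$ positions of each block ``see'' the boundary, giving a per-block error of $O(\Mt[e|\bsigma]^2/H)$. For a stochastic schedule this is wrong: $\Mt[e|\bsigma]=\sup_t \text{T}_{\bsigma}(e,t)$ is a supremum of \emph{expectations}, and it does not bound realized gaps between probes of $e$. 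A memoryless schedule already exhibits the difference: there $\Mt[e]=1/Q_e$ is finite, yet realized detection times are geometric and unbounded, so \emph{every} position $\tau$ in a block has positive probability that detection is not completed by the block's end, not just the last $O(\Mt[e])$ positions.

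The repair is exactly what the paper's first claim in Appendix~\ref{deferredp} does: at each position $\tau$ one writes $\text{T}(e,\tau|\bsigma')\le \text{T}(e,\tau|\bsigma)+\Pr[\text{T}^*(e,\tau|\bsigma)>H-\tau]\cdot\text{T}(e,1|\bsigma')$ and bounds the wrap probability by Markov's inequality, $\Pr[\text{T}^*(e,\tau|\bsigma)>H-\tau]\le \text{T}(e,\tau|\bsigma)/(H-\tau+1)$. Under your extra assumption $\Mt[e|\bsigma]<\infty$ (legitimate, since valid schedules have $\Mt[e]$ well defined), summing this over $\tau$ produces a harmonic sum, i.e.\ a per-block error of $O(\Mt[e|\bsigma]^2\log H)$ rather than your $O(\Mt[e|\bsigma]^2)$; this still vanishes after dividing by $H$, so your conclusion survives, but the accounting must go through Markov, not through ``bounded gaps.'' Note also that the paper's analysis is more economical than yours in an instructive way: it never uses $\Mt[e|\bsigma]<\infty$ at all, instead bounding $\sum_{\tau}\min\{1,\text{T}(e,\tau|\bsigma)/(H-\tau+1)\}$ by $O(\sqrt{H\,\Et[e|\bsigma]})$ via a small optimization argument, using only validity of the time averages and finiteness of $\text{T}(e,1|\bsigma)$.
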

\begin{proof}
The complete proof is provided in
  \onlyinproc{\cite{blackholes:arxiv2013}}\notinproc{\ignore{Appendix~}\ref{allstocequal:proof}.}
Proof sketch:
For a stochastic
schedule $\bsigma$ and a number $N$, we define a ``cyclic'' schedule
$\bsigma_N$ which repeats a prefix of $\bsigma$ of length $N$. We show that for a
sufficiently large $N$, for any item $e$, $ \Et[e| \bsigma_N] \leq
(1+\epsilon)  \Et[e| \bsigma]$.
We  randomize the start time of $\bsigma_N$ to obtain a schedule 
for which $\bT(e,t)$  is the same for all times $t$ and equals  $\Et[e|
 \bsigma_N]$.
Then \eqref{soptwequal}  follows by applying this construction to
$\text{opt-}\EeEt$ and \eqref{soptpequal} follows by applying it to
$\text{opt-}\MeEt$.
\end{proof}

We  denote by opt-\SUMe\ and opt-\MAXe\  the stochastic
optima of all three  \SUMe\ or \MAXe\ objectives:
\begin{align*}
\text{opt-\SUMe}  \equiv \text{opt-}\EeMt & = \text{opt-}\MtEe =
\text{opt-}\EeEt & \quad \mbox{and} \quad
\text{opt-\MAXe} \equiv  \text{  opt-}\MeMt & = \text{opt-}\EtMe =
\text{opt-}\MeEt   \ .
\end{align*}

We show that optimizing any of our \SUMe\ or \MAXe\ objectives is NP hard, the proof is
based on a reduction to exact cover by sets of size 3 (X3C) and is provided in \notinproc{\ignore{Appendix}
  \ref{nphardproof:sec}.}\onlyinproc{\cite{blackholes:arxiv2013}.}

\begin{lemma} \label{NPhard}
Computing the optimal schedules for opt-\SUMe and opt-\MAXe is
NP hard.
\ignore{
Computing any one of the following optima is NP hard:
opt$_D$-\EEP, opt$_D$-\MEP, opt$_D$-\EMP,  opt-\SUMe,
 opt$_D$-\WMP,  opt$_D$-\WEP,  opt$_D$-\MWP,  and opt-\MAXe.
}
\end{lemma}

\EC{ What about hardness of approximability ?  The X3C reduction is what
we used in previous work for the one-time problem.   The Feige paper
showed a stronger thing that any approximation less than $4$ is not
polynomial (for that problem).  What can we show here ?  Since we have
factor 2 approximation (memoryless) and also constant approximation
for \EEP\
(through tree schedules for deterministic) would be nice to show some
inapproximability there.  Would also be nice to show something even
stronger for \MAXe\ and worst-case time over \SUMe.}

\section{Memoryless schedules} \label{memoryless:sec}
 {\em Memoryless schedules} are particularly simple stochastic schedules
 specified by
a probability distribution   $\boldsymbol{q}$ on the tests.
 At each time, independently of history,  we draw a test $i\in[m]$
at random according to $\boldsymbol{q}$ ($i\in [m]$ is selected with probability $q_i$) and
probe $i$, where the notation $[m]=\{1,\ldots,m\}$ is the set of
integers from $1$ to $m$. It is easy to see that in memoryless schedules, detection times
are distributed geometrically.
  We show that
memoryless schedules
perform nearly as well, in terms of expected detection time, as
general stochastic schedules.  For notational convenience, we use the
distribution $\boldsymbol{q}$ to denote also
the  memoryless schedule itself.

We first show that all \SUMe\ objectives and all \MAXe\
objectives are equivalent on any memoryless schedule.
\begin{lemma}  \label{memopts:lemma}
For any memoryless schedule $\boldsymbol{q}$,
\begin{align*}
\text{\EMP}[\boldsymbol{q}]  &=\text{\MEP}[\boldsymbol{q}] = \text{\EEP}[\boldsymbol{q}] = \sum_e
\frac{p_e}{Q_e}\ \equiv \text{\SUMe}[\boldsymbol{q}] \\
\text{\WMP}[\boldsymbol{q}] &=\text{\MWP}[\boldsymbol{q}] = \text{\WEP}[\boldsymbol{q}] = \max_e
\frac{p_e}{Q_e} \equiv \text{\MAXe}[\boldsymbol{q}] \ ,
\end{align*}
where $Q_e=\sum_{i|e\in \test_i} q_i$.
\end{lemma}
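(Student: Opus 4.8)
The whole lemma reduces to a single observation: on a memoryless schedule the detection time $\bT_{\boldsymbol{q}}(e,t)$ is independent of the failure time $t$, after which the two time operators act trivially and the three objectives in each family collapse to one expression. To see the time-independence, fix an element $e$ and a failure time $t$. Since a memoryless schedule draws each probe independently from $\boldsymbol{q}$, the event that the probe at step $t+h$ covers $e$ (i.e.\ $e\in \test_{\sigma_{t+h}}$) occurs independently across $h$, each with probability $Q_e=\sum_{i\mid e\in \test_i} q_i$. Hence the number of probes until the first one covering $e$ is geometric with success probability $Q_e$, and, as the paper already notes, its expectation is $\bT_{\boldsymbol{q}}(e,t)=1/Q_e$, which crucially does not involve $t$. (Finiteness requires $Q_e>0$, which holds for every valid schedule, since otherwise $\Mt[e]$ and $\Et[e]$ would be undefined.)

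Next I would push this constant through the operators. For any function that is constant in time, $g\equiv c$, both $\Mt[g]=\sup_{\tau\ge 1} g(\tau)=c$ and $\Et[g]=\lim_{h\to\infty}\frac1h\sum_{\tau=1}^h g(\tau)=c$ return the same value. Applying this to $g(t)=\bT_{\boldsymbol{q}}(e,t)=1/Q_e$ gives $\Mt[e\mid\boldsymbol{q}]=\Et[e\mid\boldsymbol{q}]=1/Q_e$. Moreover, for each fixed $t$ the element-aggregates $\Ee[t\mid\boldsymbol{q}]=\sum_e p_e/Q_e$ and $\Me[t\mid\boldsymbol{q}]=\max_e p_e/Q_e$ are themselves constant in $t$, so re-applying $\Mt$ or $\Et$ to them leaves them unchanged.

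Finally I would substitute into the six definitions. In the \SUMe\ family this yields $\text{\EMP}[\boldsymbol{q}]=\Ee[\Mt[e]]=\sum_e p_e/Q_e$, $\text{\EEP}[\boldsymbol{q}]=\Ee[\Et[e]]=\sum_e p_e/Q_e$, and $\text{\MEP}[\boldsymbol{q}]=\Mt[\Ee[t]]=\sum_e p_e/Q_e$, all equal to the claimed $\text{\SUMe}[\boldsymbol{q}]$; the identical substitution in the \MAXe\ family gives $\text{\WMP}[\boldsymbol{q}]=\text{\WEP}[\boldsymbol{q}]=\text{\MWP}[\boldsymbol{q}]=\max_e p_e/Q_e=\text{\MAXe}[\boldsymbol{q}]$. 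I do not expect any genuine obstacle beyond the geometric-mean computation; the entire force of the lemma is that memorylessness renders $\bT_{\boldsymbol{q}}(e,t)$ time-invariant, after which every operator over time is the identity and the ordering chains of Lemma~\ref{basicrel} degenerate into equalities.
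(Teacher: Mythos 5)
Your proposal is correct and follows essentially the same route as the paper's proof: both identify the detection time on a memoryless schedule as geometric with parameter $Q_e$, observe that its expectation $1/Q_e$ is independent of $t$ so that $\Mt$ and $\Et$ coincide, and then collapse all three objectives in each family by substitution (linearity of expectation for \SUMe, the analogous maximum computation for \MAXe). Your additional remarks on constant-in-time functions and on $Q_e>0$ for valid schedules merely make explicit what the paper leaves implicit.
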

\begin{proof}
  The detection time of a failure of $e$ via a memoryless schedule
is a geometric random variable  with parameter  $Q_e$.  In particular, for each element $e$,
the distribution $T(e,t)$ are identical for all $t$ and its
expectation, $1/Q_e$,  is equal to $\Mt[e]$
and \EP$[e]$.
 From linearity of expectation, the \EEP, \MEP, and \EMP\ are all equal to
 $\sum_e p_e/Q_e$.
Similarly, \WMP, \WEP, and \MWP\ are all equal to $\max_e \frac{p_e}{Q_e}$.
\end{proof}

We use the  notation opt$_M$-\SUMe\ and opt$_M$-\MAXe\ for the
memoryless optima. That is
\begin{align*}
\text{opt$_M$-\SUMe} &= \min_{\boldsymbol{q}} \text{\SUMe}[\boldsymbol{q}] \\
\text{opt$_M$-\MAXe} &= \min_{\boldsymbol{q}}  \text{\MAXe}[\boldsymbol{q}]\ .
\end{align*}

\subsection{Memoryless Optima}


 We show that the memoryless
optima with respect to both the \SUMe\ and \MAXe\ objectives can be efficiently
computed.
This is in contrast to deterministic and stochastic optima, which are
NP hard.

\begin{figure}[h]
\centering
\fbox{
\begin{subfigure}[b]{0.33\textwidth}
\begin{align}
\text{minimize}
\sum_e \frac{p_e}{\sum_{i|e\in \test_i} q_i}  \label{basicconvex}\\
\forall i,\  q_i\geq 0 \nonumber\\
\sum_i q_i =1 \nonumber
\end{align}
\caption{Convex program for \SUMe \label{basicconvex:fig}}
\end{subfigure}
}
\fbox{
\begin{subfigure}[b]{0.33\textwidth}
\begin{align}
\text{maximize}\, \ z   \label{basicLP} \\
\forall e,\  \frac{1}{p_e} \sum_{i|e\in \test_i} q_i \geq z \nonumber \\
\forall i,\  q_i\geq 0 \nonumber\\
\sum_i q_i =1 \nonumber
\end{align}
\caption{LP for \MAXe.}  \label{basicLP:fig}
\end{subfigure}
}
\caption{Computing \SUMe\ and \MAXe\ optimal memoryless schedules. \label{optmem:fig}}
\end{figure}

\begin{theorem}
The optimal memoryless schedule for \SUMe\ objectives, that is, the distribution
$\boldsymbol{q}$ such that
$\text{\SUMe}[\boldsymbol{q}]=\text{opt$_M$-\SUMe}$
is the solution of the convex program \eqref{basicconvex} (Figure \ref{optmem:fig}).
\end{theorem}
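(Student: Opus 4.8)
The plan is to reduce the theorem to a convexity statement about the map $\boldsymbol{q}\mapsto \sum_e p_e/Q_e$, where $Q_e=\sum_{i\mid e\in\test_i}q_i$. First I would invoke Lemma~\ref{memopts:lemma}, which already identifies $\text{\SUMe}[\boldsymbol{q}]=\sum_e p_e/Q_e$ for every memoryless schedule $\boldsymbol{q}$. Since opt$_M$-\SUMe\ is by definition the infimum of this quantity over all memoryless schedules, and a memoryless schedule is precisely a probability distribution on the tests, the problem of computing opt$_M$-\SUMe\ is literally the optimization over $\{q_i\ge 0,\ \sum_i q_i=1\}$ appearing in program~\eqref{basicconvex}. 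Thus the entire content left to verify is (i) that this program is genuinely convex, hence efficiently solvable, and (ii) that its minimum is attained and equals opt$_M$-\SUMe.

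For (i), I would argue convexity term by term. Each $Q_e$ is a nonnegative affine function of $\boldsymbol{q}$, being a sum of a subset of the coordinates. The scalar function $x\mapsto 1/x$ is convex on $(0,\infty)$, and the composition of a convex function with an affine map is convex; therefore each $\boldsymbol{q}\mapsto p_e/Q_e$ is convex on the region where $Q_e>0$ (extended by $+\infty$ where $Q_e=0$). A nonnegative weighted sum of convex functions is convex, so the objective is convex, while the feasible set is the probability simplex, which is convex and compact. A convex objective over a convex feasible region is a convex program and can be solved to any desired accuracy in polynomial time by standard methods, e.g.\ interior point or the ellipsoid method, which gives the claimed efficiency.

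For (ii), I would first dispose of the degenerate case: if some element $e$ lies in no test, then $Q_e=0$ for every distribution and opt$_M$-\SUMe$\,=\infty$, so I may assume every element is covered by at least one test. Then the set of distributions with $Q_e>0$ for all $e$ is nonempty, the objective is finite and continuous there, and it diverges to $+\infty$ as any $Q_e\to 0^+$. Combined with compactness of the simplex, this forces the infimum to be attained at a point where all $Q_e>0$, and by convexity any such minimizer is a global optimum. Any optimal solution of~\eqref{basicconvex} therefore realizes $\text{\SUMe}[\boldsymbol{q}]=\text{opt}_M\text{-\SUMe}$, which is exactly the assertion of the theorem (noting that the minimizer set is convex but need not be a singleton, so ``the solution'' should be read as any optimal solution).

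I do not expect a serious obstacle here: the only points needing care are the convexity of $p_e/Q_e$ via composition with an affine map and the attainment argument at the boundary of the simplex where some $Q_e$ may vanish. Everything else follows directly from Lemma~\ref{memopts:lemma} and the definition of opt$_M$-\SUMe.
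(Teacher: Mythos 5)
Your proposal is correct and follows exactly the route the paper intends: the theorem is an immediate consequence of Lemma~\ref{memopts:lemma}, since that lemma identifies $\text{\SUMe}[\boldsymbol{q}]=\sum_e p_e/Q_e$ and the program \eqref{basicconvex} minimizes precisely this quantity over all distributions on tests; the paper leaves the remaining verification implicit. Your added details --- convexity of each $p_e/Q_e$ as the composition of $x\mapsto 1/x$ with a nonnegative affine map, and attainment of the minimum on the simplex after excluding uncovered elements --- are exactly the routine steps the paper omits, and they are argued correctly.
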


  The optimal memoryless schedules with respect to the \MAXe\
  objectives can be computed using an LP.
\begin{theorem}
The optimal memoryless schedule for \MAXe, that is, the distribution
$\boldsymbol{q}$ which satisfies
$\text{\MAXe}[\boldsymbol{q}]=\text{opt$_M$-\MAXe}$
is the solution of
 the  LP \eqref{basicLP} (Figure \ref{optmem:fig}).
\end{theorem}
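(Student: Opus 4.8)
The plan is to reduce the statement to recognizing that \eqref{basicLP} is merely the standard linearization of the min-of-max that defines the memoryless optimum. By Lemma~\ref{memopts:lemma}, for any memoryless schedule $\boldsymbol{q}$ we have $\text{\MAXe}[\boldsymbol{q}] = \max_e p_e/Q_e$ with $Q_e = \sum_{i|e\in \test_i} q_i$. Hence
\[
\text{opt$_M$-\MAXe} = \min_{\boldsymbol{q}} \ \max_e \frac{p_e}{Q_e},
\]
where the minimum ranges over all probability distributions $\boldsymbol{q}$ on the tests ($q_i \ge 0$, $\sum_i q_i = 1$). The entire task is thus to show that this min--max program shares its optimizer with \eqref{basicLP}.

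First I would note that each $Q_e$ is linear in $\boldsymbol{q}$, so $Q_e/p_e$ is linear while $p_e/Q_e$ is not. To linearize I pass to reciprocals: minimizing $\max_e p_e/Q_e$ is the same as maximizing $\min_e Q_e/p_e$. Introducing an auxiliary scalar $z$ together with the constraints $Q_e/p_e \ge z$ for every $e$ — which are exactly the constraints of \eqref{basicLP} — the value $z$ is feasible if and only if $z \le \min_e Q_e/p_e$, so maximizing $z$ returns precisely $\min_e Q_e/p_e$. All constraints and the objective are then linear in $(\boldsymbol{q},z)$, so \eqref{basicLP} is a genuine linear program.

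Next I would establish the two-way correspondence. Given any feasible $\boldsymbol{q}$, setting $z = \min_e Q_e/p_e = 1/\text{\MAXe}[\boldsymbol{q}]$ yields a feasible point of \eqref{basicLP} with objective $z$; conversely any feasible $(\boldsymbol{q},z)$ satisfies $\text{\MAXe}[\boldsymbol{q}] = \max_e p_e/Q_e \le 1/z$. Therefore the optimal LP value $z^\ast$ equals $1/\text{opt$_M$-\MAXe}$, and a distribution $\boldsymbol{q}$ attains $\text{\MAXe}[\boldsymbol{q}] = \text{opt$_M$-\MAXe}$ if and only if $(\boldsymbol{q},z^\ast)$ is optimal for the LP. Feasibility is clear, and $z$ is bounded above: under the normalization $\max_e p_e = 1$ the element $e^\ast$ with $p_{e^\ast}=1$ forces any feasible $z \le Q_{e^\ast}/p_{e^\ast} = Q_{e^\ast} \le 1$. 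Hence the LP attains its optimum, and the optimal memoryless \MAXe\ schedule is exactly the distribution returned by \eqref{basicLP}.

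The only genuinely delicate point I would handle explicitly is the degenerate case where some element $e$ lies in no test: then $Q_e \equiv 0$, detection is impossible and $\text{\MAXe}[\boldsymbol{q}] = \infty$ for all $\boldsymbol{q}$, matching $z^\ast = 0$ in the LP. To keep the reciprocal bookkeeping valid I would assume, as is natural, that every element is covered by at least one test, which guarantees a distribution with all $Q_e > 0$ and thus $z^\ast > 0$. Everything else is routine: the passage from min--max to LP is the epigraph trick applied to the reciprocal quantities, and there is no convexity obstacle because, unlike the \SUMe\ case which required a convex program, the \MAXe\ objective linearizes exactly.
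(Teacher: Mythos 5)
Your proof is correct and takes essentially the route the paper intends: the paper states this theorem without a separate proof, treating it as an immediate consequence of Lemma~\ref{memopts:lemma} (all \MAXe\ objectives of a memoryless schedule equal $\max_e p_e/Q_e$) combined with the standard epigraph reformulation that the LP \eqref{basicLP} encodes, which is exactly your reciprocal-plus-auxiliary-variable argument. Your explicit treatment of attainment and of elements covered by no test simply fills in details the paper leaves implicit.
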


\smallskip
\noindent
 {\bf Singletons instances:}  When each test is for a single element,
 the optimal solution of the convex program \eqref{basicconvex}  has
the frequencies of each element
proportional to the square root of $p_e$
 \cite{kleinrock76queueing}, that is, $q_e
=\sqrt{p_e}/\sum_e \sqrt{p_e}$.
The \SUMe\ optimum for an instance with weighting $\boldsymbol{p}$ is
\begin{equation} \label{singleEMP}
\text{opt$_M$-\SUMe}(\boldsymbol{p})=\sum_e \frac{p_e}{q_e} = \sum_e \sqrt{p_e}\sum_{i} \sqrt{p_i}= (\sum_i \sqrt{p_i})^2\ .
\end{equation}
In contrast, the solution of the LP \eqref{basicLP}  has optimal probing
frequencies $q_e$ proportional to $p_e$, that is,
$q_e = p_e/\sum_e p_e$ and the \MAXe\ optimum is
$\text{opt$_M$-\MAXe}(\boldsymbol{p})=\max_e \frac{p_e}{q_e} = \sum_e p_e$.

\subsection{Memoryless versus Stochastic}

 For both \SUMe\ and \MAXe\ objectives,  the optimum on memoryless schedules is
within a factor of 2 of the optimum over  general stochastic
schedules.
\begin{theorem}  \label{memvssto:thm}
\begin{align}
\text{opt-\SUMe} &\leq \text{opt$_M$-\SUMe}  \leq 2  \text{opt-\SUMe}  \label{georel} \\
\text{opt-\MAXe} &\leq \text{opt$_M$-\MAXe} \leq  2 \text{opt-\MAXe}
\end{align}
\end{theorem}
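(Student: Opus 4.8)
The two lower bounds are immediate: memoryless schedules form a subclass of stochastic schedules, so $\text{opt}_M\ge\text{opt}$ for every objective, as already noted in the text. The work is in the upper bounds, and the plan is to manufacture, from a near-optimal stochastic schedule, a memoryless schedule that loses at most a factor of two. By Lemma~\ref{allstocequal:lemma} it suffices to compare against the average-over-time variants: opt-\SUMe\ equals the \EEP\ optimum and opt-\MAXe\ equals the \WEP\ optimum. So I would fix $\epsilon>0$ and let $\bsigma$ be a stochastic schedule with $\EeEt[\bsigma]\le\text{opt-\SUMe}+\epsilon$ (in the \MAXe\ case, with $\MeEt[\bsigma]\le\text{opt-\MAXe}+\epsilon$).

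From $\bsigma$ I build a memoryless schedule $\boldsymbol{q}$ by reading off the long-run test frequencies, $q_i=\lim_{h\to\infty}\frac1h\sum_{t=1}^h\Pr[\sigma_t=i]$; validity guarantees these limits exist and form a distribution. Writing $Q_e=\sum_{i\mid e\in\test_i}q_i$ for the induced long-run frequency with which $e$ is covered by a probe, Lemma~\ref{memopts:lemma} gives $\text{\SUMe}[\boldsymbol q]=\sum_e p_e/Q_e$ and $\text{\MAXe}[\boldsymbol q]=\max_e p_e/Q_e$. Both upper bounds then reduce to a single per-element estimate, which I expect to be the heart of the argument:
\begin{equation*}
\Et[e\mid\bsigma]\ \ge\ \frac{1}{2Q_e}\qquad\text{for every element }e.
\end{equation*}

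To prove this estimate I would condition on a realized sequence $\sigma$ and look at the successive times at which $e$ is covered, with gaps $g_1,g_2,\dots$. Inside a gap of length $g$ the detection time runs through the values $g,g-1,\dots,1$, so its time-average over that gap is $(g+1)/2$, and over the whole sequence it equals $\frac{\sum_k g_k(g_k+1)/2}{\sum_k g_k}=\frac12\big(\frac{\sum_k g_k^2}{\sum_k g_k}+1\big)$. Cauchy--Schwarz bounds $\frac{\sum_k g_k^2}{\sum_k g_k}$ from below by the mean gap $\frac{\sum_k g_k}{\#\text{gaps}}$, which is exactly the reciprocal of the realized coverage frequency $c_e(\sigma)$; hence the within-sequence time-average is at least $\frac{1}{2c_e(\sigma)}$. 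Taking expectation over sequences and invoking convexity of $x\mapsto 1/x$ (Jensen) gives $\Et[e\mid\bsigma]\ge\E_{\bsigma}[\tfrac{1}{2c_e(\sigma)}]\ge\tfrac{1}{2\,\E_{\bsigma}[c_e(\sigma)]}=\tfrac{1}{2Q_e}$, using that the expected long-run coverage frequency $\E_{\bsigma}[c_e(\sigma)]$ equals $Q_e$.

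Finally I would assemble the pieces. For \SUMe, multiply the per-element estimate by $p_e$ and sum: $\text{opt}_M\text{-\SUMe}\le\sum_e p_e/Q_e\le 2\sum_e p_e\,\Et[e\mid\bsigma]=2\,\EeEt[\bsigma]\le 2(\text{opt-\SUMe}+\epsilon)$, and letting $\epsilon\to0$ yields $\text{opt}_M\text{-\SUMe}\le2\,\text{opt-\SUMe}$. For \MAXe\ I take the weighted maximum instead: $\text{opt}_M\text{-\MAXe}\le\max_e p_e/Q_e\le 2\max_e p_e\,\Et[e\mid\bsigma]=2\,\MeEt[\bsigma]\le 2(\text{opt-\MAXe}+\epsilon)$, again letting $\epsilon\to0$. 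The main obstacle is precisely the per-element estimate: beyond the Cauchy--Schwarz step (a discrete inspection-paradox phenomenon), one must justify interchanging the long-run time-average with the expectation over random sequences and the Jensen step over the random coverage frequency. The exact constant $2$ is pinned down by the detection-time convention under which the probe at the failure time does not detect the fault, so detection times are at least $1$; this is the same convention that makes memoryless schedules attain $\Et[e]=1/Q_e$ with equality, so the factor of two is tight at the per-element level.
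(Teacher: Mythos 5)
Your proposal is correct and takes essentially the same route as the paper: the paper's proof also reads off the long-run relative frequencies $q_i$ of a valid stochastic schedule, invokes the per-element estimate $\Mt[e]\geq\Et[e]\geq 1/(2Q_e)$, and then sums (resp.\ maximizes) the weighted bounds against the optimum of \eqref{basicconvex} (resp.\ \eqref{basicLP}). The only difference is one of detail: the paper asserts the per-element factor-2 estimate without proof, whereas you supply its justification via the gap decomposition, Cauchy--Schwarz, and Jensen.
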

\begin{proof}
The left hand side inequalities follow from memoryless schedules being
a special case of stochastic schedules.  To establish the right hand
side inequalities,
consider a stochastic schedule
and let $q_i$ be (the limit of) the relative frequency of test $i$
(Recall that we only consider valid schedules, where the limit exists).
We have
 $$\text{\MP}[e] \geq \text{\EP}[e] \geq  \frac{p_e}{2\sum_{i|e\in \test_i} q_i}$$
 Therefore, the average over elements
 $\sum_e \frac{p_e}{2\sum_{i|e\in \test_i} q_i}$
must be at least half the optimum of \eqref{basicconvex} and the
maximum over elements $\max_e \frac{p_e}{2\sum_{i|e\in \test_i} q_i}$
must be at least half the optimum of \eqref{basicLP}.
\end{proof}

The following example shows that Theorem~\ref{memvssto:thm} is tight in
that the ``2'' factors  are realizable.  That is, there
are instances where the memoryless optimum is close to being a factor of $2$
larger than the respective stochastic optimum.

\begin{lemma}  \label{SMgaps:example}
 For any $\epsilon>0$, there is an instance on which
\begin{align*}
 \text{opt}_M\text{-\MAXe} = \text{opt}_M\text{-\SUMe} \geq
 (2-\epsilon)  \text{opt-\MAXe} = \text{opt-\SUMe}
\end{align*}
\end{lemma}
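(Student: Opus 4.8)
The plan is to make the factor-$2$ gap visible on a single symmetric instance and then drive the ratio arbitrarily close to $2$ by taking the instance large. Consider $\nElements$ elements $e_1,\dots,e_{\nElements}$ together with the $\nElements$ \emph{singleton} tests $\test_i=\{e_i\}$, and give all elements equal weight; under the \SUMe\ normalization this means $p_e=1/\nElements$ and under the \MAXe\ normalization $p_e=1$. The instance is invariant under permuting the elements, so every optimum we care about is attained by a fully symmetric schedule, and this symmetry is exactly what forces the \SUMe\ and \MAXe\ optima to coincide within each of the memoryless and the stochastic classes.

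On the memoryless side I would read the optima off Lemma~\ref{memopts:lemma} and the singleton formulas. For singletons $Q_{e_i}=q_i$, so by symmetry the optimal distribution is uniform, $q_i=1/\nElements$, giving $Q_{e_i}=1/\nElements$. Then opt$_M$-\SUMe\ $=\sum_e p_e/Q_e=\nElements$ (equivalently \eqref{singleEMP} with $p_i=1/\nElements$) and opt$_M$-\MAXe\ $=\max_e p_e/Q_e=\nElements$, so opt$_M$-\MAXe\ $=$ opt$_M$-\SUMe\ $=\nElements$. This fixes the left equality in the statement.

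On the stochastic side the key schedule is the deterministic round-robin $\bsigma=\test_1,\test_2,\dots,\test_{\nElements},\test_1,\dots$, which probes each $e_i$ exactly once every $\nElements$ steps. A fault arriving at a uniform time within a coverage gap of length $\nElements$ waits on average $(\nElements+1)/2$ steps, so $\Et[e_i]=(\nElements+1)/2$ for every $i$; hence the round-robin's \EEP\ equals $(\nElements+1)/2$ under \SUMe\ weights and its \WEP\ equals $(\nElements+1)/2$ under \MAXe\ weights, giving opt-\SUMe, opt-\MAXe\ $\le(\nElements+1)/2$. For the matching lower bound I would reuse the estimate behind Theorem~\ref{memvssto:thm} in its sharp form $\Et[e]\ge \tfrac12(1/Q_e+1)$, obtained by a convexity argument on the gap lengths between consecutive coverages of $e$. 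Since each test is a singleton we have $\sum_e Q_e=1$; minimizing $\sum_e p_e\,\tfrac12(1/Q_e+1)$ is, by convexity of $1/Q_e$, attained at $Q_e\equiv 1/\nElements$ for \SUMe, while for \MAXe\ some element must have $Q_e\le 1/\nElements$, so in both cases the optimum is $\ge(\nElements+1)/2$. Thus opt-\MAXe\ $=$ opt-\SUMe\ $=(\nElements+1)/2$ and round-robin is optimal.

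Combining the two sides,
\[
\frac{\text{opt$_M$-\SUMe}}{\text{opt-\SUMe}}=\frac{\nElements}{(\nElements+1)/2}=2-\frac{2}{\nElements+1},
\]
which exceeds $2-\epsilon$ as soon as $\nElements\ge 2/\epsilon-1$, so taking $\nElements=\lceil 2/\epsilon\rceil$ yields the claim. I expect the only delicate point to be the stochastic lower bound: the memoryless class has no freedom, since geometric detection times force expectation exactly $1/Q_e$ with no phase to exploit, whereas on the stochastic side one must rule out that some randomized, history-dependent schedule beats periodic probing. This is precisely the half-reciprocal-frequency bound from the proof of Theorem~\ref{memvssto:thm}; the exact additive constant in $(\nElements+1)/2$ reflects the detection-time convention and the $O(1)$ slack in the bound is harmless, as the ratio tends to $2$ regardless of such $O(1)$ corrections.
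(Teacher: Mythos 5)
Your construction coincides with the paper's: $\nElements$ elements, $\nElements$ singleton tests, uniform priorities, memoryless optimum $\nElements$, stochastic value $(\nElements+1)/2$, ratio $2\nElements/(\nElements+1)\to 2$. The only genuine divergence is in how the stochastic side is certified, and there your argument is actually more complete than the paper's. The paper exhibits a stochastic schedule --- repeat a uniformly random permutation of the $\nElements$ tests every $\nElements$ steps --- for which $\text{T}(e,t)=(\nElements+1)/2$ for every $e$ and every $t$, so that all three \SUMe\ and all three \MAXe\ objectives equal $(\nElements+1)/2$ directly; it then simply asserts that this schedule is optimal. You instead use the deterministic round-robin, compute only its time-averaged objectives $\EeEt$ and $\MeEt$, and invoke the collapse of the three stochastic optima (Lemma~\ref{allstocequal:lemma}) to conclude $\text{opt-\SUMe},\ \text{opt-\MAXe}\le(\nElements+1)/2$; and you then supply the matching lower bound $\Et[e]\ge\tfrac12(1/Q_e+1)$ by convexity of the gap lengths, which is exactly what is needed to justify the equality $\text{opt-\MAXe}=\text{opt-\SUMe}=(\nElements+1)/2$ asserted in the lemma --- a step the paper leaves implicit. (Note that for the $(2-\epsilon)$ inequality itself only the upper bound on the stochastic optimum is needed; your sharp lower bound is the part that pins down the exact stochastic value, and its level of rigor matches that of the paper's own proof of Theorem~\ref{memvssto:thm}, which uses the same kind of frequency argument with the same unaddressed limit-interchange subtleties for general valid stochastic schedules.)
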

\begin{proof}
The instance has $n$ elements, corresponding $n$ singleton tests, and
uniform priorities $p_e$. The optimal
  memoryless schedule, the solution of both  \eqref{basicLP} and
  \eqref{basicconvex}, has $q_e=1/n$ and $\Mt[e]=\Et[e]=n$ for each element.
The optimal deterministic   schedule repeats a permutation on the $n$
elements and has $\Mt[e]=\Me[t]=n$ and $\Et[e]=\Ee[t]=(n+1)/2$ for all
$e,t$.
The optimal stochastic
selects a permutation uniformly at random every $n$ steps and follows
it.  It has $\Mt[e]=\Et[e]=(n+1)/2$ for all elements.
\end{proof}

\section{Deterministic scheduling} \label{deterministic:sec}

 The distinction between objectives within each of the \MAXe\ and
 \SUMe\ groups does matter  with  deterministic scheduling.
  For an instance and objective, we attempt to understand the relation between
the deterministic and stochastic optima.
For deterministic \MAXe\ objectives, the comparison is to  opt-\MAXe\ and for \SUMe\
objectives, it is to opt-\SUMe.

  We show that on all instances, the deterministic
\EEP\ is equal to  opt-\SUMe.
Deterministic \EMP\ and \WMP, however, are always strictly
larger (proof is provided in \notinproc{\ignore{Appendix}
  \ref{svsdlemmaproof:sec}.}\onlyinproc{\cite{blackholes:arxiv2013}.}
\begin{lemma} \label{SvsDlemma}
\begin{align}
\text{opt$_D$-\EEP}  &= \text{opt-\SUMe}  \label{EEPSD:claim} \\
\text{opt$_D$-\EMP} &\geq 2\text{opt-\SUMe} -1  \label{EMP:claim} \\
\text{opt$_D$-\WMP} &\geq 2\text{opt-\MAXe} -1 \label{WMP:claim}
\end{align}
\end{lemma}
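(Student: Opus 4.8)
The plan is to derive all three parts from a single pair of facts about any valid deterministic schedule $\sigma$: a \emph{randomized-start} construction turning $\sigma$ into a stationary stochastic schedule whose detection times equal the time-averages of $\sigma$, together with a per-element inequality bounding the time-average detection time by half the worst-case detection time. For the latter, fix $e$ and let $c_1<c_2<\cdots$ be the slots at which a test covering $e$ is invoked, with gaps $g_j=c_{j+1}-c_j$. On the $g_j$ slots $c_j+1,\dots,c_{j+1}$ the detection time $\bT_\sigma(e,t)$ runs through $g_j-1,g_j-2,\dots,0$, so $\Mt[e|\sigma]=\sup_j(g_j-1)$, while $\Et[e|\sigma]$ is the gap-weighted average of the per-gap means $\tfrac{g_j-1}{2}$. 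Since each $\tfrac{g_j-1}{2}\le\tfrac12\Mt[e|\sigma]$, passing to the limit gives the key bound $\Et[e|\sigma]\le\tfrac12\,\Mt[e|\sigma]$; applying $\Ee$ or $\Me$ to both sides yields $\text{\EEP}[\sigma]\le\tfrac12\,\text{\EMP}[\sigma]$ and $\text{\WEP}[\sigma]\le\tfrac12\,\text{\WMP}[\sigma]$.

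For \eqref{EMP:claim} and \eqref{WMP:claim} I would build $\bsigma'$ by first replacing $\sigma$ with a period-$P$ cyclic schedule whose $\Et[e|\cdot]$ values are within a factor $(1+\epsilon)$ of those of $\sigma$ (the construction behind Lemma~\ref{allstocequal:lemma}), and then choosing the start offset uniformly in $\{0,\dots,P-1\}$. This $\bsigma'$ is stationary with $\bT_{\bsigma'}(e,t)$ constant in $t$ and essentially equal to $\Et[e|\sigma]$, so $\text{\MEP}[\bsigma']\le(1+\epsilon)\text{\EEP}[\sigma]$ and $\text{\WEP}[\bsigma']\le(1+\epsilon)\text{\WEP}[\sigma]$. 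As $\bsigma'$ is a valid stochastic schedule, letting $\epsilon\to0$ gives $\text{opt-\SUMe}\le\text{\EEP}[\sigma]$ and $\text{opt-\MAXe}\le\text{\WEP}[\sigma]$. Combining with the per-element inequality yields $\text{opt-\SUMe}\le\tfrac12\text{\EMP}[\sigma]$ and $\text{opt-\MAXe}\le\tfrac12\text{\WMP}[\sigma]$, and taking the infimum over deterministic $\sigma$ gives $\text{opt$_D$-\EMP}\ge2\,\text{opt-\SUMe}$ and $\text{opt$_D$-\WMP}\ge2\,\text{opt-\MAXe}$; since both optima are nonnegative, these imply the stated bounds (in fact with no additive $-1$).

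For \eqref{EEPSD:claim}, the inequality $\text{opt$_D$-\EEP}\ge\text{opt-\SUMe}$ is immediate because deterministic schedules are stochastic and $\text{opt-\EEP}=\text{opt-\SUMe}$. For the reverse I would take a near-optimal stochastic $\bsigma$ and a long horizon $H$; since $\tfrac1H\sum_{t\le H}\bT_{\bsigma}(e,t)=\E_{\bsigma}[\tfrac1H\sum_{t\le H}\tau_\sigma(e,t)]$ is an average over realizations of a finite sum, some realization $\sigma$ satisfies $\sum_e p_e\,\tfrac1H\sum_{t\le H}\tau_\sigma(e,t)\le\sum_e p_e(\Et[e|\bsigma]+\delta)$. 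Repeating this prefix periodically gives a deterministic schedule whose \EEP\ equals the left-hand side up to vanishing boundary terms, proving $\text{opt$_D$-\EEP}\le\text{opt-\SUMe}$ and hence equality.

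The main obstacle is making the limiting steps rigorous. The cyclic approximation must not introduce a large wrap-around gap that breaks stationarity or inflates $\Et[e|\cdot]$, and the derandomization in the \EEP\ argument must legitimately interchange the time-average limit with the expectation over realizations. Both are handled exactly as in Lemma~\ref{allstocequal:lemma}: cut each prefix at a slot that covers the relevant elements and bound the $O(1/H)$ boundary contribution, and carry out all averaging over finite horizons before passing to the limit, so that only finite sums are ever exchanged.
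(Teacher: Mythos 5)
Your strategy is essentially the paper's: for \eqref{EMP:claim} and \eqref{WMP:claim} the paper also passes to a cyclic version of the deterministic schedule, randomizes the start to obtain a stationary stochastic schedule whose constant per-element detection time is the time-average of the cyclic one, and compares averages to maxima gap by gap; for \eqref{EEPSD:claim} it likewise fixes a good finite realization prefix of a near-optimal stochastic schedule and repeats it cyclically. But your write-up has one genuine error: an off-by-one in the detection-time convention, which makes your key inequality, and your claimed strengthening, false under the paper's definitions. In this paper detection times start at $1$, not $0$: a memoryless schedule has detection time geometric with mean $1/Q_e$ (Lemma~\ref{memopts:lemma}), and the cyclic permutation schedule on $n$ uniform singletons has $\Mt[e]=n$ and $\Et[e]=(n+1)/2$ (Lemma~\ref{SMgaps:example}). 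Hence in a gap of length $g_j$ the detection times run through $g_j,g_j-1,\dots,1$, the per-gap mean is $(g_j+1)/2$, and the correct per-element inequality is $\Et[e|\sigma]\le(\Mt[e|\sigma]+1)/2$, not $\Et[e|\sigma]\le\tfrac12\Mt[e|\sigma]$. Your conclusion ``in fact with no additive $-1$'' is therefore wrong: on the instance with one element and one test (so $p_e=1$), every schedule probes that test at every step and has detection time $1$ at all times, so $\text{opt$_D$-\EMP}=\text{opt-\SUMe}=1$ and $\text{opt$_D$-\EMP}\ge 2\,\text{opt-\SUMe}$ fails, while the stated bound $2\,\text{opt-\SUMe}-1=1$ is tight. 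With the corrected inequality your argument goes through verbatim and yields exactly \eqref{EMP:claim} and \eqref{WMP:claim}. (A minor structural difference: the paper cyclicizes via a state-repetition argument that preserves $\Mt[e]$ without increase, whereas you reuse the $\Et$-approximation of Lemma~\ref{allstocequal:lemma}; your route is equally valid since you apply the gap inequality to the original schedule.)

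For \eqref{EEPSD:claim}, your outline is right, but the boundary accounting you defer is where all the work lies, and the fix you name is not available: there is in general no ``slot that covers the relevant elements,'' since a single test covers only a subset of the elements. The paper instead (i) derives from the selected prefix's average-cost bound a maximum-gap bound $x_e = O\bigl(\sqrt{N\,\EeEt[\bsigma]/p_e}\bigr)$, which in particular forces every element to be covered in a long enough prefix and rules out infinite detection times after cyclification, and (ii) chooses the splice point $t_0$ to be a time whose instantaneous weighted cost is $O(\EeEt[\bsigma])$, so that the wrap-around overhead is $O(\sqrt{N})$ in total --- a relative contribution of order $1/\sqrt{N}$, not $O(1/H)$ as you assert. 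These two devices, rather than the exchange of expectation with finite time-averages (which is indeed routine), are the substance of the paper's proof; your sketch can be completed, but only by adding them.
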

We can show that finding the optimal schedules for all these objectives is
NP hard.
\begin{lemma} \label{NPhard-det}
Computing any one of the following optima is NP hard:
opt$_D$-\EEP, opt$_D$-\MEP, opt$_D$-\EMP, 
 opt$_D$-\WMP,  opt$_D$-\WEP,  and opt$_D$-\MWP.
\end{lemma}
This proof, similarly to the proof of  Lemma \ref{NPhard}, is also based on a reduction to the to exact cover by sets of size 3 (X3C) and details 
are omitted.  Additional relations, upper bounding the deterministic
objective by the stochastic objective
 follow from relations with memoryless optima which are presented next.

  For a deterministic schedule and an objective, the
  {\em approximation ratio}
is the ratio of the objective on the
schedule to that of the (deterministic) optimum of the same objective.
We are ultimately interested in efficient constructions of deterministic
schedules with good approximation ratio and in quantifying the cost of
determinism, that is, asking how much worse a deterministic objective can be over
the respective stochastic objective.

We define the {\em D2D}, {\em D2M}, and {\em D2S} of
a deterministic schedule as the ratio of the objective on the
schedule to that of the deterministic, memoryless, or stochastic  optimum of the same objective.
Since both deterministic and stochastic
optima are NP hard to compute, so is the D2D (the approximation ratio) and the
D2S.
The D2M of any given schedule, however, can be computed efficiently by computing
the memoryless optimum.  The D2M can then be used to bound the D2D and
D2S, giving an upper bound on how far our schedule is from the optimal
deterministic or stochastic schedule.  In particular,
the relation $\text{D2S} \leq \text{D2M} \leq
 2\ \text{D2S}$ follows from Theorem~\ref{memvssto:thm}.
We study the relation between the memoryless and deterministic optima below.

\subsection{Memoryless versus deterministic}  \label{memvdet:sec}
Since a deterministic schedule is a special case of a
stochastic schedule, from Theorem~\ref{memvssto:thm}, the
memoryless optimum is at most twice the deterministic optimum.
The  proof of Lemma \ref{SMgaps:example} shows:
\begin{lemma}\label{DSMgaps:example}
For any $\epsilon$, there is an instance on which
\begin{align*}
A = & \text{opt}_M\text{-\MAXe} = \text{opt}_M\text{-\SUMe} =
\text{opt$_D$-}\MeMt = 
\text{opt$_D$-}\EeMt  = \text{opt$_D$-}\EtMe
        \\
B=  & \text{opt$_D$-}\MeEt = \text{opt$_D$-}\EeEt =
 \text{opt$_D$-}\MtEe =  
\text{opt-\MAXe} = \text{opt-\SUMe}\\
& A \geq (2-\epsilon) B
\end{align*}
\end{lemma}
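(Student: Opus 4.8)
The plan is to reuse verbatim the instance and the detection-time computations from the proof of Lemma~\ref{SMgaps:example}: take $n$ elements with uniform priorities and $n$ singleton tests, one per element. That proof already records, on this instance, the three facts I need. First, the memoryless optima satisfy $\text{opt}_M\text{-\SUMe}=\text{opt}_M\text{-\MAXe}=n$. Second, the optimal deterministic schedule repeats a fixed permutation and yields $\Mt[e]=\Me[t]=n$ and $\Et[e]=\Ee[t]=(n+1)/2$ for all $e$ and $t$. Third, the stochastic optimum (a randomly shifted permutation) has $\Mt[e]=\Et[e]=(n+1)/2$, so $\text{opt-\SUMe}=\text{opt-\MAXe}=(n+1)/2$. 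I would set $A=n$ and $B=(n+1)/2$ and verify that each optimum appearing in the statement collapses to one of these two values.

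The $B$-block is the easy half. Substituting $\Et[e]=\Ee[t]=(n+1)/2$ into the objective definitions shows the repeated permutation attains value $(n+1)/2$ for $\MeEt$, $\EeEt$, and $\MtEe$, so each corresponding deterministic optimum is at most $(n+1)/2$. The matching lower bounds come for free: $\text{opt}_D\text{-}\EeEt=\text{opt-\SUMe}$ by \eqref{EEPSD:claim}, while $\text{opt}_D\text{-}\MtEe$ and $\text{opt}_D\text{-}\MeEt$ are each at least the corresponding stochastic optimum (a deterministic schedule is stochastic), and by Lemma~\ref{allstocequal:lemma} those stochastic optima equal $\text{opt-\SUMe}=\text{opt-\MAXe}=(n+1)/2$. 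Hence every $B$-entry, together with $\text{opt-\SUMe}$ and $\text{opt-\MAXe}$, equals $(n+1)/2$.

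For the $A$-block the two memoryless optima are $n$ by the cited computation. On the permutation, $p_e\equiv 1$ for \MAXe\ and $p_e\equiv 1/n$ for \SUMe\ give $\MeMt=\max_e p_e\Mt[e]=n$, $\EeMt=\sum_e p_e\Mt[e]=n$, and $\EtMe=\Et[\Me[t]]=n$, so each deterministic optimum is at most $n$. For the matching lower bounds I would invoke Lemma~\ref{SvsDlemma}: $\text{opt}_D\text{-}\EeMt\ge 2\,\text{opt-\SUMe}-1=n$ and $\text{opt}_D\text{-}\MeMt\ge 2\,\text{opt-\MAXe}-1=n$. The one optimum not covered by Lemma~\ref{SvsDlemma} is $\text{opt}_D\text{-}\EtMe$, and this is the only spot that needs a genuine argument: I would show directly that every deterministic schedule has $\Me[t]\ge n$ at each time $t$. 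Among the $n-1$ probes issued at times $t,t+1,\dots,t+n-2$ at most $n-1$ distinct elements can appear, so some element stays untested throughout this window and its detection time is at least $n$; with $p_e\equiv 1$ this forces $\Me[t]\ge n$ and hence $\text{opt}_D\text{-}\EtMe=\Et[\Me[t]]\ge n$. This pins all five $A$-entries at $n$.

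Finally I would close the gap. With $A=n$ and $B=(n+1)/2$ the ratio is $A/B=2n/(n+1)\to 2$, so given $\epsilon>0$ it suffices to take $n\ge 2/\epsilon-1$ to obtain $A\ge(2-\epsilon)B$. The main obstacle is precisely the deterministic lower bound $\text{opt}_D\text{-}\EtMe\ge n$, since it is the single value not handed to us by Lemmas~\ref{allstocequal:lemma} and~\ref{SvsDlemma} or by the proof of Lemma~\ref{SMgaps:example}; everything else is substitution into the objective definitions together with the already-established stochastic and memoryless optima.
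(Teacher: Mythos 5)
Your proposal is correct and follows essentially the same route as the paper: the paper's proof of Lemma~\ref{DSMgaps:example} is literally a pointer to the instance and computations in the proof of Lemma~\ref{SMgaps:example} ($n$ singleton tests, uniform priorities), which is exactly what you reuse, with $A=n$ and $B=(n+1)/2$. The details you add beyond that pointer --- the lower bounds via Lemma~\ref{SvsDlemma} (or equivalently Lemma~\ref{memdetupper}) and the direct pigeonhole argument that any deterministic schedule has $\Me[t]\geq n$ at every $t$, hence $\text{opt$_D$-}\EtMe\geq n$ --- are the verifications the paper leaves implicit, and they are carried out correctly under the paper's convention that the detecting probe itself is counted in the detection time.
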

That is,  for the weaker
\SUMe\ and \MAXe\ deterministic
objectives, a gap of 2 is indeed realizable, meaning that it is possible
for the deterministic optimum to be smaller than
the respective memoryless optimum.
For the strongest objectives, \EMP\ for \SUMe\ and $\MeMt$ for \MAXe,
 we show that the
deterministic optimum is at least the memoryless optimum:
\EC{ We have a ``hole'' for  $ \EtMe$  !  can deterministic be better
  than memoryless ? if so, what is the gap ? is there an
  example where there is a gap of (near) 2 ?  (the example below does
  not show a gap since all times $t$ are symmetric.}
\begin{lemma} \label{memdetupper}
\begin{align*}
\text{opt$_M$-\SUMe}  \leq  \text{opt$_D$-\EMP}\\
\text{opt$_M$-\MAXe}  \leq  \text{opt$_D$-}\MeMt
\end{align*}
\end{lemma}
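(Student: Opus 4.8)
The plan is to bound any deterministic schedule from below by the memoryless schedule induced by its long-run test frequencies, and to show that this induced memoryless schedule does at least as well on the relevant objective. Since $\text{opt$_D$-\EMP}$ and $\text{opt$_D$-}\MeMt$ are infima over deterministic schedules, it suffices to prove, for an arbitrary valid deterministic schedule $\bsigma$, that $\Ee[\Mt[e|\bsigma]] \ge \text{opt$_M$-\SUMe}$ and $\Me[\Mt[e|\bsigma]] \ge \text{opt$_M$-\MAXe}$, and then take the infimum over $\bsigma$. To this end, let $q_i = \lim_{h\to\infty}\frac1h\sum_{t=1}^h \Pr[\sigma_t = i]$ be the relative frequency of test $i$ (which exists since $\bsigma$ is valid), and set $Q_e = \sum_{i|e\in \test_i} q_i$, the long-run fraction of probes covering $e$. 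Then $\boldsymbol q$ is a memoryless schedule, and by Lemma~\ref{memopts:lemma} its SUM and MAX values are $\sum_e p_e/Q_e$ and $\max_e p_e/Q_e$.

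The crux is the single inequality $\Mt[e|\bsigma] \ge 1/Q_e$ for every element $e$. For a deterministic schedule, $\Mt[e|\bsigma] = \sup_t \bT_{\bsigma}(e,t)$ is precisely the largest gap, measured in probes, between consecutive tests that cover $e$: immediately after a covering, the detection time equals the full distance to the next covering. If $e$ is covered at times $t_1 < t_2 < \cdots$, then over the horizon $[1,h]$ the number of coverings is $Q_e h + o(h)$ while the coverings span a length $h - o(h)$, so the average inter-covering gap tends to $1/Q_e$. As the maximum gap is at least the average gap on every horizon, letting $h\to\infty$ gives $\Mt[e|\bsigma] \ge 1/Q_e$. (Validity forces $Q_e > 0$, since $Q_e = 0$ would make $\Mt[e|\bsigma]$ infinite.)

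Granting this, both claims are immediate. For the SUM objective, $\Ee[\Mt[e|\bsigma]] = \sum_e p_e\,\Mt[e|\bsigma] \ge \sum_e p_e/Q_e = \text{\SUMe}[\boldsymbol q] \ge \text{opt$_M$-\SUMe}$, and taking the infimum over $\bsigma$ yields $\text{opt$_M$-\SUMe} \le \text{opt$_D$-\EMP}$. For the MAX objective, $\Me[\Mt[e|\bsigma]] = \max_e p_e\,\Mt[e|\bsigma] \ge \max_e p_e/Q_e = \text{\MAXe}[\boldsymbol q] \ge \text{opt$_M$-\MAXe}$, and the infimum over $\bsigma$ gives $\text{opt$_M$-\MAXe} \le \text{opt$_D$-}\MeMt$.

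The step requiring the most care is the passage from the limiting frequencies $Q_e$ to the pointwise bound $\Mt[e|\bsigma]\ge 1/Q_e$ for schedules that need not be cyclic: because $Q_e$ is only a long-run average, one must exhibit arbitrarily long horizons on which the empirical average gap approaches $1/Q_e$ and then dominate it by the supremum of the detection time, rather than reasoning within a single period. One should also fix the off-by-one convention in the definition of $\bT_{\bsigma}(e,t)$ so that it is consistent with the memoryless value $\Mt[e]=1/Q_e$ of Lemma~\ref{memopts:lemma}; under that convention $\Mt[e|\bsigma]$ is exactly the maximal inter-covering gap in probes, and the inequality becomes an equality precisely for perfectly periodic coverings, which is why the weaker objectives in Lemma~\ref{DSMgaps:example} can fall strictly below the memoryless optimum.
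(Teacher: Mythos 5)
Your proof is correct and follows essentially the same route as the paper's: the paper likewise extracts the limiting frequencies $q_i$ from an arbitrary valid deterministic schedule, invokes the pointwise bound $\Mt[e] \ge 1/Q_e$ (stated there, with an apparent typo placing $p_e$ in the numerator, as $\Mt[e]\ge p_e/\sum_{i|e\in \test_i}q_i$), and then compares against the memoryless optimum exactly as you do. The only difference is that you spell out the justification of that key bound (the maximum inter-covering gap dominates the average gap, which tends to $1/Q_e$) and the off-by-one convention, details the paper leaves implicit by reference to Theorem~\ref{memvssto:thm}.
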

\begin{proof}
Similar to the proof of Theorem~\ref{memvssto:thm}:
Consider a deterministic schedule and let
$q_i$ be (the limit of) the relative frequency of test $i$.
We have
$\Mt[e] \geq  \frac{p_e}{\sum_{i|e\in \test_i} q_i}$.
\end{proof}

 We next consider the other direction, upper bounding the deterministic optimum by
the memoryless optimum.
  For the objectives $\EeEt$ and $\MeEt$, which are respectively the weakest
  \SUMe\ and \MAXe\ objectives, we
show that the deterministic optimum is at most the
memoryless optimum.  Moreover, we can efficiently construct
deterministic schedules with  D2M arbitrarily close to $1$ (and thus
approximation ratio of at most $2$).

\begin{lemma}  \label{easymeetNeeet}
\begin{align*}
\text{opt$_D$-}\MeEt & \leq \text{opt$_M$-\MAXe} \\
\text{opt$_D$-}\EeEt & \leq \text{opt$_M$-\SUMe}
\end{align*}
and for any $\epsilon>0$ we can efficiently construct deterministic
schedules with $\MeEt$  or $\EeEt$ D2M $\leq (1+\epsilon)$.
\end{lemma}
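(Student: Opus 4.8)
The plan is to build, for any $\epsilon>0$, a \emph{cyclic} deterministic schedule of some large period $N$ that mimics the optimal memoryless distribution $\boldsymbol{q}$ (the solution of \eqref{basicconvex} for the \SUMe\ case, of \eqref{basicLP} for the \MAXe\ case), and to show that every element's \emph{average} detection time $\Et[e]$ is essentially the memoryless value $1/Q_e$. The starting point is an exact expression for $\Et[e]$ of a cyclic schedule in terms of the gaps between consecutive probes that cover $e$: if within one period $e$ is covered with gaps $g_1,\dots,g_k$ (so $\sum_j g_j=N$), then walking through one gap of length $g$ the detection time runs through $0,1,\dots,g-1$, hence
\[
\Et[e]=\frac{1}{N}\sum_j\frac{g_j(g_j-1)}{2}=\frac{1}{2N}\sum_j g_j^2-\frac12 .
\]
By Cauchy--Schwarz this is minimized by even spacing, giving roughly $1/(2Q_e)$; but since the tests covering $e$ overlap we cannot space all elements evenly at once, so I only aim for the weaker target $\Et[e]\le(1+\epsilon)/Q_e$, which, as the computation below shows, is already met by a \emph{random} placement.

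First I would fix the period $N$ and draw the schedule $\boldsymbol{\rho}$ by choosing each of the $N$ positions independently from $\boldsymbol{q}$; to guarantee $\Et[e]<\infty$ I prepend one occurrence of every test with $q_i>0$ (at most $\nTests$ slots, perturbing all frequencies by $O(\nTests/N)\to0$), so every element with $Q_e>0$ is covered at least once per period. For a fixed element the covering positions are then essentially i.i.d.\ Bernoulli$(Q_e)$, the gaps are geometric with mean $1/Q_e$ and $\E[g^2]=(2-Q_e)/Q_e^2$, and plugging into the displayed formula gives $\E[\Et[e]]=1/Q_e-1+o(1)\le(1+\epsilon)/Q_e$ once $N$ is large (the $o(1)$ absorbs the wrap-around truncation and the prepended slots).

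For the \SUMe\ statement only this first moment is needed: by linearity, $\E[\EeEt[\boldsymbol{\rho}]]=\sum_e p_e\,\E[\Et[e]]\le(1+\epsilon)\sum_e p_e/Q_e=(1+\epsilon)\,$opt$_M$-\SUMe\ by Lemma~\ref{memopts:lemma}. Hence some realization $\rho$ achieves $\EeEt[\rho]\le(1+\epsilon)$opt$_M$-\SUMe; letting $\epsilon\to0$ proves opt$_D$-$\EeEt\le$ opt$_M$-\SUMe, and the realization itself has $\EeEt$ D2M $\le1+\epsilon$. To make the construction deterministic and efficient I would fill the $N$ positions left to right by the method of conditional expectations, using the conditional value of $\sum_e p_e\Et[e]$ (computable in polynomial time, since conditioned on a prefix the remaining positions are still i.i.d.) as the pessimistic estimator.

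The \MAXe\ statement is the hard part, because bounding $\E[\max_e p_e\Et[e]]$ is \emph{not} the same as bounding $\max_e p_e\,\E[\Et[e]]$: I need every element to be near its mean \emph{simultaneously}. The plan is to show that $\Et[e]=\frac{1}{2N}\sum_j g_j^2-\frac12$ concentrates, since $\sum_j g_j^2$ is a sum of about $NQ_e$ weakly (negatively) dependent squared geometric gaps, so a Bernstein/Chernoff tail bound gives $\Pr[\Et[e]>(1+\epsilon)/Q_e]<1/\nElements$ provided $N=\mathrm{poly}(\nElements,1/\epsilon,1/\min_e Q_e)$. A union bound over the $\nElements$ elements then yields a realization with $p_e\Et[e]\le(1+\epsilon)p_e/Q_e$ for all $e$ at once, i.e.\ $\MeEt[\rho]\le(1+\epsilon)\max_e p_e/Q_e=(1+\epsilon)$opt$_M$-\MAXe\ by Lemma~\ref{memopts:lemma}; this proves opt$_D$-$\MeEt\le$ opt$_M$-\MAXe\ and D2M $\le1+\epsilon$. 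For the efficient deterministic version I would again run conditional expectations, now with the summed tail probabilities $\sum_e\Pr[\Et[e]>(1+\epsilon)/Q_e\mid\text{prefix}]$ as pessimistic estimator, kept below $1$ throughout. The main obstacle is exactly this step: obtaining a concentration bound for the heavy-tailed, dependent squared-gap sum that is strong enough to survive the union bound over all elements and that doubles as an efficiently computable pessimistic estimator.
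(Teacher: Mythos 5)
Your overall route is the same as the paper's: derandomize the optimal memoryless schedule $\boldsymbol{q}$ by taking one long random run of it (your $N$ i.i.d.\ draws from $\boldsymbol{q}$ are exactly such a run) and cycling through it, arguing that every element's time-averaged detection time stays within a factor $(1+\epsilon)$ of the memoryless value $1/Q_e$. Your \SUMe\ half is correct: the gap identity, the renewal computation giving $1/Q_e-1+o(1)$, linearity of expectation, and extraction of a good realization are all sound, and letting $\epsilon\to 0$ gives $\text{opt$_D$-}\EeEt \le \text{opt$_M$-\SUMe}$.

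However, the step you flag as ``the main obstacle'' --- simultaneous control of all elements, needed for the \MAXe\ half --- is precisely where the paper's (one-line) proof lives: it asserts that a long enough run has, with positive probability, time average at most $(1+\epsilon)\Et[e\,|\,\boldsymbol{q}]$ \emph{for all $e$ at once}; your proposal leaves this unproved. The good news is that it is not a genuine obstacle: no Bernstein-type bound for heavy-tailed dependent squared gaps is needed, nor a pessimistic estimator. Work with the time average $\frac{1}{N}\sum_{t}T(e,t)$ directly instead of the gap decomposition. For a memoryless schedule, conditioning on whether some probe in $[t,t+k)$ covers $e$ gives $\mathrm{Cov}[T(e,t),T(e,t+k)]=(1-Q_e)^k\,\mathrm{Var}[T(e,1)]$, so the variance of the time average is $O\bigl(1/(NQ_e^3)\bigr)$, and Chebyshev yields
\begin{equation*}
\Pr\Bigl[\tfrac{1}{N}\textstyle\sum_t T(e,t)>(1+\epsilon)/Q_e\Bigr] = O\bigl(1/(NQ_e\epsilon^2)\bigr)\ .
\end{equation*}
Since each element only needs failure probability below $1/(2\nElements)$ --- not an exponential tail --- taking $N=\Theta\bigl(\nElements/(\epsilon^2\min_e Q_e)\bigr)$, polynomial in the input and $1/\epsilon$, suffices for a union bound over all elements; the wrap-around effects are $o(1)$ exactly as in your prepending argument. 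A single such ``good run'' then certifies both halves of the lemma at once, which is how the paper treats \SUMe\ and \MAXe\ uniformly. Finally, for the efficiency claim you do not need the method of conditional expectations: a candidate run can be verified in polynomial time (compute all $\Et[e]$ of the cyclic schedule), and a random run succeeds with probability at least $1/2$, so sample-and-verify already gives an efficient construction.
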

\begin{proof}
 For any $\epsilon$, for a long enough run of
the memoryless schedule $\boldsymbol{q}$, there is a positive
probability that for all elements, the average over time of $T(e,t)$
(in the part of the sequence where it is finite) is at most
$(1+\epsilon)\Et[e | \boldsymbol{q}]$. We obtain the deterministic
schedule by cycling through such a run. If the run is sufficiently
long  then the suffix in which  $T(e,t)$ is infinite is a small
fraction of the run and the resulting
 schedule $\sigma$ has $\MeEt[\sigma] \leq (1+\epsilon)
\text{opt$_M$-\MAXe}$ and $\EeEt[\sigma] \leq (1+\epsilon)
\text{opt$_M$-\SUMe}$.
\end{proof}

\EC{  What can we say about the approximation ratio of $\MtEe$ ?   we
  do not have a bad example where it is more than a constant, but do
  not have a proof that it is a constant.}

We are now ready to relate the D2D and D2M.  We obtain
$\text{D2D} \leq 2\ \text{D2M}$, and for \EMP\ and $\MeMt$ (see Lemma ~\ref{memdetupper}),
we have $\text{D2D} \leq \text{D2M}$.
  Accordingly,  D2M $\geq 1/2$,  and for
\EMP\ and $\MeMt$ we have D2M $\geq 1$.  The optimum D2M  is
the minimum possible over all schedules.  We refer to the supremum of
optimum D2M over instances as the {\em D2M gap}
of the scheduling problem.


In contrast, for the strongest objectives,  $\MeMt$, $\MeEt$, and
$\EeMt$, we construct a family of instances with asymptotically large optimal
D2M, obtaining a lower bound on the D2M gap.   We also show that
$\text{opt$_D$-}\MeMt$ and
$\text{opt$_D$-}\EtMe$ are hard to approximate better than $\ln(\nElements)$:
(Proof details are provided in \onlyinproc{\cite{blackholes:arxiv2013}}\notinproc{\ignore{Appendix~}\ref{lowerboundlink:sec}})
\EC{Can we show asymptotic hardness of approximation  for $\EeMt$?
  can we show either hardness of approximation or asymptotic
  D2M ratio for $\MtEe$ ?}

\begin{lemma} \label{lowerboundlink}
There is a family of instances with $\nTests$ tests and $\nElements$
elements such that each element participates in $\nTofe$ tests with
the following lower bounds on D2M:
The \MWP-D2M  (and thus \WMP-D2M)
$\Omega(\ln \nElements)$ and $\Omega(\nTests)$. 
The \EMP\ D2M  is $\Omega(\log \nTofe)$.
Moreover, these instances can be realized on a network, where elements
are links and tests are paths.
\end{lemma}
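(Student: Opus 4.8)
The plan is to prove the three bounds with two separate constructions and then to embed them on a path/tree network. For the two \MWP\ bounds (which by Lemma~\ref{basicrel}, relation~\eqref{ineqPE}, transfer to \WMP) I would use a threshold (``majority'') set system. Take $\nTests=d$ tests and let the elements be all $\binom{d}{d/2}$ binary strings $x\in\{0,1\}^d$ of Hamming weight $d/2$, where element $x$ lies in test $j$ iff $x_j=1$. Then $\nElements=\binom{d}{d/2}=2^{\Theta(d)}$, so $\ln\nElements=\Theta(d)$, and every element lies in exactly $d/2$ tests. Use uniform weights ($p_e\equiv1$). With uniform frequencies $q_j=1/d$ each element is covered at rate $Q_x=(d/2)(1/d)=1/2$, so opt$_M$-\MAXe$\,\le 2$. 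The combinatorial heart is that a subfamily $\{\,\text{test }j:j\in J\,\}$ covers all elements iff $|J|>d/2$: a weight-$d/2$ string escapes the subfamily exactly when its support fits in the $d-|J|$ unchosen coordinates, which is possible iff $|J|\le d/2$.

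This covering fact drives the lower bound on $\MWP=\Et[\Me[t]]$. Fix any deterministic $\sigma$ and any time $t$ and look at the window $\sigma_t,\dots,\sigma_{t+d/2-1}$; it uses at most $d/2$ distinct tests, so by the claim some element $x$ satisfies $x\notin\test_{\sigma_{t'}}$ throughout the window, giving $\text{T}_\sigma(x,t)\ge d/2$ and hence $\Me[t]\ge p_x\,\text{T}_\sigma(x,t)\ge d/2$. Since this holds at every $t$, averaging over time yields $\MWP[\sigma]\ge d/2$, so opt$_D$-$\MWP\ge d/2$ and the \MWP-D2M is at least $(d/2)/2=\Omega(d)$. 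Because $d=\nTests=\Theta(\ln\nElements)$, this single family certifies the $\Omega(\ln\nElements)$ and $\Omega(\nTests)$ bounds at once, and transfers to \WMP\ via $\WMP[\sigma]\ge\MWP[\sigma]$ together with opt$_M$-$\WMP=\,$opt$_M$-$\MWP$.

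For $\EMP=\sum_e p_e\Mt[e]$ the majority instance is insufficient, since its covering argument starves only \emph{one} element at each time while \EMP\ averages $\Mt[e]$ over all elements. Here I would build a recursive, multi-scale family whose gap accumulates a constant per scale. The base gadget is the triangle $\test_1=\{A,B\},\ \test_2=\{B,C\},\ \test_3=\{A,C\}$ with uniform weights: opt$_M$-\SUMe$\,=3/2$, whereas in any deterministic schedule each element is uncovered on the step that runs its opposite test, forcing $\Mt[\cdot]\ge 2$ and $\EMP\ge 2$, a constant D2M$>1$. I would then merge two depth-$(k{-}1)$ gadgets through a shared linking element lying in tests of both copies, arranged so that the common one-test-per-step budget forces, in \emph{every} deterministic schedule, one copy to incur an additive-constant increase in weighted staleness, while the memoryless optimum simply splits frequency between the copies at no extra cost. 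Doubling $\nTofe$ and adding $\Theta(1)$ to the \EMP\ gap per level gives, after $\log_2\nTofe$ levels, a regular instance with $\nTofe$ tests per element and opt$_D$-$\EMP\ge\Omega(\log\nTofe)\cdot\,$opt$_M$-\SUMe.

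The main obstacle is precisely this \EMP\ amplification: the linking gadget and its charging argument must make the per-level loss \emph{additive and unavoidable for every} deterministic schedule (not merely for round-robin), while keeping the memoryless optimum flat; reasoning about how worst-case gaps compose across $\log\nTofe$ scales — rather than just exhibiting one bad schedule — is the delicate step, and I expect it to require an induction on the scale that lower-bounds opt$_D$-\EMP\ of the merged instance in terms of the two sub-instances plus the linking element. Finally, to meet the ``elements are links, tests are paths'' requirement I would lay out each instance so that every test is a contiguous path: the threshold system is emulated after a suitable ordering and routing of the links, and the recursive \EMP\ instance is assembled from path-shaped gadgets. Checking that all tests remain simple paths is routine but must be carried out explicitly.
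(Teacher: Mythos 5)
Your \MWP\ part is sound and is in fact the paper's own argument: the paper's construction takes all $\binom{\nTests}{\nTofe}$ elements corresponding to $\nTofe$-subsets of tests (your majority system is exactly the case $\nTofe=\nTests/2$), notes that the uniform memoryless schedule achieves $\nTests/\nTofe=O(1)$, and observes that any window of $\nTests-\nTofe$ probes misses at least $\nTofe$ tests, whose common element is therefore starved — the same windowing argument you give. The genuine gap is in your \EMP\ part, and it starts with a false premise: you assert the threshold instance is ``insufficient'' for \EMP\ because the covering argument starves only one element per time step, and you therefore launch an unexecuted recursive-gadget construction. But the paper proves the $\Omega(\log\nTofe)$ \EMP\ bound \emph{on the same instance}, by a different (non-covering) argument: with uniform weights, \EMP\ is the average over elements of $\Mt[e]$; in a window of $\nTests$ consecutive probes the average element is hit at most $\nTofe$ times, and because \emph{every} $\nTofe$-subset of tests is an element, the hit positions of an average element behave like $\nTofe$ randomly placed points in a window of length $\nTests$, whose expected maximum gap is $\Omega(\log\nTofe)\cdot\nTests/\nTofe$ — a logarithmic factor above the memoryless optimum $\nTests/\nTofe$. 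So a maximum-gap (balls-in-bins) argument over the average element, not a one-element covering argument, is the missing idea.

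Your proposed replacement does not close this gap and, as sketched, would fail structurally. When two copies are merged and a memoryless schedule splits its frequency between them, every element's coverage rate $Q_e$ halves, so opt$_M$-\SUMe\ \emph{doubles} at each level rather than staying ``flat'' as you claim; an additive $\Theta(1)$ per-level loss on the deterministic side then yields a ratio of the form $\bigl(2^k c_0+\sum_j 2^{k-j}\Theta(1)\bigr)/\bigl(2^k c_0\bigr)=O(1)$, not $\Omega(\log\nTofe)$. You would need a multiplicative per-level separation, which your linking-element charging scheme does not provide, and which you yourself flag as the unresolved ``delicate step.'' Finally, the network realization is not routine in the way you suggest: arbitrary set systems cannot be realized with tests as contiguous paths over the real links alone; the paper's device is to place $\nElements$ \emph{pairs} of parallel links (one real, one dummy with negligible weight) in series, so that each test is an end-to-end path choosing the real link of each element it contains and the dummy link otherwise. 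Without that device (or an equivalent one) your ``suitable ordering and routing'' claim is unsupported.
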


  \begin{lemma}  \label{setcoverlb}
The problems
$\text{opt$_D$-}\MeMt$ and $\text{opt$_D$-}\EtMe$ are hard to
approximate to anything better than
$\ln(\nElements)$.
  \end{lemma}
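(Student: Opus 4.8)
The plan is to give a reduction from \textsc{Set Cover}, exploiting the fact that on the instances I construct \emph{both} objectives coincide exactly with the minimum set cover size (up to an additive constant). This makes the reduction value-preserving, so the classical $(1-\epsilon)\ln \nElements$ inapproximability of \textsc{Set Cover} (Feige; Dinur--Steurer) transfers verbatim, and it lets a single family of instances certify hardness for the worst-case-over-time objective $\MeMt$ and the average-over-time objective $\EtMe$ simultaneously.

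Given a \textsc{Set Cover} instance with universe $U$ and a collection $\mathcal{C}$ of subsets of $U$, I build a scheduling instance by taking $\Elements=U$, one test $\test_i\in\Tests$ for each set in $\mathcal{C}$, and uniform weights $p_e=1$ (so $\Me$ is a plain maximum over elements and the \MAXe\ normalization $\max_e p_e=1$ holds). Let $c^*$ denote the minimum cover size. For a deterministic schedule $\bsigma$ and time $t$, set $M(t)=\max_e \text{T}_{\bsigma}(e,t)$, so that $\MeMt[\bsigma]=\sup_t M(t)$ and $\EtMe[\bsigma]=\Et[M]$.

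The crux of the argument — and the step I expect to need the most care — is a \emph{pointwise} lower bound: for every schedule and every $t$, $M(t)\ge c^*-1$. Indeed, $M(t)=m$ means the window of slots $t,t+1,\dots,t+m$ covers all of $U$ in its union while the shorter window $t,\dots,t+m-1$ does not. Since a window of $\ell$ slots contains at most $\ell$ distinct tests and covering $U$ requires at least $c^*$ of them, any covering window has $\ell\ge c^*$, whence $m+1\ge c^*$. The matching upper bound comes from cycling through the $c^*$ sets of an optimal cover: any $c^*$ consecutive slots then contain all $c^*$ sets and cover $U$, whereas any $c^*-1$ consecutive slots omit one set of the (minimum) cover and thus leave some element uncovered, giving $M(t)=c^*-1$ for \emph{every} $t$. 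Because $M(\cdot)$ is constant in this schedule, both $\sup_t M(t)$ and $\Et[M]$ equal $c^*-1$, and combined with the pointwise lower bound I conclude
\[
\text{opt$_D$-}\MeMt \;=\; \text{opt$_D$-}\EtMe \;=\; c^*-1 .
\]

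Finally I transfer inapproximability. Since each optimum equals $c^*-1$ exactly, any schedule achieving objective value $\hat v\in[c^*-1,\,\alpha(c^*-1)]$ yields $\hat v+1\in[c^*,\,\alpha c^*]$ (using $\alpha\ge1$), an $\alpha$-approximation of $c^*$; so an $\alpha$-approximation of either scheduling optimum gives an $\alpha$-approximation of \textsc{Set Cover}. The additive $-1$ is therefore harmless, and the $(1-\epsilon)\ln \nElements$ hardness of \textsc{Set Cover} forces the same for $\text{opt$_D$-}\MeMt$ and $\text{opt$_D$-}\EtMe$. The main subtlety to watch is precisely the off-by-one bookkeeping in the detection-time convention $\text{T}_{\bsigma}(e,t)=\min\{h\ge 0: e\in\test_{\sigma_{t+h}}\}$ (which is what produces the additive constant) together with verifying that the lower bound $M(t)\ge c^*-1$ genuinely holds at every $t$, since this uniform-in-$t$ property is exactly what allows one reduction to cover both the $\sup$-over-time and average-over-time objectives at once.
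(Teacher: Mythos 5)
Your proof is correct and takes essentially the same route as the paper: a reduction from \textsc{Set Cover} with uniform weights, cycling through a minimum cover for the upper bound, and extracting a cover from a window of consecutive probes for the lower bound, giving value-preservation that transfers Feige's hardness to both objectives. The only cosmetic differences are that you prove the window bound pointwise in $t$ (handling $\MeMt$ and $\EtMe$ simultaneously) where the paper uses an averaging argument for $\EtMe$ and a citation for $\MeMt$, and your additive off-by-one from the detection-time convention, which you correctly argue is harmless.
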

\begin{proof}
When $\boldsymbol{p}$ is uniform,
$\text{opt$_D$-}\MeMt$ is equivalent to set cover -- an approximation
ratio for $\text{opt$_D$-}\MeMt$ implies the same approximation ratio
for set cover~\cite{NguyenTTD:infocom09}, which is hard to
approximate~\cite{feige98}.

  This also extends to $\text{opt$_D$-}\EtMe$, again using uniform
  $\boldsymbol{p}$.  A minimum set cover of size $k$
  implies a schedule (cycling through the cover) with $\EtMe$ of $k$.
Also, a schedule with $\EtMe$ at most $k$ means that $\Me[t]\leq k$
for at least one $t$, means there is a cover of size $k$.
\end{proof}

\subsection*{Summary of relations}
A summary of these relations, which also includes results from
our R-Tree schedulers (Section \ref{treeschedules:sec})  is provided
in Table~\ref{D2M:table}.
The lower bounds on the D2M gap are established in Lemma~\ref{lowerboundlink} through
example instance on which the optimum D2M is large.
The lower bound on approximability is established in
Lemma~\ref{setcoverlb}.  Both lower and upper bounds
 for $\EeEt$ and $\MeEt$ are established in Lemma~\ref{easymeetNeeet},
\EMP\ D2M upper bound in Theorem~\ref{thm:EMP},  and $\MeMt$ D2M in
Theorem~\ref{lemma:treememt}. 

\begin{table} [h]
{\small
\begin{tabular}{l|l|l| l}
objective  &   scheduling D2M   & D2M gap & approximability \\
\hline
$\EtEe$ & 1  &  1 &  \\
$\MeEt$ & 1  & 1 &   \\
$\EeMt$ & $O(\ln \nTests)$ &  $\Omega(\ln \nTests)$ & \\
$\MeMt$, $\EtMe$ & $O(\log \nElements + \log \nTofe)$ &
$\Omega(\log \nElements)$, $\Omega(\nTests)$ & $\Omega(\log \nElements)$ \\
\hline
\end{tabular}
}
\caption{D2M upper bounds of our schedulers
and lower bounds on the
D2M gap and on efficient approximability.\label{D2M:table}}
\end{table}

\ignore{
We show that

\noindent
(i)  There are instances
(with $\nTests$ tests,  $\nElements = {\nTests \choose \nTofe}$ elements, and exactly $\nTofe$ tests
including each element)  where the \MWP\ D2M-ratio (and
therefore also the \WMP\ D2M-ratio)
on any schedule is $\Omega(\nTests)$ and the
  \EMP\ ratio on any schedule is $\Omega(\nTofe)$.

\noindent
(ii)
For any instance we can efficiently construct a deterministic schedule with
\EMP\ ratio  $O(\max_e \log k_e)$, where
$k_e$ is the maximum number of tests covering at least a constant
fraction of the frequency of element $e$ in the optimal memoryless
schedule.
(This is bounded by the maximum number of tests  containing a
particular element)
Moreover, the same schedule has constant \EEP\ ratio.
}


\section{R-Tree schedules} \label{treeschedules:sec}

   We present an efficient construction of deterministic schedules
   from a distribution
 $\boldsymbol{q}$ and relate detection times of the deterministic schedule
to (expected) detection times of the memoryless schedule defined by
$\boldsymbol{q}$.

We can tune the schedule to  either \MAXe\
or \SUMe\
objectives, by selecting accordingly the input frequencies
$\boldsymbol{q}$ as a solution of \eqref{basicLP} or
\eqref{basicconvex}.
We then derive analytic bounds on the D2M of the schedules we obtain.

 The building block of {\em random tree}  (R-Tree) schedules
is {\em tree schedules}, which are deterministic
schedules specified by a mapping of tests to nodes
of a binary tree.   A tree schedule is specified with respect to
probing frequencies $\boldsymbol{q}$ and has the property that for any test,
the maximum probing interval in the deterministic schedule
is guaranteed to be close to $1/q_i$.
However, if we do not place the tests in the tree carefully then
for an element covered by multiple tests the probing
interval can be
close to that of its most frequent test, but yet far from  the desired
(inverse of)
$Q_e=\sum_{i|e\in \test_i} q_i$.  Therefore,  even when computed with
respect to
$\boldsymbol{q}$ which solves \eqref{basicconvex},
the tree schedule can have  \EMP\ and \EEP\ D2M ratios $\Omega(\nTofe)$.

We define a distribution over tree schedules obtained by
randomizing the mapping of
tests to nodes.  We then bound the expectation of the \EMP\ and
\EEP\ (when applied to $\boldsymbol{q}$ which solves
\eqref{basicconvex})  and
$\MeMt$ (when applied to $\boldsymbol{q}$ which solves
\eqref{basicLP}) over the resulting deterministic schedules.
Given a bound on the expectation of an objective, there is a constant
probability that a tree schedule randomly drawn from the distribution
will satisfy the same bound (up to a small constant factor).
An R-Tree  schedule  is obtained by constructing multiple tree schedules drawn from the
distribution, computing the objectives on these schedules, and
finally, returning the best performing tree schedule.  Note that even though
the construction is randomized, the end result, the R-Tree schedule,
is deterministic, since it is simply a tree schedule.

Specifically,  lets take  \SUMe as an example,  we
apply the R-Tree schedule construction several times with  $\boldsymbol{q}$'s
solving \eqref{basicconvex}.
The tree with the best  \EMP\ has   $O(\log(\nTofe))$   \EMP\ D2M
 and the tree with the best \EEP\ has
a constant \EEP\ D2M. Furthermore we
can also find  a tree  which satisfies both guarantees.

\begin{theorem}
\label{thm:EMP}
A deterministic schedule with
\EMP\ D2M ratio of $O(\log \nTofe)$ and
a constant \EEP\ D2M ratio can be constructed efficiently.
\end{theorem}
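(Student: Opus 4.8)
The plan is to analyze the randomized tree construction described above, seeded with the frequencies $\boldsymbol{q}$ that solve the convex program~\eqref{basicconvex}, so that $\sum_e p_e/Q_e$ equals opt$_M$-\SUMe\ exactly. First I would fix the deterministic skeleton: round each frequency down to a power of two via $d_i=\lceil\log_2(1/q_i)\rceil$, so that $q_i/2<2^{-d_i}\le q_i$ and, by Kraft's inequality, $\sum_i 2^{-d_i}\le\sum_i q_i=1$. The tests can then be placed as leaves of a binary tree, where a leaf at depth $d$ is probed with exact period $2^{d}$ and exactly one test is probed per step; test $i$ is probed every $2^{d_i}\in[1/q_i,2/q_i)$ steps. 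The randomization assigns each test an independent uniform phase (a uniformly random leaf at its depth). For an element $e$, detections occur at the union of the probe times of the tests with $e\in\test_i$, whose combined density is $\rho_e=\sum_{i:\,e\in\test_i}2^{-d_i}\in[Q_e/2,Q_e]$.

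For the \EEP\ bound I would reason directly about expected waiting times. Fix a time $t$; under its uniform phase the waiting time $W_i$ to test $i$'s next probe is uniform on $\{1,\dots,2^{d_i}\}$, independently across $i$, and the detection time at $t$ is $\min_i W_i$. Then $\Pr[\min_i W_i>w]=\prod_i\Pr[W_i>w]\le e^{-w\rho_e}$, so $\E[\min_i W_i]=\sum_{w\ge 0}\Pr[\min_i W_i>w]=O(1/\rho_e)=O(1/Q_e)$. Averaging over $t$ (Fubini across the two averages) gives $\E[\Et[e]]=O(1/Q_e)$, so the expected \EEP, which is $\sum_e p_e\,\E[\Et[e]]$, is $O(\sum_e p_e/Q_e)$, i.e. a constant times opt$_M$-\SUMe: a constant \EEP\ D2M in expectation.

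The harder part, and the main obstacle, is the \EMP\ bound, since $\Mt[e]$ is the \emph{maximum} gap over the full period $2^{\max_{i:e\in\test_i}d_i}$, which can be as large as $1/\min_i q_i$; a naive union bound over all positions would only pay $\log(1/\min_i q_i)$ rather than $\log\nTofe$. The key device to remove this dependence is to \emph{discard the light tests}, keeping only the tests covering $e$ with $q_i\ge Q_e/(2\nTofe)$; dropping tests only deletes probes and hence only enlarges gaps, so the resulting max gap still upper-bounds $\Mt[e]$. The retained tests carry density at least $Q_e/4$, yet all of their periods are at most $2/q_i\le 4\nTofe/Q_e$, so the entire relevant period is $P_e=O(\nTofe/Q_e)$. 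Tiling this period by windows of length $w$ and using $\Pr[\text{window empty}]\le\prod_i(1-w/2^{d_i})\le e^{-w\rho_e}$ gives $\Pr[\Mt[e]>2w]\le (P_e/w)\,e^{-w\rho_e}$. Choosing $w=\Theta(\log\nTofe)/Q_e$ makes \emph{both} $P_e$ and $w$ scale as $1/Q_e$, so their ratio is $Q_e$-free while the exponent is $\Theta(\log\nTofe)$; hence the tail is polynomially small in $\nTofe$, and integrating it against the trivial ceiling $\Mt[e]\le 2^{d_1}$ (the smallest retained period) yields $\E[\Mt[e]]=O(\log\nTofe/Q_e)$. Summing over elements gives the expected \EMP\ at most $O(\log\nTofe)$ times opt$_M$-\SUMe.

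Finally I would derandomize. With the expected \EEP\ being $O(1)$ times opt$_M$-\SUMe\ and the expected \EMP\ being $O(\log\nTofe)$ times opt$_M$-\SUMe, Markov's inequality plus a union bound show that a single randomly drawn tree schedule simultaneously achieves a constant \EEP\ D2M and an $O(\log\nTofe)$ \EMP\ D2M with constant probability. Each tree schedule is periodic with a known period, so both objectives can be evaluated exactly in polynomial time; I would therefore draw $O(1)$ independent trees (or $O(\log(1/\delta))$ for failure probability $\delta$), compute both objectives on each, and output one that meets both targets. The returned object is a single tree schedule and hence deterministic, proving the claim.
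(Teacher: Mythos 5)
Your proposal follows essentially the same route as the paper's proof: seed the random tree construction with the optimum of the convex program \eqref{basicconvex}, prove per-element expected bounds of $O(1/Q_e)$ for the time-averaged detection time and $O(\log \nTofe_e)/Q_e$ for the maximum probe gap, and derandomize via Markov's inequality by drawing a few trees, evaluating both objectives on each (they are periodic), and keeping the best. Your two key devices for the \EMP\ bound --- discarding the light tests with $q_i < Q_e/(2\nTofe)$ so that the relevant period shrinks to $O(\nTofe/Q_e)$, and then a union bound over windows of length $w=\Theta(\log\nTofe)/Q_e$ --- are precisely the content of the paper's Lemma~\ref{onelink}.

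One step is wrong as literally stated, though it is repairable and does not change the structure of the argument. You give each test an independent uniform phase and write $\Pr[\min_i W_i>w]=\prod_i\Pr[W_i>w]$. Independent phases are inconsistent with a schedule that probes exactly one test per step: independently chosen leaves collide with positive probability (two tests choose the same leaf, or a test chooses a leaf inside the subtree truncated by a shallower test), so the object you analyze is not a valid schedule, and Kraft's inequality only guarantees that a collision-free placement \emph{exists}, not that independent placement avoids collisions. The paper's construction instead maps tests level by level to the \emph{available} nodes, truncating each assigned subtree; each marginal phase is still uniform, but the placements are dependent. The dependence is benign: the events that a given window is missed by the various tests are negatively correlated (conditioning on some tests missing the window makes the others more likely to hit it), so the product bound survives as an inequality, $\Pr[\text{no test covering } e \text{ hits the window}] \le \prod_i(1-q_i w)\le e^{-\Omega(w Q_e)}$. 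Since your argument only ever uses the product as an upper bound --- in the \EEP\ tail, in the \EMP\ window union bound, and nowhere in the derandomization --- everything else goes through unchanged once the construction is replaced by the sequential random mapping and the equality is weakened to this inequality.
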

The theorem is tight since from Lemma~\ref{lowerboundlink},
 the \EMP\ D2M gap on some instances  is $\Omega(\log \nTofe)$, and
 therefore, we can not hope for a better dependence on $\ell$.\footnote{As a side note, recall that according to
 \eqref{EEPSD:claim} there exist schedules with \EEP\ D2M close to
 $1$, so with respect to \EEP\ this only shows that we can
simultaneously obtain a \EMP\ D2M that is logarithmic in $\nTofe$
and at the same time a constant \EEP\ D2M.}


 For \MAXe,
we show that when  we apply the R-Tree schedule construction to $\boldsymbol{q}$ which is the
    optimum of \eqref{basicLP},  we obtain a deterministic schedule with
$O(\log \nTofe + \log \nElements)$ $\MeMt$ D2M.
\begin{theorem}  \label{lemma:treememt}
A deterministic schedule with
$\MeMt$  D2M ratio of  $O(\log \nTofe + \log \nElements)$ can be constructed efficiently.
\EC{Can we add constant $\MeEt$ D2M ratio here ?}
\end{theorem}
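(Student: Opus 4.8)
The plan is to reduce the D2M bound to a statement about \emph{gaps} and then to establish that statement through the randomized tree construction seeded with the optimal \MAXe\ memoryless frequencies. Let $\boldsymbol{q}$ be a solution of the LP~\eqref{basicLP} and set $Q_e=\sum_{i|e\in\test_i}q_i$; write $M=$~opt$_M$-\MAXe. By Lemma~\ref{memopts:lemma}, $M=\max_e p_e/Q_e$, so $p_e/Q_e\le M$ for every $e$. For a deterministic schedule $\sigma$, $\Mt[e|\sigma]$ is exactly the longest gap between consecutive times at which some test containing $e$ is probed. Hence it suffices to build a $\sigma$ in which, for every element $e$, this maximal gap is at most $O(\log\nTofe+\log\nElements)/Q_e$: then $p_e\,\Mt[e|\sigma]\le O(\log\nTofe+\log\nElements)\,p_e/Q_e\le O(\log\nTofe+\log\nElements)\,M$, and taking the weighted maximum over $e$ gives $\MeMt[\sigma]\le O(\log\nTofe+\log\nElements)\,M$, i.e.\ the claimed D2M.

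For the construction I would assign each test $i$ the depth $d_i=\lceil\log_2(1/q_i)\rceil$, so its probing interval in a tree schedule is $2^{d_i}\in[1/q_i,2/q_i)$ and $\sum_i 2^{-d_i}\le\sum_i q_i=1$; by Kraft's inequality the tests fit as leaves of a binary tree, giving a deterministic schedule in which test $i$ is probed exactly once every $2^{d_i}$ steps. The \emph{random} tree randomizes the placement of the tests among the tree nodes, so that the $c_d$ tests of depth $d$ covering a fixed $e$ sit at $c_d$ uniformly chosen distinct residues modulo $2^d$. Consequently the level-$d$ probes of $e$ have density $c_d2^{-d}$, and summed over levels the total $e$-probe density is $\Theta(Q_e)$ (recall $2^{-d_i}\in(q_i/2,q_i]$).

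The analysis fixes $e$ and a window of length $W=C(\log\nTofe+\log\nElements)/Q_e$. A test of depth $d$ covering $e$ misses the window with probability at most $1-W2^{-d}$ when $2^d>W$ and never misses it when $2^d\le W$, so using the negative correlation supplied by the random tree the window is empty of $e$-probes with probability at most $\exp(-\Theta(WQ_e))=(\nTofe\nElements)^{-\Theta(C)}$. To pass from this single-window estimate to the maximal gap I would first discard \emph{from the analysis only} (never from the schedule, where extra probes can only shrink gaps) those tests covering $e$ whose frequency is below $Q_e/\mathrm{poly}(\nTofe,\nElements)$; since at most $\nTofe$ tests cover $e$, their total frequency is at most $Q_e/2$, so the retained probes still have density $\Theta(Q_e)$ while all their periods, and hence the period of the combined retained process, are at most $\mathrm{poly}(\nTofe,\nElements)/Q_e$. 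This caps the number of candidate gaps per period at $\mathrm{poly}(\nTofe,\nElements)$, and a union bound over these gaps and over all $\nElements$ elements drives below a constant the probability that any gap exceeds $W$, provided $C$ is a large enough absolute constant. Finally, since a random tree meets the bound with constant probability, the R-Tree scheduler draws $O(\log)$ trees, evaluates $\MeMt$ on each cyclic schedule in polynomial time, and outputs the best, yielding a deterministic schedule with the stated D2M.

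The main obstacle is exactly the passage from the single-window estimate to a bound on the \emph{maximal} gap over all time. Naively the period, and therefore the number of windows one must union over, is governed by the least frequent test covering $e$ and can be exponential, which would only deliver $\log\nTests$ in place of the sharper $\log\nTofe$; it is the per-element truncation of low-frequency tests above, together with the fact that the tree forces the probes of $e$ to be spread at \emph{every} dyadic scale (no single level, of which there are at most $\nTofe$, leaving a long gap), that recovers the dependence on $\nTofe$ rather than $\nTests$. The part needing the most care is justifying the across-level independence (or negative correlation) that underlies the $\exp(-\Theta(WQ_e))$ tail under the true random-tree distribution, as opposed to an idealized model in which all test phases are fully independent.
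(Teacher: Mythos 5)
Your proposal is correct and follows essentially the same route as the paper's proof: seeding the random tree with the optimum of the LP \eqref{basicLP}, bounding a single window's miss probability by $\exp(-\Theta(W Q_e))$ via the negative-correlation argument, truncating per element the tests of frequency below roughly $Q_e/(2\nTofe_e)$ so that the effective period (and hence the union bound over windows) is polynomial in $\nTofe_e$ rather than governed by the rarest test, and finally union bounding over the $\nElements$ elements --- this is precisely the content of Lemma~\ref{onelink} (culminating in \eqref{upp}) and its application with $x = 8(\ln \nElements + \ln \nTofe_e)/Q$ in the paper. The only cosmetic difference is that the paper obtains a single tree that succeeds with probability $1-1/\nElements$, whereas you amplify by drawing several trees and keeping the best, which is consistent with the paper's general R-Tree scheduler.
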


From Theorems~\ref{thm:EMP} and~\ref{lemma:treememt}, we obtain the
following upper bounds on the D2M gap and efficiently construct
deterministic schedules satisfying these bounds
(summarized in Table~\ref{D2M:table}).
\begin{align*}
\text{opt$_D$-\EMP}  &= O(\log
\nTofe)  \text{opt-\SUMe} \\
\text{opt$_D$-}\MeMt &= O(\log \nTofe + \log\nElements)
\text{opt-}\MeMt
\end{align*}

We provide construction details of our R-Tree schedulers.  The
analysis, which includes the proofs of Theorems~\ref{thm:EMP} and~\ref{lemma:treememt}, is deferred to \ignore{Appendix }\ref{rtreedetails:sec}.
\subsection{Tree schedules} \label{treeschedules}

 A tree schedule is a deterministic schedule guided with frequencies
 $\boldsymbol{q}$ where probes to test
$i$ are spaced $[1/q_i,2/q_i)$ probes apart.
 When $q_i$ has the  form $q_i=2^{-j}$,  test $i$ is
performed regularly with period $2^j$.

 Assume for now that $q_i=2^{-L_i}$ for positive integer $L_i$ for all $i$.
 We map each $i$ to nodes  of a binary tree where $i$ is mapped to a
 node at level $L_i$ and no test can be a child of another.
 This can be achieved by greedily mapping
tests by decreasing level -- we greedily map tests with level $L_i=1$,
then tests with $L_i=2$ and so on.  Once a test is mapped to a node,
its subtree is truncated and it becomes a leaf.

 From this mapping, we can generate a deterministic schedule
as follows:  The sequence
is built on alternations between left and right child at each node.  Each node
``remembers'' the last direction to a child.
To select a test, we do as follows.  First visit the root and select the child that was not visited previous time.
If a leaf, we are done, otherwise, we recursively select the child that was not previously visited and continue.  This
until we get to a leaf.  We then output test $i$.  This process
changed ``last visit'' states on all nodes in the path from the root to
the leaf.  It is easy to see that if a leaf at level $L$ is visited
once every $2^L$ probes.  An example of a set of frequencies, a corresponding mapping,
and the resulting schedule is provided in Figure~\ref{treesched:fig}.

  If probabilities are of general form, we can map each test according to the highest
  order significant bit (and arbitrarily fill up the tree).  When doing this we
  get  per-test ratio between the actual and desired probing
  frequencies of at most $2$.
 Alternatively, we can look at the  bit
  representation of $q_i$--  separately map all ``$1$'' positions in
  the first few significant bits to tree nodes. In this
  case the average probing frequency of each test is very close
  to  $q_i$ but the maximum time between probes depends on the relation
between the tree nodes to which the bits of test $i$ are mapped to.
The only guarantee we have on the maximum
is according to the most significant bit
$2^ {-\lceil\log_2 (1/q_i) \rceil}$.
Under ``random''  mappings the
expectation of the maximum gets closer to the average.

\begin{figure}[t]
\centering
\begin{minipage}{2in}
\ifpdf
\includegraphics[width=0.85\textwidth]{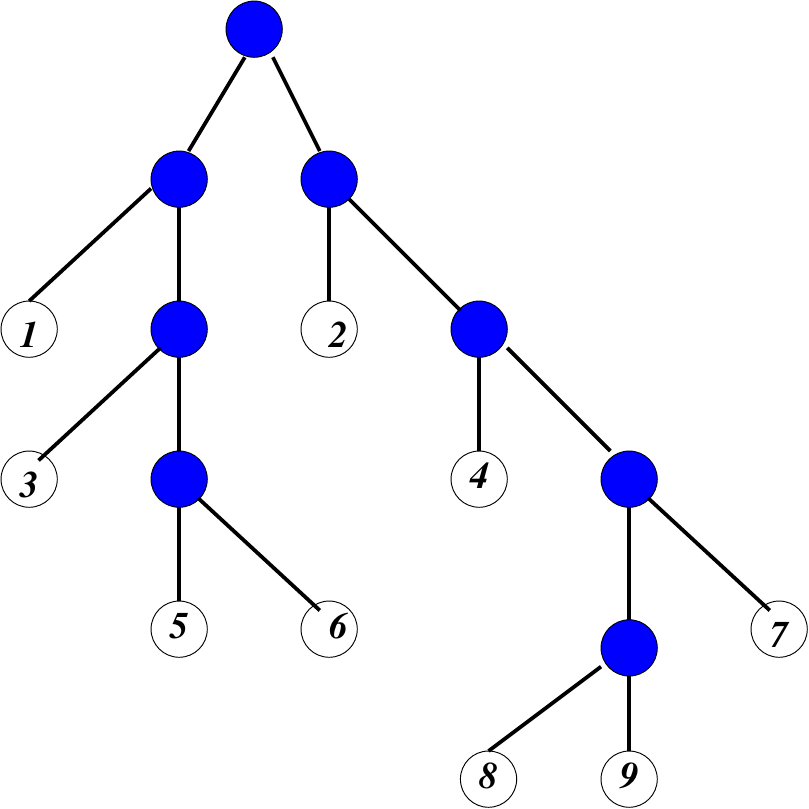}
\else
\epsfig{figure=tree.eps,width=0.25\textwidth}
\fi

\end{minipage}
\begin{minipage}{4in}
{\small

$q_1=q_2=1/4$, $q_3=q_4=1/8$, $q_5=q_6=q_7=1/16$, $q_8=q_9=1/32$

\smallskip
\begin{tabular}{c|rrrrrrrrrrrrrrrr}
\hline
L & \multicolumn{16}{l}{schedule} \\
\hline
2 & 1 & 2 & x  & x &   &   &   &   &   &   &   &   &   &   &   &  \\
\hline
3 & 1 & 2 & 3  & 4 & 1 & 2 & x & x &   &   &   &   &   &   &   &  \\
\hline
4 & 1 & 2 & 3  & 4 & 1 & 2 & 5 & x & 1 & 2 & 3  & 4 & 1 & 2 & 6 & 7 \\
\hline
5 & 1 & 2 & 3  & 4 & 1 & 2 & 5 & 8 & 1 & 2 & 3  & 4 & 1 & 2 & 6 & 7 \\
- & 1 & 2 & 3  & 4 & 1 & 2 & 5 & 9 & 1 & 2 & 3  & 4 & 1 & 2 & 6 & 7 \\
\hline
\end{tabular}
}
\end{minipage}
\caption{Mapping tests to nodes of a binary tree to produce a
  deterministic schedule.   The table shows the level-L schedule for
  $L=2,3,4,5$.   The full deterministic schedule cycles through the
  level-5 schedule. \label{treesched:fig}}
\end{figure}

\subsection{Random tree schedules}

Consider an instance and a memoryless schedule with frequencies
$\boldsymbol{q}$. We assume that $q_i$ have the form $2^{-L_i}$ for
positive integers $L_i$ (this is without loss of generality as we
can only look at the highest order bit and loose a factor of at most
2). We construct a tree schedule for $\boldsymbol{q}$ by mapping the
tests  to nodes randomly as follows.  We process tests by increasing
level. In each  step (level), all tests of the current level are
randomly mapped to the available tree nodes at that level.  After a
test is mapped to a node, its subtree is truncated.

  For each level $N$ (which can be at most the maximum  $L_i$), we can
  consider the {\em level-$N$ schedule}, which is a cyclic schedule of
  length $2^N$.   The schedule specifies the probes for all tests with
  level $L_i \leq N$, and leaves some spots ``unspecified''.

 We now specify the level-$N$ schedule of the tree. Consider a
completion of the tree to  a full binary one with $2^N$ leaves
(truncate everything below level $N$).
 Associate with each leaf $a$
a binary number $\overline{a}$ which contains a $0$ at digit $i$
(from right to left, i.e.\ the least significant  digit correspond
to the child of the root and the most significant digit corresponds
to the leaf itself)
 if the $i$th child on its path from the root is a left
child. We refer to $\overline{a}$  as the {\em position} of leaf
$a$.

We construct the sequence  by associating  test $i$ with all leaf
descendants  of the node containing it, and with  all the positions
of the sequence corresponding to these leaves. Putting it in another
words the level-$N$ schedule of the tree cycles through the leaves
$a$ at level-$N$ (of the completion  of the tree) according
 to the order defined by $\overline{a}$  and probes the test associated with each leaf.
A test with $q_i=2^{-L_i}$ is probed in regular intervals of
$2^{L_i}$.  The first probe is distributed uniformly at random from
$[0,2^{L_i}-1]$.

Level-$N$ schedules constructed from the same mapping  for different depths
  $N$ are consistent in the following sense:
The level
$N'> N$ schedule is $2^{N'-N}$ repetitions of the level-$N$ schedule
in terms of the tests specified by a level-$N$ schedule (those with
level $L_i\leq N$)
 and also specifies tests with $N< L_i\leq N'$.

\section{The Kuhn-Tucker scheduler} \label{greedy:sec}

The Kuhn-Tucker conditions on the optimal solution of our convex
program \eqref{basicconvex} imply that the values
$$r_i=\frac{\partial \sum_e \frac{p_e}{\sum_{i|e\in \test_i}
    q_i}}{\partial q_i} = -\sum_{e | e\in \test_i}
\frac{p_e}{(\sum_{j|e\in \test_j} q_j)^2}\ .$$
are balanced for different tests.
Based on that,
we suggest a deterministic greedy heuristic for \SUMe,
illustrated in Algorithm~\ref{kuhntucker:alg}.
For each element $e$, we track $x[e] \geq 1$ which is the elapsed number of
probes since $e$ was last probed.
We then choose the test $i$ with maximum
$\sum_{ e\in \test_i}  p_e x[e]^2$.

We conjecture that the KT schedule has $\EtEe$ which is at most twice the optimal.
 Viewing the quantity $\sum_e p_e  x[e]^2$ as ``potential''  the average
 reduction in potential is the $\EtEe$ of the sequence.
We do not
provide bounds on the approximation ratio, but test this heuristic
in our experiments.

\begin{algorithm} [h]
\begin{minipage}{3.8in}
\begin{algorithmic}
\Function{{\sc best-test}}{}
  \State $v\gets 0$
    \For {$\test\in \Tests$}
     \State $y\gets 0$
     \For {$e\in \test$}
        \State $y\gets y+ p_e x[e]^2$
     \EndFor
     \If {$(y>v)$}
       \State $b\gets \test$; $v\gets y$
     \EndIf
    \EndFor
  \Return $b$ \Comment{test with maximum $\sum_{ e\in \test_i}  p_e x[e]^2$ }
\EndFunction
\end{algorithmic}
\end{minipage}
\begin{minipage}{2.5in}
\begin{algorithmic}
\Function{{\sc KT-schedule}}{$\Elements,\boldmath{p},\Tests$}
    \For {$e \in \Elements$}
      \State $x[e]\gets 1$
    \EndFor
\While {True}
\State  $\test \gets ${\sc best-test}$()$
\State {\bf output}  $\test$
\State For $e$  let $x[e] \gets x[e]+1$
\For {$e\in \test$}
\State $x[e]\gets 1$
\EndFor
\EndWhile
\EndFunction
\end{algorithmic}
\end{minipage}
\caption{Kuhn-Tucker (KT) schedule \label{kuhntucker:alg}}
\end{algorithm}

 The KT scheduler can be deployed when priorities are modified on the go.
 This is in contrast to other schedulers which pre-compute the
 schedule .

\EC{
  Analyses:  The sum of the reductions in potential is the \EEP\ of
  the schedule we are generating.  Can we show it is within a constant
of the minimal \EEP?  can we show anything on the \EMP ?
This is simple and may perform well in practice, as it does not loose
the constants of the tree schedule.
}

\section{Experimental Evaluation}  \label{exper:sec}

\begin{table*}
\mbox{

\begin{tabular}{l|l|l|l|l}
\multicolumn{5}{c}{{\bf  \SUMe\ in memoryless schedulers:}}\\
algorithm  &   GN-U & GN-P & GN-Z  &  Clos \\
\hline
\hline
Convex & 95.66 &  59.72 & 25.92 & 32.02  \\
LP & 105.29 & 68.77 & 118.38 & 32.02 \\
Uniform & 229.16 & 72.27 & 260.46 &  33.00 \\
\hline
SAMP SC & 111.56 &  82.70 & 86.17& 32.00  \\
SAMP KT & 108.54 & 61.45 & 86.17 & 32.00 \\
\hline
\hline
\end{tabular}

$\quad\quad$

\begin{tabular}{l|l|l|l|l}
\multicolumn{5}{c}{{\bf $\EtEe$ in deterministic schedulers:}}\\
algorithm  &   GN-U & GN-P & GN-Z  &  Clos \\
\hline
\hline
SC     & 60.27  &  49.42 & 51.52 & 16.50 \\
KT     & 58.04  &  33.93 & 14.63 & 16.50 \\
RT CON & 66.43  &  49.21 & 17.92 & 30.76 \\
RT LP  & 85.47  &  63.81 & 88.07 & 31.00 \\
\hline
RT-S CON & 57.87  &  46.91 & 18.69 &    \\
RT-S LP  & 59.70  &  50.24 & 87.47 &    \\
\hline
\hline
\end{tabular}
}

\vspace{0.3in}

\mbox{
\begin{tabular}{l|l|l|l|l}
\multicolumn{5}{c}{{\bf $\MtEe$ in deterministic schedulers:}}\\
algorithm  &   GN-U & GN-P & GN-Z  &  Clos \\
\hline
\hline
SC     & 70.43  &  62.08 & 93.80 & 16.50  \\
KT     & 70.04  &  62.08 & 20.36 & 16.50  \\
RT CON & 72.23  &  56.53 & 24.65 & 36.05 \\
RT LP  & 95.81  &  73.34 & 113.47& 36.20 \\
\hline
RT-S CON & 60.08  &  50.02 & 23.38 &    \\
RT-S LP  & 63.09  &  53.82 & 96.67 &    \\
\hline
\hline
\end{tabular}

 $\quad\quad$

\begin{tabular}{l|l|l|l|l}
\multicolumn{5}{c}{{\bf $\EeMt$ in deterministic schedulers:}}\\
algorithm  &   GN-U & GN-P & GN-Z  &  Clos \\
\hline
\hline
SC     & 124.93  &  109.46 &  114.29 & 32.00  \\
KT     & 130.11  &  93.02  &  35.34  & 32.00   \\
RT CON & 180.00  &  179.91 &  53.40   & 144.14 \\
RT LP  & 319.12  &  261.65 &  269.01 & 146.70 \\
\hline
RT-S CON & 121.24  &  103.91 & 42.61  &    \\
RT-S LP  & 123.35  &  107.42 & 183.89 &    \\
\hline
\hline
\end{tabular}
}
\caption{\SUMe\ objectives.  Table shows expected time with memoryless
schedules (same for all \SUMe\ objectives) and $\EeEt \leq \MtEe \leq
\EeMt$ on different deterministic schedulers. \label{sume:table}}
\end{table*}


We evaluated the performance of our schedulers for testing for silent
link failures in two networks. The first is a
 backbone network (denoted GN in the sequel) of a large enterprise.
We tested 500 of the network links with 3000 MPLS paths going through them.

The second network we considered is a (very regular) folded Clos
network (denoted Clos) of 3 levels and 2048 links.  On this network we
considered all paths between endpoints.  The Clos network is a
typical interconnection network in data centers.

\ignore{
is detecting  silent hardware failures in
routers.  This is  modeled by representing each router with
$k$ interfaces as a full $k$ bipartite graph and considering the network over
these interfaces where links connect
appropriate interfaces of different routers.  The paths we consider
are all ``routable'' packet trajectories.
For this scenario we use
a (very regular) folded Clos network~\footnote{See http://en.wikipedia.org/wiki/Clos_network.}
(referred to as Clos) of 3 levels and 2048
links, together with  all possible paths between endpoints.
}





 For the Clos network, we only considered uniform weights (priorities), 
meaning that all links are equally important.
For the GN network, we considered uniform weights (denoted GN-U), 
weights that are 
 proportional to the number of MPLS paths traversing the link (GN-P,
 where P designates popularity), and 
Zipf distributed weights with parameter 1.5 (GN-Z).


On these four networks (links and paths with associated weights),
Clos, GN-U, GN-P, and GN-Z, 
we simulated our schedulers and evaluated their performance with respect to
the different objectives.

\medskip
\noindent
{\bf Memoryless schedulers:}
We  solved the 
convex program \eqref{basicconvex} for \SUMe\ objectives  and the LP
\eqref{basicLP} for \MAXe\ objectives to obtain optimal memoryless
probing frequencies $\boldsymbol{q}$.
These optimization problems were solved using

Matlab (for the LP) and CVX (for the convex program, see {\tt
  http://cvxr.com/cvx/}).

  We compared these optimal memoryless schedules to other memoryless
schedules obtained using three naive selections of probing frequencies:  the first is
uniform probing of all paths (Uniform), the second is uniform probing of a 
smaller set of paths that cover all the links (SAMP SC), and the third is probing
according to frequencies generated by the Kuhn-Tucker schedule (SAMP
KT).

 The performance of these schedules, in terms of the expected
 detection times $\text{T}(e,t)$ is  shown in Table~\ref{sume:table}
 (\SUMe\ objective) and Table~\ref{maxe:table} (\MAXe\ objective).
The schedulers optimized  for one of the 
objectives,  \SUMe\ or  \MAXe,   clearly dominate all others with
respect to the objective it optimizes.
We can see that 
while on some instances the alternative schedulers perform close to
optimal, 
performance gaps can sometimes be substantial.  In particular,  a
schedule optimized for one objective can perform poorly 
with respect to the other objective.
We note, however, that our unified treatment 
facilitates designing schedules which trade off performance with
respect to two objectives.



We illustrate the qualitative difference  between the \SUMe\ and \MAXe\
objectives through Figure~\ref{sumvmax:fig} (A).  The figure shows  a reverse CDF
of $\text{T}(e,t)$, the expected time to detect a failure of a link of the
backbone network with uniform weights (GN-U).
(Recall that $\text{T}(e,t)$ is fixed for all $t$ for memoryless
schedules.)
Given a reverse CDF of a schedule, 
the maximum point on the curve is the \MAXe\ of the schedule
whereas the average value (area under the curve)  is the \SUMe\ of the schedule.
 We can see that the schedule computed by the LP \eqref{basicLP},
 which optimizes \MAXe\ has a smaller maximum whereas the schedule
computed by the convex program \eqref{basicconvex} has a smaller area.

\medskip
\noindent
{\bf Deterministic schedulers:}
We now evaluate our deterministic schedulers.   Here, $\text{T}(e,t)$, the elapsed time from time $t$  till the next path
 containing $e$ is scheduled,  is deterministic.
 We used two different implementation of the R-Tree
algorithm (Section~\ref{treeschedules:sec}).  In the first, the algorithm
was seeded with the frequencies computed by the LP (RT LP) or by the convex program
(RT CON) when applied to the full set of paths.  We discuss the second implementation
in the sequel.
 We also implemented
 the Kuhn-Tucker (KT)
 scheduler (Section~\ref{greedy:sec}), and the classic greedy Set Cover
 algorithm (SC) which was previously used for the $\MeMt$ metric \cite{ZhengCao:IEEEtComp2012,NguyenTTD:infocom09,ZKVM:conext2012}
 (minimum set cover is the optimal deterministic scheduler for $\MeMt$ when priorities are uniform).
 This scheduler cycles through a
 sequence consisting of this set cover.

Table \ref{sume:table} shows the values of all \SUMe\ objectives for
the different memoryless and deterministic schedulers and Table
\ref{maxe:table} shows the same for the \MAXe\ objectives. It is easy
to verify the relations between the three different \SUMe\ objectives
and three different \MAXe\ objectives
(see Lemma \ref{basicrel}).   The gaps between the objectives show
again that an informed selection of the objective is important.
We can also see that with uniform priorities (GN-U and Clos) 
the SC scheduler performs well.  Indeed, 
in this case
minimum set cover
produces the optimal deterministic schedule for $\MeMt$ and $\EtMe$. 
  When priorities are
highly skewed, however, as is the case for GN-Z, its performance deteriorates.

The KT scheduler performed well on the \SUMe\ objectives, which it is
designed for.  Because of its adaptive design,  which does not involve
precomputation of a fixed schedule, the KT scheduler is highly suitable for applications where priorities
are changing on the go.  One such scenario is when priorities of
different elements correspond to the current
traffic levels traversing the element.  The KT scheduler gracefully
adapts to changing traffic levels.

Our R-Tree schedulers (RT CON and RT LP) did not perform well on some of the instances, 
and in some cases, performed worse than 
SC and KT.   The reason, as the analysis shows (see Section \ref{treeschedules:sec}), is the logarithmic dependence on
$\ell$, which in our case, is the maximum number of paths used to cover an element
in the solution of the LP and convex programs.   The collection of
paths computed by the LP and Convex solvers 
turned out to have  high redundancy, where subpaths have many
alternatives and the fractional solvers tend to equally use all
applicable paths.  We can see evidence for this fragmentation in Figure  \ref{sumvmax:fig}.


To address this issue,  we seeded the R-Tree algorithm with respective solutions
of the LP and Convex programs applied to a modified instance with 
a pre-selected small subset 
of the original paths.  The subset was picked so that it contains a cover of the links and
also tested to ensure that  the objective of the optimization problem does not
significantly increase when implementing this restriction.   On those instances, 
tests which constitute a set cover of the links and
produced by the greedy approximation algorithm, performed well.
We denote the respective schedulers obtained this way using the LP and convex solutions,
 by RT-S LP and RT-S CON.



The results of this experiment  are included in
Tables \ref{sume:table} and \ref{maxe:table}. We
can observe that this heuristic substantially improves the performance of the R-Tree
algorithm  for all objectives.  Moreover, RT-S was never worse than
SC, and when SC was not optimal, substantially improved over SC.
We  leave the question
of how to choose the subset to best balance the loss in the objective of the
memoryless schedule with the gain in better derandomization for
further research.



\medskip
\noindent
{\bf Memoryless vs. Deterministic:}
Memoryless schedulers are stateless and highly suitable for
distributed deployment whereas deployment of
deterministic schedulers requires some coordination between probes 
initiated from different start points.
 However, due to their stochastic nature, with memoryless scheduling
we can only obtain guarantees on the expectation whereas 
with deterministic schedulers we can obtain
worst case guarantees on the time (or weighted cost) until a failure
is detected.
We demonstrate this issue by illustrating, in
 Figure~\ref{detVsRTree:fig} (B)  the distribution over the
links of the backbone graph of the maximum detection time in the
deterministic R-Tree scheduler, $\Mt[e]$, and the 99th percentile
line  for the
memoryless schedulers (elapsed time to detection in
99\% of the time).
Figure \ref{detVsRTree:fig} (C) shows the same data for the
schedulers RT-S LP and RT-S CON which were derived after restricting
the set of paths over which optimization was performed.
 One can see that when there are strict requirements on  worst-case detection
 times,  deterministic schedules dominate.

Moreover, even when comparing expected (memoryless) versus worst-case
(deterministic) detection times, we can see that our best deterministic schedulers often have
$\EeEt$, $\MtEe$, and $\MeEt$ 
detection times that are $20\%-50\%$ smaller
than the respective memoryless optimum.  
Our analysis shows 
(Section \ref{memvdet:sec}) that on these objectives it is possible
for the optimal deterministic
detection times to be up to a factor of 2 smaller
than the respective memoryless optimum.   On the remaining
objectives, the deterministic optimum can not be better than the
memoryless one and can be much worse (asymptotically so).
Recall that 
while the memoryless optimum can be precisely computed, the
deterministic optimum  is NP hard to compute (Lemma \ref{NPhard}).
Therefore, these relations tell us that in many cases our best deterministic
schedules obtained nearly optimal schedules.


%

\begin{table*}

\mbox{
\begin{tabular}{l|l|l|l|l}
\multicolumn{5}{c}{{\bf  \MAXe\ in memoryless schedulers:}}\\
algorithm  &   GN-U & GN-P & GN-Z  &  Clos \\
\hline
\hline
Convex & 221.53 &  21.81 & 6.85 & 32.02  \\
LP & 132.05 &  12.65 & 2.67 & 32.02  \\
Uniform & 2787 & 12.73 & 249.28 &  34.00 \\
\hline
SAMP SC & 143.00 &  53.65 & 72 & 32.00  \\
SAMP KT & 243.00 & 22.74 & 72 & 32.00 \\
\hline
\hline
\end{tabular}

$\quad\quad$

\begin{tabular}{l|l|l|l|l}
\multicolumn{5}{c}{$\MeEt$ in deterministic schedulers}\\
algorithm  &   GN-U & GN-P & GN-Z  &  Clos \\
\hline
\hline
SC     & 72.00   &  43.41 & 48.17 & 16.50  \\
KT     & 122.00  &  11.97 & 4.28  & 16.50  \\
RT CON & 162.00  &  20.15 & 5.31  & 40.61 \\
RT LP  & 173.90  &  18.92 & 2.91  & 40.53 \\
\hline
RT-S CON & 92.50  &  22.15 & 4.90 &    \\
RT-S LP  & 71.50  &  22.16 & 1.90 &    \\
\hline
\hline
\end{tabular}
}

\vspace{0.3in}

\mbox{
\begin{tabular}{l|l|l|l|l}
\multicolumn{5}{c}{$\EtMe$ in deterministic schedulers}\\
algorithm  &   GN-U & GN-P & GN-Z  &  Clos \\
\hline
\hline
SC     & 142.99 &  54.02 & 50.80 & 32.00  \\
KT     & 234.30 &  34.02 & 4.54  & 32.00  \\
RT CON & 345.85 &  55.26 & 6.12  & 147.71 \\
RT LP  & 531.12 &  65.31 & 7.05  & 156.80 \\
\hline
RT-S CON & 182.78  &  42.69 & 6.01 &    \\
RT-S LP  & 142.00  &  43.22 & 3.21 &    \\
\hline
\hline
\end{tabular}

$\quad\quad$

\begin{tabular}{l|l|l|l|l}
\multicolumn{5}{c}{$\MtMe$ in deterministic schedulers}\\
algorithm  &   GN-U & GN-P & GN-Z  &  Clos \\
\hline
\hline
SC     & 143.00  &  95.02 & 113.00 & 32.00  \\
KT     & 243.00  &  35.65 & 9.00   & 32.00  \\
RT CON & 468.00  &  85.72 & 24.00  & 257.00 \\
RT LP  & 833.00  &  97.79 & 13.95 & 225.00 \\
\hline
RT-S CON & 184.00  &  50.00 & 14.00 &    \\
RT-S LP  & 142.00  &  54.00 & 5.00  &    \\
\hline
\hline
\end{tabular}
}
\caption{\MAXe\ objectives.  Table shows expected time with memoryless
schedules (same for all \MAXe\ objectives) and $\MeEt \leq \EtMe  \leq
\MeMt$ on different deterministic schedulers. \label{maxe:table}}
\end{table*}

\begin{figure*}[t]
\centering
\begin{tabular}{lll}
\ifpdf
 \includegraphics[width=0.3\textwidth]{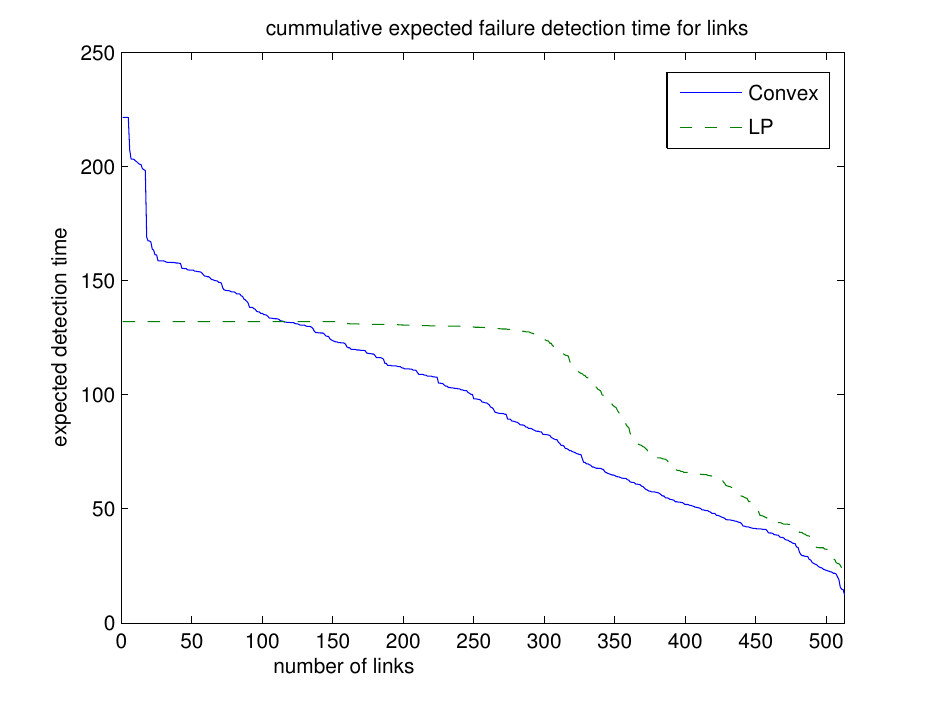}
\else
\epsfig{figure=COnvandLPDistr.eps,width=0.3\textwidth}
\fi
&
\ifpdf
 \includegraphics[width=0.3\textwidth]{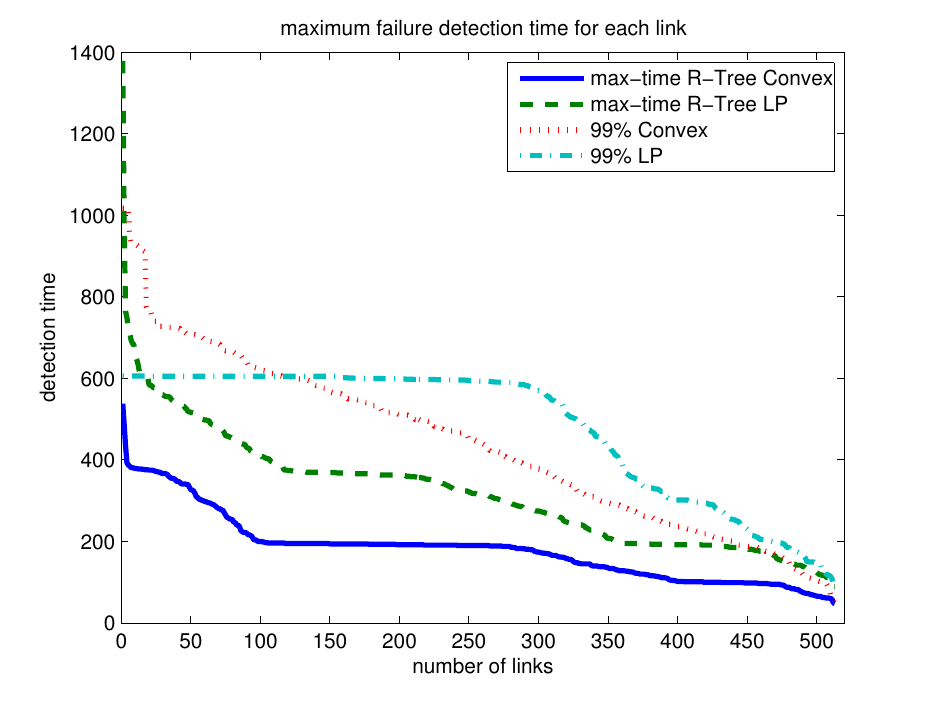}
\else
\epsfig{figure=det-Vs-Rand.eps,width=0.3\textwidth}
\fi
&
\ifpdf
 \includegraphics[width=0.3\textwidth]{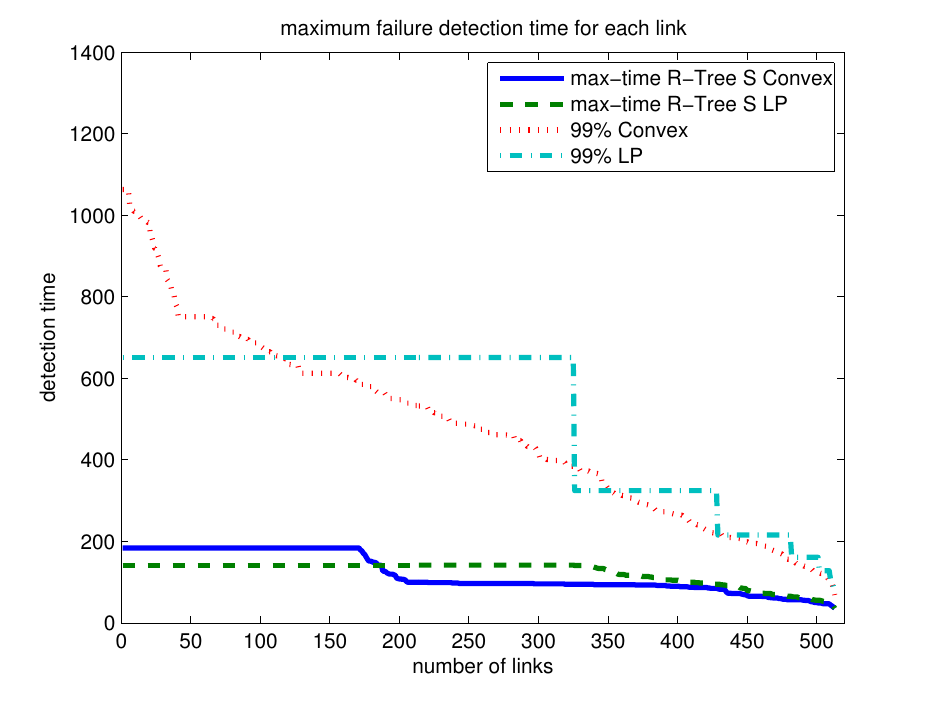}
\else
\epsfig{figure=S-det-Vs-Rand.eps,width=0.3\textwidth}
\fi
\\ 
(A) & (B) & (C) 
\end{tabular}
\caption{(A):  Distribution of time to detect a fault of a link in
  GN-U (the backbone network with uniform priorities).
   (B)-(C):Distribution of time to detect a fault: RT LP and RT CON
  (deterministic) vs. memoryless   over GN-U.  (B): RT LP and RT
  CON  (C): RT-S LP and RT-S CON \label{detVsRTree:fig}\label{sumvmax:fig}}
\end{figure*}

\ignore{
TO DO:

-- Decide on graphs and generate them from real data
    in a format that can be read by the convex program (see   i
    next step)
-- get a convex solver, and find the optimal probabilities ($q_i$s)
-- compute $T_{RAND}(e)$ and $T_{OPT}(e)$
-- implement the deterministic algorithm and compute $T_{DET}(e,t)$.
-- present results.
}

\section{Extension to probabilistic tests} \label{probtest:sec}

A useful extension of our
model allows for
a  probability $\pi_{ei}$ that depends on $i$
and $e$ that a failure to $e$ is found with test $i$.  We assume that
different probes invoking the same or different tests are independent.
Probabilistic tests can model
ECMP (equal cost multi-paths) and transient (inconsistent) failures:
Transient  failures are modeled by
a fixed probability $\pi_{ei}\in (0,1]$ of packet loss.
Tests under ECMP  are modeled by
$\test_i$ being a unit flow between the origin and destination that
defines a probability distribution over tests, where the ``flow''
traversing $e$ is $\pi_{ei}$.

  With probabilistic tests, we may as well use stochastic schedules,
  in particular, memoryless schedules, which also offer strong
  guarantees on the variance of detection times.  Our models and
results for memoryless schedules  have straightforward extensions
to probabilistic tests.
The convex program for opt$_M$-\SUMe\ can be modified to incorporate probabilistic
tests if we replace in \eqref{basicconvex}
$\sum_{i|e\in \test_i}  q_i$  by $\sum_i \pi_{ei}  q_i$.
The LP for opt$_M$-\MAXe\ can be modified by replacing  in
\eqref{basicLP} for each element $e$
$\sum_{i|e\in \test_i} q_i$ by $\sum_{i} \pi_{ei} q_i$.

\section{Related work} \label{sec:related}

   This basic formulation of failure detection via probes applies in multiple
  network scales, from backbone networks to data
  centers~\cite{ZhengCao:IEEEtComp2012,NguyenTTD:infocom09}.
A recent application is
testing of all forwarding rules in a software-defined
network~\cite{ZKVM:conext2012}.
 Beyond the detection of network failures,  the fundamental optimization problems we study model classic and
  emerging resource replication  and capacity  allocation problems.

Previous considerations of the detection problem for network failures
focused on \MAXe\ objective when all elements
 have equal importance (uniform priorities) \cite{ZhengCao:IEEEtComp2012,NguyenTTD:infocom09,ZKVM:conext2012}.
In this particular case, deterministic scheduling is equivalent to finding a
minimum size set of tests which covers all elements, which is
the classic set covering problem.
The optimal memoryless schedule is a solution of a simplified LP,
which computes an optimal fractional cover.
 In practice, however, some elements are much more
critical than others, and the uniform modeling does not capture that.
Ideally, we would like to specify different detection-time targets for
failures which depend on the criticality of the element.
A set cover based
deterministic schedule, however, may
perform poorly when elements have different priorities and there was
no efficient algorithm for constructing good deterministic schedules.
Moreover, the \SUMe\ objectives, which were not previously considered for network
failure detection application, constitute a
natural global objective for overall performance, for example, when
elements have associated fail probability, \SUMe\ minimization
corresponds to minimizing expected failure detection time.

 The special case of {\em singletons} (each test contains a  single
 element) received considerable attention and models
 several important problems.   The \SUMe\ objective on memoryless
 schedules is the subject of Kleinrock's well known ``square root law''~\cite{kleinrock76queueing}.
Scheduling for Teletext~\cite{AmmarWong:TOC1987} and broadcast disks~\cite{AAFZ:sigmod1995}, can be formulated as
deterministic scheduling of singletons. Both $\EeEt$ and $\MeMt$
objectives were considered.
  Our Kuhn-Tucker 
scheduler for \SUMe\ generalizes a classic algorithm for
singletons~\cite{VaidyaHameed:mobicom1997,BBNS:periodic2002}
which has a factor 2 approximation for the $\EeEt$~\cite{BBNS:periodic2002}.
Bar-Noy et al.  \cite{BBNS:periodic2002,BDP:periodic2004}
established a  gap $\leq 2$ between the optimal
deterministic and memoryless schedules, this is in contrast to the
difficulty of general subset tests, where we show that gaps can be asymptotic.
Interestingly, however, even for singletons, $\MeMt$
optimal deterministic scheduling is NP hard \cite{BBNS:periodic2002}.
Several approximation algorithms were proposed for deterministic scheduling~\cite{ksy:stoc2000,BBNS:periodic2002,BDP:periodic2004}.
In particular, Bar-Noy et al.  \cite{BBNS:periodic2002,BDP:periodic2004}
proposed tree-schedules, which are an ingredient in our R-Tree
schedule constructions,  as a representation of deterministic
schedules.
  Memoryless schedules with respect to the
\SUMe\ objective modeled  replication or distribution
of copies of resources geared to optimize the success probabilities or
search times in unstructured p2p networks~\cite{CohenShenker02}.
 Our convex program formulation extends  the solution to a
 natural situation where each test (resource) is applicable to
 multiple elements (requests).

Lastly, our focus here is continuous testing,
which is performed as a background process, but it is also natural to consider
{\em one-time} testing, where a schedule is designed to
be executed once \cite{FLT:approx02,cfk:soda03}. In \cite{onetimesubset:2013} we study the
relation of one-time and continuous testing.


\section*{Conclusion}
We study the fundamental problem of continuous
testing using subset tests.  Our study is comprehensive and unifies models and algorithms.
We reveal the relations between different
objectives and between stochastic and deterministic schedules and
propose efficient scheduling algorithms with provable performance
guarantees. 
For the important application of probe scheduling for silent failure
detection, we conduct
simulations of our algorithms on realistic networks and demonstrate their
effectiveness in varied scenarios. 
Beyond silent failure detection,  we believe the optimization problems we address
and our scheduling algorithms  will find applications in other
resource allocation domains.

\section*{References}
\bibliographystyle{elsarticle-num}
\bibliography{blackholes}

\appendix

\section{R-Tree schedules analysis}\label{rtreedetails:sec}


\subsection{Tree schedules for Singleton tests}
For a given instance,
the best  D2M we can hope for is when the
 deterministic scheduler is able to perform each test in precise
  intervals of $1/q_i$, which results, for singletons instances,  in maximum probing interval of
  $1/q_i$.   Tree schedules achieve this when $q_i=2^{-L_i}$ for all
  $i$.    A deterministic tree schedule for singletons has D2M that is
  at most $2$, and therefore, for all our objectives, the
 D2M gap is at most $2$.

The $\MeMt$ D2M gap and the \EMP\ D2M gap, however,  are {\em exactly}
$2$.
Consider an instance with two elements
one with priority $p_1=1-\epsilon$ and the other with priority
$p_2 = \epsilon$.

Consider $\MeMt$.   The optimal memoryless schedule
\eqref{basicLP} has
$q_1 = 1-\epsilon$ and $q_2=\epsilon$  and $\max_e p_e \Mt[e]= 1$.
Whenever there are at least two elements with positive priorities,
any deterministic scheduler has $\Mt[e]\geq 2$ for all elements.
Therefore, the $\MeMt$
of any deterministic schedule
 is at least $2$ and the D2M is at least $2$.

Consider \EMP.
The optimal memoryless schedule \eqref{basicconvex}
has
$q_1=\frac{\sqrt{1-\epsilon}}{\sqrt{1-\epsilon}+\sqrt{\epsilon}}$
 and $q_2 = \frac{\sqrt{\epsilon}}{\sqrt{1-\epsilon}+\sqrt{\epsilon}}$
and the $\EeMt= p_1/q_1 + p_2/q_2 = (\sqrt{1-\epsilon}+\sqrt{\epsilon})^2 \approx 1$.
A deterministic schedule has
$\Mt[e]\geq 2$ for both elements and thus  $\EeMt = p_1 \Mt[1] +
p_2 \Mt[2] =2$. It follows that the
\EMP\ D2M ration is $\ge 2 -\epsilon$  for any small $\epsilon > 0$.

 Several deterministic schedules for
singletons with ratio at most $2$ (and better than $2$ when possible  for
the particular instance, in particular when priorities are small)
were previously proposed \cite{BBNS:periodic2002,BDP:periodic2004}.
Tree schedules are of interest to us here because they can be
``properly'' randomized to yield good performance
in our treatment of general instances.





\subsection{ R-Tree schedules for subset tests}  \label{rtreeanalysi}

For a single element $e$, we
 analyze the
expected
 (over our randomized
construction of a deterministic tree schedule) maximum probe interval in
the deterministic schedule.  We show
\begin{lemma} \label{onelink}
The expected maximum is $\Theta(\log \nTofe_e)/Q_e$, where
$\nTofe_e=\{ i \mid e\in \test_i\}$ . I.e., for any element $e$,
$$E_{alg}[ \max_t \text{T}(e,t)]\leq c\log (\nTofe_e)/Q_e\ ,$$ where $\text{T}(e,t)$
is the elapsed time from time $t$ until  $e$ is probed.
\end{lemma}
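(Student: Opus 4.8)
As in the construction I may assume $q_i=2^{-L_i}$ (losing at most a factor $2$), and I analyze one period of the cyclic tree schedule. Let $M=\max_{i\mid e\in \test_i} L_i$; the positions at which $e$ is probed form a union of arithmetic progressions with steps $2^{L_i}$ and (randomized) phases, so the $e$-probe pattern is periodic with period $2^M$ and has density exactly $Q_e=\sum_{i\mid e\in \test_i} q_i$ (the progressions are disjoint, one leaf per test). Writing $\text{T}(e,t)$ for the elapsed time to the next $e$-probe, I would use the tail-sum identity $E_{algo}[\max_t \text{T}(e,t)]=\sum_{g\ge 1}\Pr[\text{some length-}g\text{ window is empty of }e\text{-probes}]$. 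A key preliminary observation is a deterministic cap: the single most frequent test covering $e$, of frequency $q^{\ast}=\max_{i\mid e\in \test_i} q_i\ge Q_e/\nTofe_e$, already probes $e$ at least every $1/q^{\ast}$ steps, so $\max_t \text{T}(e,t)\le 1/q^{\ast}\le \nTofe_e/Q_e$ always. Hence only window lengths $g\le 1/q^{\ast}$ contribute to the sum.

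\textbf{Single-window bound.} Next I would fix one window $W$ of length $g$ and bound $\Pr[W\text{ empty}]$. Any test of period $2^{L_i}\le g$ hits $W$ for sure, so only tests with $q_i<1/g$ can fail to cover $W$; for such a test the random mapping places its node (and hence its phase) almost uniformly, giving a per-test miss probability at most $1-gq_i$. The plan is to establish, by conditioning on the placements of the more frequent tests and then invoking negative association across the remaining tests, the product bound $\Pr[W\text{ empty}]\le \prod_{i\mid e\in \test_i}(1-gq_i)\le \exp(-gQ_e)$.

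\textbf{The union bound --- main obstacle.} Combining the two previous steps, I want to sum $\min\!\bigl(1,\ (\#\text{windows})\cdot e^{-gQ_e}\bigr)$ over $g\le 1/q^{\ast}$, and choosing the crossover at $g^{\ast}=\Theta(\log \nTofe_e)/Q_e$ would give $E_{algo}[\max_t \text{T}(e,t)]=O(\log \nTofe_e)/Q_e$. The difficulty, and the crux of the whole lemma, is that a naive union bound over all $2^M$ windows introduces a factor $\log(2^M)=\Theta(M)$ rather than $\log \nTofe_e$, and $M$ can be arbitrarily larger than $\log \nTofe_e$. The periodicity of each test is precisely what must be exploited to beat this: because each test's probes are evenly spread (equivalently, via the recursive interleaving $S=\mathrm{interleave}(S_L,S_R)$ of the tree schedule), a long empty window cannot occur in isolation but must be empty \emph{simultaneously} in several independently-phased sub-patterns, which reduces the effective number of genuinely distinct candidate windows from $2^M$ down to $\mathrm{poly}(\nTofe_e)$. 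Carrying this reduction out rigorously, while accounting for the correlations created by subtree truncation in the random mapping, is the step I expect to be hardest. Finally, the matching $\Omega(\log \nTofe_e)/Q_e$ bound (giving the stated $\Theta$) is the classical ``longest gap among $\nTofe_e$ random points'' estimate, realized when all tests covering $e$ sit at the finest level and each is a single randomly placed leaf.
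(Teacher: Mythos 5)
Your setup reproduces the paper's proof almost exactly up to the crux: the reduction to bounding the longest window empty of $e$-probes, the negative-correlation argument giving the single-window bound $\prod_i (1-gq_i)\le \exp(-gQ_e)$, and the tail-sum conversion to an expectation bound are precisely the ingredients the paper uses. But at the step you yourself flag as the main obstacle --- avoiding the union bound over all $2^M$ window positions --- your proposal stops at a conjecture, and the mechanism you conjecture (that a long empty window must be empty simultaneously in several independently-phased sub-patterns, so the number of ``genuinely distinct'' windows drops to $\mathrm{poly}(\nTofe_e)$) is neither carried out nor what makes the argument work. As written, your argument only yields $E_{algo}[\max_t \text{T}(e,t)] = O(\min(M,\nTofe_e))/Q_e$ (union bound over $2^M$ positions, truncated by your deterministic cap $1/q^{\ast}\le \nTofe_e/Q_e$), which can be far from $O(\log \nTofe_e)/Q_e$. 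So there is a genuine gap.

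The missing idea is a frequency truncation, not a refined count of windows. Restrict attention to the subset $S=\{i \mid e\in \test_i,\ q_i\ge Q_e/(2\nTofe_e)\}$ of tests covering $e$. Discarding the low-frequency tests costs at most half the rate, $Q_S=\sum_{i\in S}q_i\ge Q_e/2$, but it caps the levels: every $i\in S$ has $L_i\le N:=\lceil \log_2(2\nTofe_e/Q_e)\rceil$, so the probe pattern of the tests in $S$ is periodic with period $2^N=O(\nTofe_e/Q_e)$, independent of $M$. Since a window empty of all $e$-probes is in particular empty of $S$-probes, it suffices to bound the longest $S$-empty window, and the union bound now ranges over only $2^N/D$ aligned windows of length $D$. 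Taking $D=x/2$, the probability that some (unaligned) window of length $x$ is $S$-empty is at most $\frac{2^{N+1}}{x}\exp(-Q_S x/2)\le \frac{8\nTofe_e}{xQ_e}\exp(-xQ_e/4)$, which at $x=8\ln (\nTofe_e)/Q_e$ is at most $1/(\nTofe_e \ln \nTofe_e)\le 1/2$; the exponential decay of this bound in $x$ then gives $E_{algo}[\max_t \text{T}(e,t)]=O(\ln \nTofe_e)/Q_e$ by exactly the tail sum you wrote. This truncation also sidesteps the correlation worries you raise about subtree truncation: the only probabilistic input needed is the same negative-correlation product bound you already have, applied to the tests in $S$ within one period of length $2^N$. (Two minor notes: your deterministic cap via $q^{\ast}$ plays only the small role it plays in the paper, namely fixing the admissible range of $N$; and the $\Omega(\log \nTofe_e)/Q_e$ direction you sketch at the end is not part of the paper's proof, which establishes only the stated upper bound.)
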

\begin{proof}
Given a level $N$ schedule,
we say that a subinterval of $[0,2^N-1]$ is {\em hit by a test} if contains
  a leaf  of the test.
 We say it is hit by an element $e$ if it is hit by at least one test containing
 the element.

   Consider a  particular element $e$.  We now look only at the tests
   which include the element.  To simplify notation, let $q_i$, $i\in [\nTofe_e]$ be
the frequencies of these tests, let $Q=\sum q_i$, and $q_{\max}=\max_i q_i$.

   We consider the schedule for some level 
$$N \in [\log_2
   (\frac{1}{q_{max}}),\max_i L_i]\ .$$  We will make a precise choice of $N$
   later on.

Note that any interval of size $\geq 1/q_{\max}$ must be hit by the test
with maximum frequency.    We are now looking to bound the
distribution of the  size of the largest interval that is
not hit.

Consider now a subinterval $\subset [0,2^N-1]$ of size $D < 1/q_{\max}$.
We can assume that $D=2^j$ for some $j$ and the interval left endpoint
is an integral multiple of $D$.

  We  upper bound the probability that the interval it is not hit by $e$.
 The probability that it is not ``hit''
by a test with frequency $q_i$ is $q_i D$.  These probabilities of
not hitting the interval by different tests are negatively
correlated:  conditioned on some of the tests not hitting the
interval, it only makes it more likely that other tests do hit the
interval   -- hence,  the probability that the interval is not hit
by any test is at most  the product  $\prod_i (1-q_i D)$, which in
turn is bounded from above by $\prod_i (1-q_i D) \leq \exp(- \sum
q_i D)= \exp(-Q D)$.

  We now upper bound the probability that there exists at least one
 subinterval of size $D=2^{j}$ and left endpoint that is an
 integral multiple of $D$, that is not hit by any test.  We do a union
 bound on $2^N/D$ intervals of this property and this probability is
 at most

\begin{equation} \frac{2^N}{D} \exp(-Q D) \ .\label{unionD}
\end{equation}

 Note that if using $D=\frac{x}{2}$, this upper bounds the probability that there exists an
 interval of size $x$ that is not hit (without restrictions on
 endpoints).  This probability, in terms of $x$, is
\begin{equation} \frac{2^{N+1}}{x} \exp(- Q x/2) \label{gapx}
\end{equation}



   We now restrict our attention to a subset $S$ of the tests which
   satisfy $q_i \geq \frac{Q}{2\nTofe_e}$.   We have $Q_S \equiv \sum_{i\in
     S} q_i  \geq Q/2$.
 We now look only at the tests in $S$. since this is a subset of the tests
that include $e$, it is sufficient to bound the expectation of the
largest open interval with respect to these tests. Since the highest
level in $S$ is  $N=\lceil \log_2 (2 \nTofe_e/Q) \rceil \leq 1+
\log_2 (\nTofe_e/Q) $, we can look at the level $N$ schedule.  We
substitute this $N$ and $Q_S \geq Q/2$ in \eqref{gapx} we obtain
that the probability of an empty interval of size $x$ is
\begin{equation} \label{upp}
\frac{8 \nTofe_e}{xQ} \exp(-xQ/4)\ .
\end{equation}

  For $x=8\ln \nTofe_e/Q$ in \eqref{upp}, we obtain a bound of
$1/(\nTofe_e\ln \nTofe_e ) \leq 1/2$ (for $\nTofe_e\geq
2$, $\nTofe_e=1$ is already covered as $q_{max}$).

We can now obtain an upper bound on the expectation of the maximum
empty interval by
summing over positive integers $i$,   the product of interval size $(i+1)x$
  and an upper bound on the probability of an empty interval of at
  least  size $ix$, for positive integer $i$,
we obtain that the expectation is  $O(x)= (1/Q) O(\ln \nTofe_e)$.
\end{proof}

\smallskip
\noindent
{\bf Proof of Theorem \ref{thm:EMP}}
\begin{proof}
We start with  frequencies $\boldsymbol{q}$ and build a deterministic
tree schedule using our randomized construction.
We show that the expected \EMP\ of the deterministic schedule that we
obtain is at most $\Theta(\ln \nTofe)$ times the \EMP\ of the memoryless
schedule for $\boldsymbol{q}$. To obtain our claim, we take $\boldsymbol{q}$ to be the optimum of
\eqref{basicconvex}.

We apply Lemma~\ref{onelink}.
The lemma shows that for each element $e$ we have $E_{alg}[ \max_t
\text{T}(e,t)]\leq c\log (\nTofe_e)/Q_e$. Now we take a weighted sum over elements using
$\boldsymbol{p}$. We get that,
\[
E_{e\sim p_e} E_{alg}[ \max_t T(e,t)]\leq \sum_e p_e\frac{c\log
(\nTofe_e)}{Q_e}
\]
This is equivalent to,
\begin{align*}
E_{alg} E_{e\sim p_e} [ \max_t \text{T}(e,t)] &\leq \sum_e c\log
(\nTofe_e)\frac{p_e}{Q_e} \\ &\leq c\log (\nTofe_{\max})\sum_e
\frac{p_e}{Q_e}\ .
\end{align*}
This implies that  with probability at
least $1/2$ (over the coin flips of the algorithm)
we get a deterministic schedule whose \EMP\ is $2c\log (\nTofe_{\max})\sum_e
\frac{p_e}{Q_e}$. It follows that the \EMP\ D2M ratio is at most $2c\log (\nTofe_{\max})$.




\EC{ Make a pass over the \EEP\ claim.}
 We now show that the \EEP\ D2M ratio of a random tree schedule is constant with constant probability.
Using the same reasoning as in the proof above for
\EMP\ it suffices to show that for each $e$,
$E_{alg} E_{t}[T(e,t)] \leq c /Q_e$.

Fixing $e$ and an arbitrary  time $t$, as in the proof of Lemma \ref{onelink},
we can easily derive that
$\Pr[T(e,t)\geq D]\le \exp{-Q_eD}$.
In particular we get that
$\Pr[T(e,t)\geq i/Q_e]\le \exp({-i})$. So the fraction of times $t$
in which $T(e,t) \ge i/Q_e$ is at most $\exp({-i})$.
It follows  that
$$E_{alg} E_{t}[T(e,t)] \le  (2/Q_e) \sum_i  \exp(-i) \le c /Q_e$$ for
some constant $c$.
\end{proof}

\smallskip
\noindent
{\bf Proof of Theorem \ref{lemma:treememt}}
\begin{proof} We use \eqref{upp} in the proof of  Lemma~\ref{onelink}.
For an element $e$, the probability of an empty interval of size at
least $x$ is at most
$\frac{8 \nTofe_e}{xQ} \exp(-xQ/4 )$.
Using $x\equiv D_e=8 (\ln \nElements + \ln \nTofe_e)/Q$ we
  obtain
that there is an  interval empty of tests for $e$ of length at least $D_e$
with probability at most $1/n^2$.

By the probability union bound over the elements we get that
 the probability that for all $e$ there is no empty interval
of length more than $D_e$ is at least $1-1/n$.
\end{proof}

\ignore{
\section*{ Summary of Results}
\begin{itemize}

\item
Computing the optimums of all objectives over stochastic or
deterministic schedules is NP hard.

\item
Optimal memoryless schedules with respect to all objectives
can be computed efficiently by a convex program (\SUMe) or an LP (\MAXe).

Moreover,  optimal memoryless schedules perform within a factor of 2
of  the optimal stochastic ones
concluding that the additional complexity
of  finding and representing optimum stochastic schedules may not worth
the effort.  Moreover, stochastic schedules offer no guarantee on
``variance'' of probing times, and mainly serve as the most powerful
model to compare other things against.

\item
On stochastic schedules and on
memoryless schedules,  the optimums of all objectives within the
same category (\SUMe, \MAXe) are the same.
This means that we might as well work with  the more stringent
objective \EMP\ for \SUMe\ or $\MtMe$ for \MAXe.

\item
On deterministic schedules, however, the particular
objective is significant.  While the optimal deterministic \EEP\ is the same as for
stochastic schedules,  the \EMP\ gap  can be large and  gaps for
\MAXe\ objectives can be larger.

 \item
We propose a construction of deterministic
schedules that for the \EMP\ and \MEP\ are guaranteed to be within a logarithmic factor (in
the number of paths containing an element) of the stochastic optimum
and are also guaranteed \EEP\ within a constant factor of optimum.

 \item
We propose a heuristic construction of a deterministic schedule that
is designed to minimize the \EEP\ and \EMP\ and evaluate it.
\end{itemize}

}
\ignore{
\section{More}

  At any point in the schedule, let $x_e$ be the elapsed time since a
  test $i$ such that $e\in \test_i$ was performed.  When probing $i$, we set $x_e=0$ for all $e\in \test_i$ and
increment by $1$ the value of $x_e$ for all other links.

 We consider the sum $V=\sum_e p_e x_e$ before and after the step and select
the probe with minimum increase (maximum decrease).

\YM{ Here is a counter-example for Avinatan:

There are $2n^2+n$ links. We have $n$ paths $S_1, \ldots, S_n$, each
has  a single edge.

There are two paths, $L_1$ and $L_2$ are of length $n^2$.

Solving the convex program we get that the probability of each $S_i$
is $1/3n$ and the probability of each long is $1/3$.

The optimal schedule (or at least a very good one) has the following
structure: $L_1, L_2, S_1, L_1, L_2, S_2, L_1, L_2, S_3, \ldots$.
The expected time to discover a failure is constant. If the failure
is in $S_i$s (prob. $1/n$) it takes $O(n)$ to discover it. Total is
constant time.

The schedule of the greedy first takes $L_1, L_2$ many times ($n^2$
times) before testing $S_i$s. This implies that although the
probability of a failure in $Si$s is only $1/n$, when it occurs, we
have a waiting time of $n^2$. So the total is $\Omega(n)$.

}
}

\onlyinproc{\end{document}}

\section{Deferred Proofs} \label{deferredp}


\subsection{Proof of Lemma~\ref{allstocequal:lemma}} \label{allstocequal:proof}

The proof of Lemma \ref{allstocequal:lemma} will follow from two
claims. The first claim shows that given a stochastic schedule we
can find a distribution over test sequences of length $N$, such that
the performance of the schedule that repeatedly samples its next $N$
tests from this distribution approaches the performance of the
stochastic schedule we started out with as $N$ approaches infinity.

The second claim shows that given a schedule which is defined, as
above, via a distribution over test sequences of length $N$, we can
define a schedule with the same performance, such that for any fixed
item $e$, the detection time $T(e,t)$ is the same for all times $t$.

We use the following definition.
 A stochastic {\em $N$-test
schedule} $\bsigma_N$ is defined via a distribution $D$ over test
sequences of length $N$,  and it repeatedly samples $D$ to generate
its next $N$ tests.

\begin{claim}
Given a stochastic schedule $\bsigma$, for any $\epsilon>0$ there
exists $N_\epsilon$, such that
 for any $N\ge N_\epsilon$ there is an $N$-test schedule $\bsigma_N$ such that for every $e$ we have
 \[
 \Et[e| \bsigma_N] \leq (1+\epsilon)  \Et[e| \bsigma] \ .
 \]
\end{claim}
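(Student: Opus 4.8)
The plan is to take $D$ to be the distribution of the length-$N$ prefix $(\sigma_1,\dots,\sigma_N)$ of the original schedule $\bsigma$, and let $\bsigma_N$ concatenate i.i.d.\ blocks drawn from this $D$. Since the blocks are independent and identically distributed, the detection-time distribution of $\bsigma_N$ at time $t$ depends only on the position $s=((t-1)\bmod N)+1$ of $t$ within a block; writing $W_s$ for the detection time of a fault placed at position $s$ of a block, we get $\Et[e|\bsigma_N]=\frac1N\sum_{s=1}^N \E[W_s]$. The goal is then to compare this per-position average with $\frac1N\sum_{s=1}^N \text{T}_{\bsigma}(e,s)$, which converges to $\Et[e|\bsigma]$ by the very definition of the operator $\Et$ and is finite because $\bsigma$ is valid.

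The comparison proceeds by coupling a block of $\bsigma_N$ with the prefix of $\bsigma$. Writing $Z_s$ for the full detection time of a fault at position $s$ under $\bsigma$, the two detection times agree on the event that $e$ is probed somewhere in positions $s,\dots,N$; otherwise $\bsigma_N$ must restart detection in a fresh, independently sampled block whereas $\bsigma$ continues along its true tail. A short computation, using that $Z_s\ge N-s+1$ whenever $e$ is absent from positions $s,\dots,N$, gives the key per-position bound $\E[W_s]-\text{T}_{\bsigma}(e,s)\le \Pr[\,e\text{ absent from }s,\dots,N\,]\cdot E_1$, where $E_1=\E[W_1]$ is the fresh-block detection time. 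Applying the same inequality at $s=1$ and solving for $E_1$ yields $E_1\le \text{T}_{\bsigma}(e,1)/(1-q_N)$ with $q_N=\Pr[Z_1\ge N]\to 0$; since validity guarantees $\Mt[e]=\sup_t \text{T}_{\bsigma}(e,t)<\infty$, we obtain $E_1\le 2\Mt[e]$ for all large $N$.

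Summing the per-position bound over $s$ and dividing by $N$ gives $\Et[e|\bsigma_N]\le \frac1N\sum_{s=1}^N\text{T}_{\bsigma}(e,s)+\frac{E_1}{N}\,\E[N-\ell]$, where $\ell$ is the last position in $1,\dots,N$ at which $e$ is probed, so that $\sum_{s=1}^N\Pr[\,e\text{ absent from }s,\dots,N\,]=\E[N-\ell]$. The main obstacle — the one place the block boundary bites — is to show this error term is negligible, i.e.\ $\E[N-\ell]=o(N)$. I would prove this by writing $\E[N-\ell]=\sum_{m\ge 1}\Pr[Z_{N-m+1}\ge m]$ and splitting at $m=\sqrt N$: the terms with $m\le \sqrt N$ contribute at most $\sqrt N$ trivially, while for $m>\sqrt N$ a Markov bound $\Pr[Z_{N-m+1}\ge m]\le \Pr[Z_{N-m+1}\ge \sqrt N]\le \Mt[e]/\sqrt N$ sums to at most $\sqrt N\,\Mt[e]$. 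Hence $\E[N-\ell]=O(\sqrt N)$ and the whole error term is $O(\Mt[e]^2/\sqrt N)\to 0$.

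Finally I would assemble the pieces: for $N$ large the first term is within $(1+\epsilon/2)\Et[e|\bsigma]$ and the error term is below $(\epsilon/2)\Et[e|\bsigma]$, giving the claimed multiplicative bound whenever $\Et[e|\bsigma]>0$; the degenerate case $\Et[e|\bsigma]=0$ is handled by the same estimate, which forces $\Et[e|\bsigma_N]=O(1/\sqrt N)$ (vanishing, as the downstream use of the claim requires). Since every threshold depends only on $\Mt[e]$ and $\Et[e|\bsigma]$, and there are only $\nElements$ elements, I would take $N_\epsilon$ to be the maximum of the per-element thresholds, so that the single schedule $\bsigma_N$ works for every $e$ simultaneously.
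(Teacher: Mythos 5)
Your proof is correct, and its skeleton coincides with the paper's: the same $N$-test schedule (i.i.d.\ copies of the length-$N$ prefix of $\bsigma$), the same reduction of $\Et[e|\bsigma_N]$ to a per-position average over a single block, the same coupling inequality $\E[W_s]-\text{T}_{\bsigma}(e,s)\le\Pr[e\text{ absent from }s,\dots,N]\cdot E_1$, and the same bootstrap at $s=1$ giving $E_1\le\text{T}_{\bsigma}(e,1)/(1-q_N)$. The one step you do genuinely differently is the control of the accumulated error $\frac{E_1}{N}\sum_{s=1}^{N}\Pr[e\text{ absent from }s,\dots,N]$. The paper bounds each probability by $\min\{1,\text{T}_{\bsigma}(e,s)/(N-s+1)\}$ via Markov and then solves an explicit rearrangement problem (maximize $\sum_{t}\min\{1,x_t/(N-t+1)\}$ subject to $\sum_t x_t\le B$), whose value $\sqrt{2B}+1$ uses only the Ces\`aro bound $\sum_{t\le N}\text{T}_{\bsigma}(e,t)\le(1+\epsilon)N\,\Et[e|\bsigma]$ supplied by validity; you instead split the sum at $m=\sqrt{N}$ and apply Markov with the uniform bound $\Mt[e]$, getting $\E[N-\ell]=O\bigl(\sqrt{N}\,(1+\Mt[e])\bigr)$. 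Both routes give an additive error of order $1/\sqrt{N}$ and hence both close the argument. The trade-off: your version is shorter and more elementary, but its threshold $N_\epsilon$ depends on $\Mt[e]=\sup_t\text{T}_{\bsigma}(e,t)$ and so relies on the part of the validity definition that makes $\Mt[e]$ finite, whereas the paper's version depends only on $\Et[e|\bsigma]$ and $\text{T}_{\bsigma}(e,1)$ and would survive a weaker notion of validity in which only time-averages exist. Since the paper does require valid schedules to have well-defined (bounded) $\Mt[e]$, your use of it is legitimate. One small caveat, shared implicitly by the paper's own proof: converting the vanishing additive error into the multiplicative factor $(1+\epsilon)$ needs $\Et[e|\bsigma]>0$; this always holds here because detection times are at least $1$, but note that your parenthetical treatment of the degenerate case $\Et[e|\bsigma]=0$ does not literally yield the claimed inequality in that case --- it is moot only because the case cannot occur.
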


\begin{proof}
The next $N$ tests of $\sigma_N$ are obtained  by drawing  a prefix
of $N$ tests from $\sigma$. We will collect constraints on the minimum
size of $N_\epsilon$  and eventually pick $N_\epsilon$ to be
large enough to satisfy all these constraints.

Now, since the schedule $\bsigma_N$ samples sequences of length $N$
repeatedly from the same distribution, we can consider the time
modulus $N$, hence,
\begin{eqnarray*}
\Et[e| \bsigma_N]&=& \lim_{h \rightarrow \infty}
\frac{1}{h}\sum_{t=1}^{h} \bT(e,h|\bsigma_N) = \frac{1}{N}\sum_{t=1}^N \bT(e,t| \bsigma_N) \ .
\end{eqnarray*}
So we have to show that for sufficiently large $N$
$$
\frac{1}{N}\sum_{t=1}^N \bT(e,t| \bsigma_N) \le \Et[e|\bsigma]
(1+\epsilon) \ .
$$

Denote by $\bT^*(e,t | \bsigma)$ the random variable of the cover
time of $e$ at time $t$ by the schedule $\sigma$, so
$\E[\bT^*(e,t|\bsigma)]=\bT(e,t|\bsigma)$.

By the definition of $\bsigma_N$ we have that for any $t$, $1\le
t\le N$,
\begin{eqnarray} \label{eq:wrapeq}
\text{T}(e,t |{\bsigma_N} ) &\leq & \text{T}(e,t|{\bsigma} )+  \Pr[\text{T}^*(e,t|\bsigma)> N-t] \text{T}(e,1|{\bsigma_N}) \ .
\end{eqnarray}

From Markov inequality, applied to the random variable $\bT^*(e,1 |
\bsigma)$, we get that $\Pr[\text{T}^*(e,1|\bsigma) \ge N] \leq
\frac{\text{T}(e,1|\bsigma)}{N}$. Picking $N_\epsilon \geq \max_e
\bT(e,1|\bsigma)/\epsilon$ we have that $\Pr[\text{T}^*(e,1|\bsigma)
\ge N] \leq \epsilon$ for all items $e$. Substituting this and $t=1$
in Equation (\ref{eq:wrapeq}) we get that
\begin{equation*}
T(e,1|\bsigma_N) \leq \bT(e,1|\bsigma)+\epsilon \bT(e,1|\bsigma_N) \
.
\end{equation*}
which implies that
\begin{equation} \label{t1sigmarel1}
\bT(e,1|\bsigma_N) \leq \frac{\bT(e,1|\bsigma)}{(1-\epsilon)} \ .
\end{equation}
Substituting Equation (\ref{t1sigmarel1}) back into
(\ref{eq:wrapeq}) we get
\begin{eqnarray} \label{eq:wrap1}
\text{T}(e,t |{\bsigma_N} ) &\leq & \text{T}(e,t|{\bsigma} )+ \Pr[\text{T}^*(e,t|\bsigma)> N-t]
\frac{\bT(e,1|\bsigma)}{(1-\epsilon)} \ .
\end{eqnarray}
 Markov inequality, for any time $t$, gives
\begin{equation}  \label{sumprob}
\Pr[\text{T}^*(e,t)> N-t]\leq \min\{1,
\frac{\text{T}(e,t|\bsigma)}{N-t+1} \}\ .
\end{equation}
Substituting this in (\ref{eq:wrap1}) we get
\begin{eqnarray} \label{eq:wrap2}
\text{T}(e,t |{\bsigma_N} ) & \leq & \text{T}(e,t|{\bsigma} )+  \min\{1, \frac{\text{T}(e,t|\bsigma)}{N-t+1} \}
\frac{\bT(e,1|\bsigma)}{(1-\epsilon)}  \ .
\end{eqnarray}
We now sum  (\ref{eq:wrap2}) over all $1\le t \le N$
\begin{eqnarray} \label{eq:wrap2}
\sum_{t=1}^N\text{T}(e,t |{\bsigma_N} ) & \leq &\sum_{t=1}^N \text{T}(e,t|{\bsigma} )+  \frac{\bT(e,1|\bsigma)}{(1-\epsilon)} \sum_{t=1}^N \min\{1,
\frac{\text{T}(e,t|\bsigma)}{N-t+1} \} \ .
\end{eqnarray}

Our goal now is to bound the second term on the right hand side of
(\ref{eq:wrap2}). Since we only consider valid schedules, for each
$e$, there must be $N_{e,\epsilon}$ so that for all $h\geq
N_{e,\epsilon}$,
\begin{equation}\label{limitN}
\frac{1}{h}\sum_{t=1}^{h} \text{T}(e,t | \boldsymbol{\sigma}) \leq
\text{\EP}[e |\boldsymbol{\sigma} ](1+\epsilon)\ .
\end{equation}
We will select $N_\epsilon \geq \max_e N_{e,\epsilon}$ so
(\ref{limitN}) holds for any $h=N \ge N_\epsilon$.

It follows that to upper bound  $\sum_{t=1}^N \min\{1,
\frac{\text{T}(e,t|\bsigma)}{N-t+1} \}$ we can consider the
following optimization problem:
\begin{eqnarray*}
\max \,\sum_{t=1}^N \min\{1,\frac{x_t}{N-t+1}\}  && \text{s.t.}\,  \sum_{t=1}^N x_t\leq B
\end{eqnarray*}
where in our setting $x_t=T(e,t)$ and $B=(1+\epsilon)N\Et[e |\sigma
]$. We substitute $y_t = x_{N-t+1}$ and the optimization problem
simplifies to
\begin{eqnarray*}
\max \, \, \sum_{t=1}^N \min\{1,\frac{y_t}{t}\} && 
\text{s.t.} \, \, \sum_{t=1}^N y_t\leq B
\end{eqnarray*}

The solution to the optimization is to set $y_t=t$ for $t\in [1,z]$
for the largest $z$ such that $\sum_{j=1}^z j = (1+z)z/2 \le B$,
$y_{z+1} = B - (1+z)z/2$ and $y_t=0$ for $t\ge z+2$. We get that $z
\le \sqrt{2B}$ and $\sum_{t=1}^N \min\{1,\frac{y_t}{t}\} \le
\sqrt{2B} + 1$.

Substituting this bound back in (\ref{eq:wrap2}) we get that
\begin{eqnarray}
\sum_{t=1}^{N} \text{T}(e,t|{\boldsymbol{\sigma}_N}) 
&\leq& \sum_{t=1}^{N} \text{T}(e,t |{\boldsymbol{\sigma}}) +
\frac{\text{T}(e,1|{\boldsymbol{\sigma}})}{1-\epsilon}
\bigg(1+\sqrt{2N(1+\epsilon)
  \Et[e|\bsigma]} \bigg)\nonumber \\
&\leq&   (1+\epsilon)
N {\Et}[e | {\boldsymbol{\sigma}}] +
\text{T}(e,1|{\boldsymbol{\sigma}}) 4 \sqrt{N \Et[e| \bsigma]} \label{st4} \\
&=& N \Et[e|\bsigma] \bigg(1+\epsilon +
\frac{4\text{T}(e,1|{\boldsymbol{\sigma}})}{\sqrt{N \Et[e| \bsigma]
}}\bigg) \nonumber
\end{eqnarray}
where inequality  (\ref{st4}) is by substituting (\ref{limitN}) and
assuming that $\epsilon <0.5$.

We will now set $N_\epsilon$ appropriately. First we need
$N_\epsilon \geq \max_e N_{e,\epsilon}$. Second, we need that
$N_\epsilon \geq \max_e \bT(e,1)/\epsilon$. Third,  $N_\epsilon$
should be large enough so   that
$\frac{4\text{T}(e,1|{\boldsymbol{\sigma}})}{\sqrt{N_\epsilon \Et[e|
\bsigma] }} \leq \epsilon$.
We get that
\[
 \Et[e| \bsigma_N] = \frac{1}{N}\sum_{t=1}^{N} \bT(e,t|\bsigma_N) \leq \Et[e|\bsigma] (1+2\epsilon)
\]
from which the proof follows by using $\epsilon/2$ rather than
$\epsilon$ in our constraint on $N_\epsilon$ specified above.
\end{proof}

A stochastic shifted  $N$-test schedule $S(\bsigma_N)$ is defined with
respect to
a stochastic $N$-test schedule $\bsigma_N$ as follows. It samples
uniformly a random $i\in[1,N]$ and a sequence $x$ from $\sigma_N$
(recall that $x$ is an infinite sequence of tests composed from
blocks of $N$ tests) and starts from test $i$ in $x$.

\begin{claim}
Given $\sigma_N$, for any $t$ and $e$ we have,
\begin{equation}
\bT(e,t|S(\bsigma_N))= \Et[e| \bsigma_N] \ .
\end{equation}
\end{claim}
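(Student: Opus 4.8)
The plan is to exploit the block-i.i.d.\ structure of $\bsigma_N$ together with the uniform random shift that defines $S(\bsigma_N)$. Recall that $S(\bsigma_N)$ carries two \emph{independent} sources of randomness: the infinite sequence $x$, which is a concatenation of i.i.d.\ length-$N$ blocks drawn from the distribution underlying $\bsigma_N$, and the uniform offset $i\in[1,N]$; with the natural indexing, the probe of $S(\bsigma_N)$ at time $t$ is the test at position $i+t-1$ of $x$. I would first record the detection time as an expectation over both sources, and then integrate out $i$ using a periodicity property of $x$.

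First I would establish that $\bT(e,t|\bsigma_N)$ is periodic in $t$ with period $N$, i.e.\ that it depends only on the phase $r(t)=((t-1)\bmod N)+1$. This is immediate from the i.i.d.\ block structure: conditioned on time $t$ sitting at phase $r$ within its block, the joint distribution of the remainder of the current block together with all subsequent blocks is the same regardless of which block we are in, so the distribution of the first future probe covering $e$, and hence its expectation, depends only on $r$. Consequently
\[
\Et[e|\bsigma_N]=\frac{1}{N}\sum_{r=1}^{N}\bT(e,r|\bsigma_N),
\]
which is exactly the identity already used in the previous claim.

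Next I would compute $\bT(e,t|S(\bsigma_N))$. Conditioning on the offset $i$ (and using its independence from $x$), the future of $S(\bsigma_N)$ from time $t$ is precisely the future of the block sequence $x$ from position $i+t-1$, so the conditional expected detection time is $\bT(e,i+t-1|\bsigma_N)$. Averaging over the uniform $i\in[1,N]$ gives
\[
\bT(e,t|S(\bsigma_N))=\frac{1}{N}\sum_{i=1}^{N}\bT(e,i+t-1|\bsigma_N).
\]
As $i$ ranges over $[1,N]$, the positions $i+t-1$ form $N$ consecutive integers and hence hit each residue class modulo $N$ exactly once; by the periodicity from the previous step the summand runs through $\bT(e,1|\bsigma_N),\dots,\bT(e,N|\bsigma_N)$ in some order. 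The sum is therefore $\sum_{r=1}^{N}\bT(e,r|\bsigma_N)$, which is independent of $t$ and equals $N\,\Et[e|\bsigma_N]$, yielding the claim.

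The main obstacle is bookkeeping rather than depth: I must fix the indexing convention (which position of $x$ corresponds to time $t$) consistently so that the shift by $i-1$ genuinely traverses one full period, and I must justify that independence of $i$ from $x$ makes the conditioning in the second display legitimate. Once the periodicity of $\bT(\cdot\,|\bsigma_N)$ is in hand, averaging over a complete residue system does all the remaining work, and the resulting value is manifestly constant in $t$.
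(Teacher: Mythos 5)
Your proposal is correct and follows essentially the same route as the paper's proof: both condition on the uniform offset to write $\bT(e,t|S(\bsigma_N))$ as $\frac{1}{N}\sum_{i=1}^{N}\bT(e,t+i|\bsigma_N)$, invoke the period-$N$ structure of $\bsigma_N$ to reduce the shifted indices modulo $N$, and identify the resulting average over one full period with $\Et[e|\bsigma_N]$. The only difference is presentational: you make the periodicity and indexing bookkeeping explicit, which the paper leaves implicit.
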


\begin{proof}
We first will show that $\bT(e,t|S(\bsigma_N))$ is independent of
$t$ and then show that it equals $\Et[e| \bsigma_N]$. We can see
that it is independent of $t$ by the definition of $S(\bsigma_N)$
from which we get
\begin{eqnarray*}
\bT(e,t|S(\bsigma_N)) &=& \frac{1}{N}\sum_{i=1}^N \bT(e,t+i| \bsigma_N) 
= \frac{1}{N}\sum_{i=1}^N \bT(e,(t+i) \bmod N | \bsigma_N) 
= \frac{1}{N}\sum_{i=1}^N \bT(e,i| \bsigma_N) \ .
\end{eqnarray*}
Now, since the schedule $\sigma_N$ samples sequences of length $N$, we can consider
the time modulus $N$, hence,
\begin{eqnarray*}
\Et[e| \sigma_N]&=& \lim_{h \rightarrow \infty}
\frac{1}{h}\sum_{t=1}^{h} \bT(e,h|\bsigma_N) = \frac{1}{N}\sum_{i=1}^N \bT(e,i| \bsigma_N)
\end{eqnarray*}
\end{proof}

\subsection{Proof of Lemma~\ref{NPhard}}  \label{nphardproof:sec}
\begin{proof}   We obtain a
 scheduling instance using the same set of elements and subsets
 (tests) as the X3C instance.
We use a uniform $\boldsymbol{p}$ over elements with $p_e=1/(3k)$ for
\SUMe\ objectives and $p_e=1$ for \MAXe\ objectives.


We first consider deterministic schedules.
  From an exact cover, we define a deterministic schedule by
  cycling through the same permutation of the cover.
  The deterministic schedule has
$\Mt[e]=k$ and  \EP$[e]=(k+1)/2$ for all elements $e$.  The
maximum $\max_e T(e,t)$ at any time $t$ is $k$ and the average is
$(k+1)/2$.  Therefore, the schedule has \WMP, \MWP, and \EMP\ equal to
$k$ and \WEP, \EEP, and \MEP\ equal to $(k+1)/2$.

Consider an arbitrary deterministic schedule and
 time $t$.  We must have $\max_e T(e,t) \geq k$,
since at most $3i$ elements can be covered in $i$ probes, so to cover
all $3k$ elements we need at least $k$ probes.  We have
equality
 if and only if the
sequence of $k$ probes following $t$ constitutes a cover.  A cover
of size $k$ must be an exact cover.  Therefore \MWP$=k$ implies
exact cover of size $k$.

Similarly, we claim  that on any schedule, $(1/k)\sum_e T(e,t)
\geq (k+1)/2$.
This is because $\sum_e T(e,t)=\sum_e m_e$, where $m_e$ is the
smallest $d$ such that $e\in \sigma_{t+d}$.  Since there can be at
most 3 elements of each value of $m_e\geq 1$, we have that
$\sum_e T(e,t) \geq 3 \sum_{d=1}^{k} d = 3k(k+1)/2$ and our claim follows.
\ignore{
This is because
\begin{equation}  \label{s2}
\sum_e T(e,t)=\sum_e\sum_{d\geq 1}
\pi_{t+d,e} d\ ,
\end{equation}
 where $\pi_{t+d,e}$ is the probability that
$\sigma_{t+d}=e$ given that $e\not\in
\sigma_{t+1},\dots,\sigma_{t+d-1}$.
Since $\forall e, \sum_{d\geq 1} \pi_{t+d,e} =1$, the sum \eqref{s2}
is minimized when $\pi$ are concentrated in the lower $d$'s, that is
$\sum_e \pi_{t+d,e}=3$ for $d=1,\ldots,k$.
Thus we obtain that
$\sum_e T(e,t) \geq 3 \sum_{d=1}^{k} d = 3k(k+1)/2$.
}
Moreover, equality holds only if the sequence of $k$ probes from $t$ on is
an exact cover.  Therefore  \MEP$=(k+1)/2$ implies exact cover of size $k$.

  Consider an arbitrary deterministic schedule and let $q_e$ be the
  average probing frequency of element $e$ (recall that we only
  consider valid schedules, where $q_e$ is well defined).
We have $\Mt[e] \geq 1/q_e$ and \EP$[e]\geq (1+1/q_e)/2$.
Moreover,
equality can hold only when $1/q_e$ is integral and probes are
evenly spaced every $1/q_e$ probes except for vanishingly small
fraction of times.  For the X3C instance we have $\sum_e q_e=3$, and
from convexity, $\sum_e 1/q_e$ or $\max_e 1/q_e$ are minimized only when all $q_e$ are
equal to $1/k$.  This means that the \WMP\ and \EMP\ can be
equal to $k$ or the
 \WEP\ and \EEP\ are equal to $(k+1)/2$ equal to $(k+1)/2$
only if each element is probed every $k$ probes (except vanishingly
small) number of times.  This means that most sequences of $k$
consecutive probes constitute an exact cover.


 We now consider stochastic schedules.
 From an exact cover, we define a stochastic schedule by a uniform distribution ($1/k$) on each of the $k$ shifts
of the same permutation of the cover.  On this schedule, all our
objectives have value $(k+1)/2$.   It remains to show that for each of
the objectives, a schedule with time $(k+1)/2$ implies an exact cover.

  Observe that with our choice of weighting, on any schedule opt-\MAXe
  $
  \geq $ opt-\SUMe.
  Therefore
if the stochastic optimum of either the \SUMe\ or \MAXe\ objectives is $(k+1)/2$, then opt-\SUMe\ is also
$(k+1)/2$ which implies, from \eqref{EEPSD:claim}, that
opt$_D$-\EEP$=(k+1)/2$,  which implies
 exact cover.
\end{proof}

\subsection{Proof of Lemma~\ref{SvsDlemma}} \label{svsdlemmaproof:sec}
\begin{proof}
We first establish \eqref{EEPSD:claim}.
We show that given a stochastic schedule $\bsigma$ and
$\delta>0$, we can construct a deterministic
schedule $\bsigma_D$, such that $\EeEt[\bsigma_D] \leq (1+\delta)
\EeEt[\bsigma]$.
The main difficulty which makes this proof more technical stems from existence of valid stochastic schedules with
deterministic instantiations which are not valid (limits and
frequencies are not well defined). Therefore, we can not simply assume
a positive probability of a (valid) deterministic schedule with an average
cost that is close to that of $\bsigma$.

Our construction consists of several steps.  We first show that
there is a deterministic testing sequence\footnote{We use the term
  sequence rather than schedule because the sequence may not be a
  valid schedule.}  $\bsigma'$ so that the
average cost on the first $N$ time steps (for sufficiently large $N$
that depends on $\epsilon$) is within $(1+\epsilon)$ of that of the
stochastic schedule.  We then focus on a sub-sequence of steps
$[t_0,N]$ of size $\Omega(N)$ so that the average property still
holds and in addition, the cost of steps $t_0$  is at most a
constant times the average. We then argue that the maximum interval
between tests of an element on the prefix of $\sigma'$ is bounded by
a value $X=O(\sqrt{N})$.
 Lastly, we obtain $\sigma_D$ as a cyclic schedule
which repeats steps $[t_0,N]$ of $\sigma'$.  We show that the
average cost is within $(1+O(\epsilon))$ from the average cost on
times $[t_0,N]$ of $\sigma'$ which in turn, is within $(1+\epsilon)$
to the average cost of the  original $\bsigma$.

From $\bsigma$ being valid, there must be $N_{EE} >
0$ such that for all $N\geq N_{EE}$
$$\frac{1}{N}\sum_{t=1}^N \sum_e p_e \text{T}(e,t) \leq (1+\epsilon)
\EeEt[\bsigma]\ .$$

 Fix some $N\geq N_{EE}$. We draw a particular  execution of $\bsigma$
obtaining an infinite  deterministic sequence $\sigma'$. From Markov
inequality, with probability at least $1-(1+2\epsilon)/(1+\epsilon)
> 0$,
\begin{equation} \label{detcond}
 \frac{1}{N}\sum_{t=1}^N \sum_e p_e \text{T}(e,t | \sigma') \leq (1+2\epsilon)
\EeEt[\bsigma]\ .
\end{equation}
We therefore assume that we have a sequence $\sigma'$ which
satisfies \eqref{detcond}.

We now focus on a subset $[t_0,N]$ of time steps, where $t_0$ is the
minimum $t$ such that $\sum_e  p_e \text{T}(e,t | \sigma') \leq 10
\EtEe[\bsigma]$. From \eqref{detcond}, assuming $\epsilon \le 1/2$,
it follows that $t_0 \leq 0.2 N$. Let  $N'=N-t_0+1 \geq 0.8 N$ be
the length of the interval $[t_0,N]$. We establish that
\begin{equation} \label{detcondpp}
 \frac{1}{N'}\sum_{t=t_0}^{N} \sum_e p_e \text{T}(e,t | \sigma') \leq (1+2\epsilon)
\EeEt[\bsigma]\ .
\end{equation}
We establish \eqref{detcondpp} using \eqref{detcond}:
\begin{eqnarray*}
\sum_{t=t_0}^{N}
\sum_e p_e \text{T}(e,t | \sigma') 
& = & \sum_{t=1}^{N} \sum_e p_e \text{T}(e,t | \sigma')
- \sum_{t=1}^{t_0-1}\sum_e p_e \text{T}(e,t | \sigma') \\
& \leq&  N  (1+2\epsilon) \EeEt[\bsigma]
- ( t_0-1) 10 \EeEt[\bsigma] \\
&= &N' (1+2\epsilon) \EeEt[\bsigma]\ .
\end{eqnarray*}

We now bound the maximum elapsed times between tests of an element
$e$ in the sequence $\sigma'$  in the time interval $[t_0,N]$.
Consider an interval $[i,i+x_e-1]$  of  $x_e$ time steps, completely
contained in $[t_0,N]$ (that is $i+x_e-1 \le N$) in which element
$e$ is not tested then

\begin{equation} \label{xecond1}
\sum_{t=i}^{i+x_e-1} \text{T}(e,t|\sigma') = \sum_{j=1}^{x_e} j \geq
x_e^2/2\ .
\end{equation}
On the other hand, since $\sigma'$ satisfies \eqref{detcond}, noting
that $i+x_e-1\leq N$, we must have
\begin{equation} \label{xecond2}
p_e \sum_{t=i}^{i+x_e-1} \text{T}(e,t|\sigma') \leq \sum_{t=1}^N
\sum_e p_e \text{T}(e,t|\sigma') \leq N(1+2\epsilon) \EeEt[\bsigma]\
.
\end{equation}
Combining \eqref{xecond1} and \eqref{xecond2}, we obtain that
\begin{equation} \label{xbound}
x_e \leq \sqrt{\frac{2(1+2\epsilon) \EtEe[\bsigma] N}{p_e}} \ .
\end{equation}
Let \begin{equation} \label{Xdef} X = \sqrt{\frac{2(1+2\epsilon)
\EtEe[\bsigma] N}{\min_e
    p_e}}\ ,
\end{equation}
we established that
\begin{equation} \label{claimsigmapp}
\forall e \forall t\in [t_0,N-X+1],\  \text{T}(e,t | \sigma') \leq
X\ .
\end{equation}

 Lastly, we define the deterministic schedule $\bsigma_D$ which cycles through the
 steps $[t_0, N]$ of $\sigma'$.  Since $\bsigma_D$ is
  cyclic, we have
\begin{equation} \label{sigmad}
\EeEt[\bsigma_D]=\frac{1}{N'} \sum_{t=1}^{N'} \sum_e p_e \text{T}(e,t |
\bsigma_D)\ .
\end{equation}
We therefore upper bound the latter by relating it to $\sigma'$.

\begin{eqnarray}
\sum_{t=1}^{N'} \sum_e p_e \text{T}(e,t |
\bsigma_D) 
&\leq&   \sum_{t=t_0}^{N} \sum_e p_e  \text{T}(e,t | \sigma') + X
\sum_e p_e  \text{T}(e,t_0 |
\sigma')  \nonumber \\
&\leq&   \sum_{t=t_0}^{N} \sum_e p_e  \text{T}(e,t |
\sigma')  +  10 X  \EtEe[\bsigma] \label{stt1} \\
&\leq&  N'(1+2\epsilon) \EtEe[\bsigma]  + \epsilon N
\EtEe[\bsigma] \label{stt3} \\
&\leq& N' \EtEe[\bsigma] (1+2\epsilon + \epsilon (N/N')) \nonumber \\
&\leq& N' \EtEe[\bsigma] (1 + 4\epsilon) \label{stt5}\ .
\end{eqnarray}
To verify the first inequality,  we apply \eqref{claimsigmapp}
obtaining that  for $t\leq N-X+1$, $\text{T}(e,t-t_0+1|\bsigma_D) =
\text{T}(e,t|\sigma')$.  For the remaining $X$ time steps that
correspond to $t\in (N-X+1,N]$ of $\sigma'$ ($t\in (N'-X+1, N']$ of
$\bsigma_D$) we have
$$
\text{T}(e,t-t_0+1|\bsigma_D) \leq
\left\{\begin{array}{l}\text{T}(e,t|\sigma'),
  \text{ if } \text{T}(e,t|\sigma') \leq N-t \\
N-t +\text{T}(e,1 | \bsigma_D) \text{ , otherwise.}\end{array}
\right.
$$
\begin{align*}
\leq  \text{T}(e,t|\sigma')
+\text{T}(e,t_0|\bsigma_D) =  \text{T}(e,t|\sigma') +\text{T}(e,t_0|\sigma')\ .
\end{align*}
Inequality \eqref{stt1} follows from our choice of $t_0$. Inequality
\eqref{stt3} holds if we choose
 $$N \ge \frac{200(1+2\epsilon) \EtEe[\bsigma]}{\epsilon^2 \min_e
    p_e} \ ,$$ to guarantee that $10X \le \epsilon N$.
 Lastly, \eqref{stt5} uses $N' \geq 0.8 N$.
Combining \eqref{stt5} with \eqref{sigmad}, we obtain
$\EeEt[\bsigma_D] \leq (1+4\epsilon) \EtEe[\bsigma]$. We conclude
the proof of \eqref{EEPSD:claim} by choosing $\epsilon = \delta/4$.

\medskip

 We now establish the inequalities \eqref{EMP:claim} and \eqref{WMP:claim}.
Given a deterministic schedule $\bsigma$, we define a cyclic
deterministic schedule $\bsigma_C$
which repeats a sequence $\bsigma_C'$ of some length $N$ and which
satisfies
$\forall e,\ \Mt[e|\bsigma_C] \leq \Mt[e|\bsigma]$.
Consider the schedule $\bsigma$ and associate a state with each time
$t$, which is a vector that for each $e$, contains the elapsed number
of steps since a test for $e$ was last invoked.  At $t=1$ we have the
all zeros vector.  When a test $\test$ is invoked, the entries for all
elements in $s$ are reset to $0$ and the entries of all other elements
are incremented by $1$.
From definition, the maximum value for entry $e$ is $\Mt[e]$.
Therefore, there is a finite number of states.
The segment $\bsigma_C'$ is any sequence between two times with the
same state.  It is easy to see that the cyclic schedule $\bsigma_C$
obtained
from $\bsigma_C'$ has the desired property.

\ignore{
Given a deterministic schedule and $\epsilon$, we construct a cyclic
deterministic schedule on which the \EMP\ and \MEP\ are within
$(1+\epsilon)$ of the original deterministic schedule.  We take
$N_\epsilon \gg \text{\EMP}/\epsilon$ and such that the \EMP\ and
\MEP\ on the first $h$ time steps for all
$h\geq N_\epsilon$ are at most $(1+\epsilon/2)$ times that of the original
sequence. We now
look at the ``state'', which is the last time since each edge was
tested and time for next probe.  There is a bound $L$ such that the
state at most time steps has all times at most $L$.
We can now find a subsequence of the schedule that starts and finishes
at the same state and that the objectives over it are at most
$(1+\epsilon)$ times those
of the original schedule.  We turn it into a cyclic schedule.
}

  We now take the deterministic  cyclic schedule $\bsigma_C$ and construct a stochastic schedule  $\boldsymbol{\sigma}'$
  by selecting a start point $i\in N$ uniformly at random, executing
  steps $[i,N]$ of $\bsigma_C'$, and then using $\bsigma_C$.
For each element $e$, we have
 $${\Mt} [e  | {\boldsymbol{\sigma}'}] \leq
 \frac{{\Mt}[e |{\boldsymbol{\sigma}_C} ]+1}{2} \leq \frac{{\Mt}[e
   |{\boldsymbol{\sigma}} ]+1}{2}  \ .$$
By combining,
\begin{eqnarray*}
\text{opt-\EMP}  &\leq& \sum_e p_e {\Mt} [e | {\boldsymbol{\sigma}'} ]
\leq \sum_e \frac{{\Mt}[e | {\boldsymbol{\sigma}}]+1}{2}  =   (\text{\EMP}[\bsigma]+1)/2\ .
\end{eqnarray*}
By taking the infimum of \EMP\ over all deterministic schedules we conclude
the claim.  The argument for \WMP\ is similar.
\end{proof}

\ignore{
\medskip
\noindent
{\bf Yishay version of Proof of Lemma~\ref{SvsDlemma}}
\begin{proof}
We first establish \eqref{EEPSD:claim}.
We show that given a stochastic schedule $\bsigma$ and
$\delta>0$, we can construct a deterministic
schedule $\bsigma_D$, such that $\EeEt[\bsigma_D] \leq (1+\delta)
\EeEt[\bsigma]$.
The main complication is arising from our definition using the limits.
If the definition had used a ``standard'' expectation, the result would simply say that
there is a realized sequence whose cost is at most the expected cost.
To derive our proof we will show a stronger (and a more interesting) result, namely, construct
a deterministic schedule which is a cyclic schedule and has the desire property.

Our proof would break in to three parts. In the first part we will show that there is a testing sequence of length $2N$ ($N$ will sufficiently large and will be determine later) such that the cost on the first $N$ elements is close to the average. In the second part we will show that in such a sequence we can bound the
gap between two consecutive tests of the same element $e$. Finally, we show that if we truncate a (carefully selected) prefix of it, we derive a cyclic schedule with the desired property.
It would be convenient for us to use the following notation:
\begin{equation*}
S_{k,\ell}(\sigma)=\frac{1}{\ell-k+1}\sum_{t=k}^\ell \sum_e p_e \text{T}(e,t|\sigma)
\EeEt[\bsigma]\ .
\end{equation*}

{\bf Part 1:}
Since $\bsigma$ is valid, for any $\epsilon>0$, there exits $N_{EE} >
0$ such that for all $N\geq N_{EE}$
\begin{equation}\label{valid_eq1}
S_{1,N}(\bsigma)=\frac{1}{N}\sum_{t=1}^N \sum_e p_e \text{T}(e,t) \leq (1+\epsilon)
\EeEt[\bsigma]\ .
\end{equation}


Once we fix $N$, we need to show that there is an infinite test sequence $\sigma_1$ such that both:
(1) $S_{1,N}(\sigma_1)\leq (1+\epsilon) \EeEt[\bsigma]$ and (2) $S_{1,2N}(\sigma_1)\leq (1+\epsilon) \EeEt[\bsigma]$
[[YM: I do not know how to prove it, but I will assume it, we need to complete this.]]

Let $\sigma_2$ be the prefix of length $2N$ of $\sigma_1$.
First we would like to show that for any $e$ and $t\leq N$ we have that $T(e,t|\sigma_2)$ is finite, i.e.,
there is no test $e$ which does not appear between $N+1$ and $2N$.
If there was such a test $e$, it would have contributed at least $p_e N^2/2$ to  the cost $S_{1,2N}(\sigma_1)$. For $N\geq 2(1+\epsilon)\EeEt[\bsigma]/p_e$ this will be impossible.

{\bf Part 2:}
Assume that we are  given $\sigma_2$ of length $2N$ such that for any $t\leq N$ we have $T(e,t|\sigma_2)\leq 2N$. We now like to bound the maximum value of $T(e,t|\sigma_2)$.
Let $x_e = \max_{t\leq N} \min\{T(e,t|\sigma_2),N-t\}$. [[YM: The previous definition ignored the case that $i+x_e > N$]]

The contribution of $e$ to $S_{1,N}$ is at least $p_e x_e^2/(2N)$ and therefore
\begin{equation} \label{xbound}
x_e \leq \sqrt{\frac{2(1+\epsilon) \EtEe[\bsigma] N}{p_e}}\ .
\end{equation}
Thus, for $N \geq \frac{2(1+\epsilon) \EtEe[\bsigma] }{\epsilon^2 \min_e     p_e}$, we have
\begin{equation} \label{maxxbound}
\max_e x_e \leq \sqrt{\frac{2(1+\epsilon) \EtEe[\bsigma] N}{\min_e
    p_e}} \leq  \epsilon N\ .
\end{equation}

{\bf Part 3:}
We now prune a prefix of the sequence
$\sigma_2$ to obtain a new sequence $\sigma_3$.
Let $t_0>0$ be the minimum such
that $\sum_e  p_e \text{T}(e,t_0 | \sigma_2)  \leq  10 \EtEe[\bsigma]$.
Note that since for any $t\in [1,t_0-1]$ we have cost at least $ 10 \EtEe[\bsigma]$,
then $S_{t_0,N} \leq S_{1,N}<10 \EtEe[\bsigma] $.
In addition, since $S_{1,N} \leq (1+\epsilon)\EtEe[\bsigma]$ then $t_0 \leq (1+\epsilon)N/10 \leq 0.2 N$, for $\epsilon\leq 1$.

We define $\sigma_3$ be the sequence $\sigma_2$ from place $t_0$ to place $2N$.
From the construction of $t_0$ we have,
\begin{equation}
 \sum_e  p_e \text{T}(e,1 | \sigma_3)  \leq 10 \EeEt[\bsigma] \label{t1bound}
\end{equation}
Since $S_{t_0,N}\leq S_{1,N}$ we have
\begin{equation}
\frac{1}{N'}\sum_{t=1}^{N'} \sum_e p_e \text{T}(e,t | \sigma_3) \leq
(1+\epsilon) \EeEt[\bsigma], \label{detcondp}
\end{equation}
where $N'=N-t_0+1\geq 0.8 N$.

Lastly, we define the cyclic deterministic schedule $\bsigma_4$ which cycles through the
the tests from $t_0$ to $N$ in $\sigma_2$.
Since $\bsigma_4$ is cyclic, we have
\begin{equation} \label{sigmad}
\EeEt[\bsigma_4]=\frac{1}{N'} \sum_{t=1}^{N'} \sum_e p_e \text{T}(e,t |
\bsigma_4)\ .
\end{equation}
We therefore upper bound $\sigma_4$ by relating it to $\sigma_3$.
\begin{eqnarray}
\lefteqn{\sum_{t=1}^{N'} \sum_e p_e \text{T}(e,t |
\bsigma_4) } \nonumber \\
&\leq&   \sum_{t=1}^{N'} \sum_e p_e  \text{T}(e,t |
\sigma_3)  + \max_e x_e  \sum_e p_e  \text{T}(e,1|
\sigma_3)  \nonumber \\
&\leq&   \sum_{t=1}^{N'} \sum_e p_e  \text{T}(e,t |
\sigma_3)  +  10 \max_e x_e  \EtEe[\bsigma] \label{stt1} \\
&\leq&  (N')(1+\epsilon) \EtEe[\bsigma]  + 10\epsilon N
\EtEe[\bsigma] \label{stt3} \\
&\leq& (N') \EtEe[\bsigma] (1+\epsilon + 10\epsilon (N/(N-t_0))) \nonumber \\
&\leq& (N') \EtEe[\bsigma] (1 + 13.5\epsilon) \label{stt5}\ .
\end{eqnarray}
The first inequality follows from the fact that for any test $e$, the number of time steps which
wrap around $N$ is at most $\max_e x_e$.
Inequality \eqref{stt1} follows from \eqref{t1bound}.
Inequality \eqref{stt3} follows from \eqref{detcondp} and
\eqref{maxxbound}.  Lastly, \eqref{stt5} uses $N' \geq 0.8 N$.
Combining \eqref{stt5} with \eqref{sigmad}, we obtain
$\EeEt[\bsigma_4] \leq (1+13.5\epsilon) \EtEe[\bsigma]$.
We conclude the proof of \eqref{EEPSD:claim} by choosing $\epsilon = \delta/13.5$.

\ignore{

 We ``draw'' a particular  execution of $\bsigma$ for $2N$ steps,
obtaining a deterministic sequence $\bsigma''$ of length $2N$.
From Markov inequality, with probability at least
$1-(1+\epsilon/10)^2/(1+\epsilon) > 0$,
\begin{equation} \label{detcond}
 \frac{1}{N}\sum_{t=1}^N \sum_e p_e \text{T}(e,t | \bsigma'') \leq (1+\epsilon)
\EeEt[\bsigma]\ .
\end{equation}
We now assume $\bsigma''$ satisfies \eqref{detcond}.
Let $x_e$ be the maximum elapsed time between consecutive tests of $e$
in $\bsigma''$.  Suppose the tests defining this interval are at times
$i$ and $i+x_e$.
We have that
\begin{equation} \label{xecond1}
\sum_{t=i}^{i+x_e} \text{T}(e,t|\bsigma'') =
\sum_{j=1}^{x_e} j \geq x_e^2/2\ .
\end{equation}
On the other hand,
since $\bsigma''$ satisfies \eqref{detcond},
we must have
\begin{equation} \label{xecond2}
p_e \sum_{t=i}^{i+x_e} \text{T}(e,t|\bsigma'') \leq
\sum_{t=1}^N \sum_e p_e \text{T}(e,t|\bsigma'') \leq N(1+\epsilon)
\EeEt[\bsigma]\ .
\end{equation}
Combining \eqref{xecond1} and \eqref{xecond2}, we obtain that $x_e$ satisfies
$\frac{x_e^2 p_e}{2N} \leq (1+\epsilon) \EeEt[\bsigma]$.
That is,
\begin{equation} \label{xbound}
x_e \leq \sqrt{\frac{2(1+\epsilon) \EtEe[\bsigma] N}{p_e}}\ .
\end{equation}
Thus, from our choice of a large enough $N$,
\begin{equation} \label{maxxbound}
\max_e x_e \leq \sqrt{\frac{2(1+\epsilon) \EtEe[\bsigma] N}{\min_e
    p_e}} \leq 0.1 \epsilon N\ .
\end{equation}
Observe now that since $\bsigma''$  is of bounded length,
$\text{T}(e,t | \bsigma'')$ may not be finite for certain values of
$t$  (if a test
containing $e$ does not occur at time $t,\ldots,2N$).  However,
from \eqref{maxxbound}, $\text{T}(e,t | \bsigma'')$ is well defined for $ t
\leq 2N- \max_e x_e$, and therefore, it is well defined for $t \leq
N$.

  We now prune a prefix of the sequence
$\bsigma''$ to obtain a new sequence $\bsigma'$.
Let $t_0>0$ be the minimum such
that $\sum_e  p_e \text{T}(e,t_0 | \bsigma'')  \leq  10 \EtEe[\bsigma]$.
From \eqref{detcond}, it follows that $t_0\leq 0.2 N$.
We define $\bsigma'$ as steps $[t_0,2N]$ of $\bsigma''$.
The sequence $\bsigma'$ has
length $2N-t_0 \geq 1.8 N$.  We define  $N'=N-t_0+1 \geq 0.8 N$.
The sequence $\bsigma'$ has the following properties:

\begin{eqnarray}
&& \frac{1}{N'}\sum_{t=1}^{N'} \sum_e p_e \text{T}(e,t | \bsigma') \leq
(1+\epsilon) \EeEt[\bsigma] \label{detcondp}\\
 && \sum_e  p_e \text{T}(e,1 | \bsigma')  \leq 10 \EeEt[\bsigma] \label{t1bound}
\end{eqnarray}
Property \eqref{t1bound} is immediate from the choice of $t_0$.
 Property \eqref{detcondp} follows from
\eqref{detcond} and the choice of $t_0$:
$\sum_{t=1}^{N'} \sum_e p_e \text{T}(e,t | \bsigma') = \sum_{t=1}^{N}
\sum_e p_e \text{T}(e,t | \bsigma'') - \sum_{t=1}^{t_0-1}
\sum_e p_e \text{T}(e,t | \bsigma'') \leq N  (1+\epsilon) \EeEt[\bsigma]
- (t_0-1) 10 \EeEt[\bsigma] \leq (N-t_0+1) (1+\epsilon) \EeEt[\bsigma]
= N' (1+\epsilon) \EeEt[\bsigma]$.

 Lastly, we define the deterministic schedule $\bsigma_D$ which cycles through the
  first $N'$ steps of the sequence $\bsigma'$.  Since $\bsigma_D$ is
  cyclic, we have
\begin{equation} \label{sigmad}
\EeEt[\bsigma_D]=\frac{1}{N'} \sum_{t=1}^{N'} \sum_e p_e \text{T}(e,t |
\bsigma_D)\ .
\end{equation}
We therefore upper bound the latter by relating it to $\bsigma'$.

\begin{eqnarray}
\lefteqn{\sum_{t=1}^{N'} \sum_e p_e \text{T}(e,t |
\bsigma_D) \leq} \nonumber \\
&\leq&   \sum_{t=1}^{N'} \sum_e p_e  \text{T}(e,t |
\bsigma')  + \max_e x_e  \sum_e p_e  \text{T}(e,1|
\bsigma')  \nonumber \\
&\leq&   \sum_{t=1}^{N'} \sum_e p_e  \text{T}(e,t |
\bsigma')  +  10 \max_e x_e  \EtEe[\bsigma] \label{stt1} \\
&\leq&  N'(1+\epsilon) \EtEe[\bsigma]  + \epsilon N
\EtEe[\bsigma] \label{stt3} \\
&\leq& N' \EtEe[\bsigma] (1+\epsilon + \epsilon (N/N')) \nonumber \\
&\leq& N' \EtEe[\bsigma] (1 + 3\epsilon) \label{stt5}\ .
\end{eqnarray}
Inequality \eqref{stt1} follows from our  \eqref{t1bound}.
Inequality \eqref{stt3} follows from \eqref{detcondp} and
\eqref{maxxbound}.  Lastly, \eqref{stt5} uses $N' \geq 0.8 N$.
Combining \eqref{stt5} with \eqref{sigmad}, we obtain
$\EeEt[\bsigma_D] \leq (1+3\epsilon) \EtEe[\bsigma]$.
We conclude the proof of \eqref{EEPSD:claim} by choosing $\epsilon = \delta/3$.
}

 We now establish the inequalities \eqref{EMP:claim} and \eqref{WMP:claim}.
Given a deterministic schedule $\bsigma$, we define a cyclic
deterministic schedule $\bsigma_C$
which repeats a sequence $\bsigma_C'$ of some length $N$ and which
satisfies
$\forall e,\ \Mt[e|\bsigma_C] \leq \Mt[e|\bsigma]$.
Consider the schedule $\bsigma$ and associate a state with each time
$t$, which is a vector that for each $e$, contains the elapsed number
of steps since a test for $e$ was last invoked.  At $t=1$ we have the
all zeros vector.  When a test $\test$ is invoked, the entries for all
elements in $s$ are reset to $0$ and the entries of all other elements
are incremented by $1$.
From definition, the maximum value for entry $e$ is $\Mt[e]$.
Therefore, there is a finite number of states.
The segment $\bsigma_C'$ is any sequence between two times with the
same state.  It is easy to see that the cyclic schedule $\bsigma_C$
obtained
from $\bsigma_C'$ has the desired property.

\ignore{
Given a deterministic schedule and $\epsilon$, we construct a cyclic
deterministic schedule on which the \EMP\ and \MEP\ are within
$(1+\epsilon)$ of the original deterministic schedule.  We take
$N_\epsilon \gg \text{\EMP}/\epsilon$ and such that the \EMP\ and
\MEP\ on the first $h$ time steps for all
$h\geq N_\epsilon$ are at most $(1+\epsilon/2)$ times that of the original
sequence. We now
look at the ``state'', which is the last time since each edge was
tested and time for next probe.  There is a bound $L$ such that the
state at most time steps has all times at most $L$.
We can now find a subsequence of the schedule that starts and finishes
at the same state and that the objectives over it are at most
$(1+\epsilon)$ times those
of the original schedule.  We turn it into a cyclic schedule.
}

  We now take the deterministic  cyclic schedule $\bsigma_C$ and construct a stochastic schedule  $\boldsymbol{\sigma}'$
  by selecting a start point $i\in N$ uniformly at random, executing
  steps $[i,N]$ of $\bsigma_C'$, and then using $\bsigma_C$.
For each element $e$, we have
 $${\Mt} [e  | {\boldsymbol{\sigma}'}] \leq
 \frac{{\Mt}[e |{\boldsymbol{\sigma}_C} ]+1}{2} \leq \frac{{\Mt}[e
   |{\boldsymbol{\sigma}} ]+1}{2}  \ .$$
By combining,
\begin{eqnarray*}
\text{opt-\EMP}  &\leq& \sum_e p_e {\Mt} [e | {\boldsymbol{\sigma}'} ]
\leq \sum_e \frac{{\Mt}[e | {\boldsymbol{\sigma}}]+1}{2} \\
& = &  (\text{\EMP}[\bsigma]+1)/2\ .
\end{eqnarray*}
By taking the infimum of \EMP\ over all deterministic schedules we conclude
the claim.  The argument for \WMP\ is similar.
\end{proof}
} 

\subsection{Proof of Lemma \ref{lowerboundlink}} \label{lowerboundlink:sec}
\begin{proof}

We choose $\nElements$, $\nTofe \geq 1$,  and $\nTests \geq 2\nTofe $ such that
$\nElements={\nTests \choose \nTofe}$, and
construct an instance with $\nElements$ elements and $\nTests$ tests
such that each element is included in exactly $\nTofe$ test and every
subset of $\nTofe$ tests has exactly one common element.  We use a
uniform $\boldsymbol{p}$.

  The instance is symmetric and therefore  in the solution of the
 convex program \eqref{basicconvex} or \eqref{basicLP} all the $\nTests$ tests have equal
rates  $q=1/\nTests$.  The memoryless schedule with this
$\boldsymbol{q}$ optimizes both \SUMe\ and \MAXe\ for  $\boldsymbol{p}$.
For any element and any time, the expected detection time by a memoryless
schedule with $\boldsymbol{q}$ is
$\nTests/\nTofe$.  But for any particular
deterministic schedule and a particular time
 there is an element that requires
$\nTests-\nTofe$ probes (for any sequence of $\nTests-\nTofe$ tests there must be at
least $\nTofe$ tests not included and we take the element in the
intersection of these tests.
 This means that at any time, the  worst-case element detection time
is factor $\frac{\nTests-\nTofe}{\nTests/\nTofe} \geq \nTofe/2  =  \Theta(\ln \nElements / \ln \nTests)$ larger than the
memoryless optimum.

When fixing the number of tests $\nTests$, this is maximized (Sperner's
Theorem)  with $\nTofe=
\nTests/2$ and the \MAXe\ ratio is $\Theta(\nTests)$.
 When fixing the number of elements $\nElements$,  the maximum ratio is
$\arg\max_\nTofe \nElements={2\nTofe \choose \nTofe}$ and we obtain
$\nTofe = \Theta(\ln \nElements)$.

We use the same construction as in Lemma~\ref{lowerboundlink} and take a uniform $\boldsymbol{p}$ over elements.
Lastly, to show \EMP\ of $\Omega(\ln \nTofe)$ consider
a sequence of $\nTests$ probes.  The expectation over elements of the
number of probes that test the element, is at most  $\nTofe$. So at least
half the elements are probed at most $\nTofe$ times.  There are at least
$\nTests/2$ distinct tests.  The expected over $e$ maximum difference between
probes to an element $e$ over a sequence of $\nTests$ is $\Omega(\ln
\nTofe) \nTests/\nTofe$.  This is because every combination of $\ell$
distinct probes corresponds to an element, and thus, for the ``average''
element, the probes can be viewed as randomly placed, making the
expectation of the maximum interval a logarithmic factor larger than
the expectation.

 We now show how the instances can be realized on
 a network.
   We use $\nElements$  pairs of links.  Each pair includes a
 link which corresponds to an element in our instance and a
``dummy'' link.  The pairs are connected on a path of size
$\nElements$.
Each test is an end to end path which traverses one link from each
pair.
Each (real) link is covered by exactly $\nTofe$ paths
and every subset of $\nTofe$ paths has one common (real)  link.  The network is
a path of length $\nElements$ of pairs of parallel links, a real and a dummy
link. Real  links $e$ have  $p_e=1$ and the dummy link have $p_e=+\infty$.
(If we want to work with respect to some \SUMe\ optimum we take
 $p_e\equiv p$ (for some $p<1$) for real links and $p_e=0$ for dummy links).
Each path traverses one link from each pair  and includes $\nTofe$ real
links.
\end{proof}
The Lemma is tight in the sense that it is always possible to get a
schedule with D2M equal to $\nTests$ by cycling over a permutation of
the tests.

\end{document}